\pgfplotsset{compat=newest}
\newtheorem{definition}{Definition}
\newtheorem{proposition}{Proposition}
\newtheorem{problem}{Problem}
\newtheorem{theorem}{Theorem}
\newtheorem{corollary}{Corollary}
\newtheorem{lemma}{Lemma}
\newtheorem{remark}{Remark}
\newtheorem{assumption}{Assumption}
\newtheorem{example}{Example}
\newcommand{\cycleC}{C}
\newcommand{\cS}{X}
\newcommand{\cN}{R}
\newcommand{\assgn}{\alpha}
\newcommand{\abstr}{\kappa}
\newcommand{\munderbar}{\underline}
\newcommand{\moverbar}{\overline}
\DeclareMathOperator{\diam}{diam}
\DeclareMathOperator{\lcm}{lcm}
\DeclareMathOperator{\maxcnt}{maxcnt}
\newcommand{\off}{\texttt{off}}
\newcommand{\on}{\texttt{on}}
\newcommand{\ds}{q}			
\newcommand{\du}{\mu}		
\newcommand{\xc}{\mathbf{x}}			
\newcommand{\dc}{\mathbf{d}}			
\newcommand{\xd}{\xi}		
\renewcommand{\u}{\sigma}	
\begin{document}

\title{Control Synthesis for Permutation-Symmetric High-Dimensional Systems With Counting Constraints}

\author{Petter~Nilsson,~\IEEEmembership{Member,~IEEE,}
        and Necmiye~Ozay,~\IEEEmembership{Member,~IEEE.}
\thanks{P. Nilsson is with the Department of Mechanical and Civil Engineering at California Institute of Technology, Pasadena, CA USA. Email: pettni@caltech.edu. N. Ozay is with the Electrical Engineering and Computer Science Department at the University of Michigan, Ann Arbor, MI USA. Email: necmiye@umich.edu}
}

\maketitle

\begin{abstract}
General purpose correct-by-construction synthesis methods are limited
to systems with low dimensionality or simple specifications. In this 
work we consider highly symmetrical \emph{counting problems} and exploit 
the symmetry to synthesize provably correct controllers   
for systems with tens of thousands of states. The key ingredients of the solution
are an aggregate abstraction procedure for mildly heterogeneous systems, and a formulation of counting constraints as linear inequalities.
\end{abstract}


\section{Introduction}
\label{sec:introduction}

\IEEEPARstart{A}{utomated} controller synthesis for systems subject to an a priori given specification is an attractive means of controller design; such \emph{correct-by-construction} synthesis methods have attracted considerable interest in the past years \cite{Belta:2017wo}. However, all methods that are capable of solving general problems face fundamental limitations in terms of the system dimensionality and specification complexity that can be handled. A large body of work resorts to abstraction procedures \cite{tabuada2009verification}, where a finite representation of the system is obtained for which synthesis techniques for finite-state systems can be applied, but naive grid-based abstraction methods quickly run into the curse of dimensionality since, in general, the number of discrete states grows exponentially with system dimension. Other methods operate directly on a continuous state space but also ``blow up'' when the representation and treatment of high-dimensional sets become overwhelmingly expensive.

Ways to overcome these limitations in the abstraction step have been proposed. A large problem can sometimes be decomposed into several lower-dimensional problems \cite{Conte:2016in,sno_cdc2016}, where each subproblem is easier to solve. Such methods typically require the decomposition to be given, and that the coupling between the subproblems is relatively weak. Similarly, compositional abstraction procedures have been suggested as a method to construct abstractions with a size that does not scale exponentially with dimensionality \cite{Pola2016,Pola2017}; to capture the dynamics more accurately the abstract subsystems can be made partly overlapping \cite{Meyer2017}. These compositional approaches account for dynamic coupling between subsystems; as a contrast, we consider in this paper subsystems that are dynamically independent but coupled through a common specification. Other work has exploited monotonicity \cite{coogan2015efficient,Kim:2016:DSA:2883817.2883833} to alleviate the curse of dimensionality when constructing an abstraction, and it has been found that under certain stability conditions discretization can be circumvented completely by constructing an abstraction whose states are a finite-horizon input history \cite{Tarraf2014,Zamani2015}.  

In this work we explore a different approach to tackle high-dimensionality: exploitation of symmetries. The system we consider is the aggregate system consisting of $N$ almost-homogeneous switched subsystems, where $N$ can be very large. Due to homogeneity the overall system exhibits symmetries in the dynamics which can be leveraged to enable synthesis for a special type of symmetric constraints called \emph{counting constraints}. Like many existing methods in the literature, our solution approach is abstraction-based, but with the crucial difference that we construct only one abstraction that is used for all subsystems, thus roughly reducing the number of states in the abstraction from $\mathcal O(1/\eta^{Nn_x})$ in the na\"ive approach to $\mathcal O(1/\eta^{n_x})$, where $n_x$ is the dimension of a single subsystem and $\eta$ the precision. Our main contributions are not abstraction construction itself, but showing how a single abstraction can be utilized for control synthesis purposes.\footnote{We consider here the problem of synthesizing open-loop infinite-horizon trajectories that satisfy counting constraints for a given initial condition, rather than the more general problem of synthesizing feedback controllers that enforce specifications for sets of initial conditions.} We are able to account for mild heterogeneity among the subsystems by assuming certain stability conditions. Utilizing a single abstraction for multiple subsystems allows us to synthesize controlled trajectories for problems of very high dimension that exhibit these symmetries---we demonstrate the method on a $10,000$-dimensional switched system with $2^{10,000}$ modes. If there are several distinct classes of subsystems, one abstraction can be constructed for each class; and complexity increases linearly with the number of classes.

This work is motivated by the problem of scheduling of thermostatically controlled loads. Examples of TCLs include air conditioners, water heaters, refrigerators, etc., that operate around a temperature set point by switching between being $\on$ and $\off$. TCL owners are typically indifferent to small temperature perturbations and accept temperatures in a range around their desired set point; this range is called the \emph{dead band}. The idea behind TCL scheduling is that an electric utility company can leverage the implied flexibility—--which becomes meaningful for large collections of TCLs--—to shape aggregate demand on the grid. Previous work has resulted in control algorithms based on broadcasting a universal set-point temperature \cite{Ghaffari:2015fo}, Markov chain bin models \cite{koch2011modeling,esmaeil2015aggregation}, and priority stacks \cite{Hao:2015fp}, with the objective to track a power signal. However, they do not take into account any hard constraints on aggregate power consumption (i.e., the number of TCLs in mode $\on$), which is crucial to do to avoid overloading the grid or underutilizing the power generated by renewable resources.

The fundamental problem in TCL scheduling is to simultaneously meet local safety constraints (i.e., maintain each TCL in its deadband), and global aggregate constraints. These constraints are special instances of \emph{counting constraints}, which is the type of constraints we consider in this work. The TCL scheduling example will be used throughout the paper to motivate our discussion:
\begin{example}
	Let $\{ \xc_n \}_{n \in [N]}$ be the states of a family of $N$ switched systems of the following form:
	\begin{equation}
		\frac{\mathrm{d}}{\mathrm{d}t} \xc_n(t) = f_{\u_n(t)} \left( \xc_n(t), \dc_n(t) \right), \quad \u_n(t) \in \{ \on, \off \},
	\end{equation}
	where $f_\on$ and $f_\off$ are functions $\mathbb{R} \times \mathcal D \rightarrow \mathbb{R}$ for some bounded disturbance set $\mathcal D$. For a desired dead band $[\munderbar a, \moverbar a]$, and bounds $[\munderbar K, \moverbar K]$ on the aggregate number of TCLs that are in mode $\on$, synthesize switching protocols that enforce
	\begin{align}
		& \label{eq:tcl_1} \text{$\xc_n(t) \in [\munderbar a, \moverbar a]$ for all $n \in [N]$ and all $t \in \mathbb{R}^+$}, \\
		& \label{eq:tcl_2} \begin{aligned}
			& \text{For all $t \in \mathbb{R}^+$, \textbf{at least} $\munderbar K$ and \textbf{at most} $\moverbar K$} \\ 
			& \text{ of the TCLs are in mode $\on$.}
		\end{aligned}
	\end{align}
\end{example}

Our approach can be seen as a marriage between ``bin'' abstraction previously used in the TCL literature, and schedule search via integer programming previously considered e.g. for intersection controllers \cite{colombo2015least}---generalized to arbitrary incrementally stable systems and an infinite horizon.

A preliminary version of this work appeared in \cite{no_hscc2016} where the \emph{mode}-counting problem was introduced for homogeneous collections of systems. In this paper we generalize the counting problem to encompass both mode- and state-counting constraints; make the construction of the abstractions robust to model deviations---thus allowing mild heterogeneity; improve the rounding scheme that enables search for solutions via (non-integer) linear programming; and sharpen analytical results. A different extension of counting problems that allows for a richer class of specifications---temporal logic formulas defined over counting propositions---was reported in \cite{sno_2016}.

\vspace{-0.4cm}
\subsection{Overview and Paper Structure}


We give a high-level overview of the proposed solution approach to the problem of synthesizing controllers for large collections of systems with counting constraints, namely the counting problem as formally defined in Section \ref{sec:mode_counting_problem}. By exploiting the fact that the collection is almost homogeneous, we first construct a single finite abstraction that approximates all the systems in the collection (Section \ref{sec:cont_to_discrete}). By means of this abstraction, we transform a continuous-state counting problem into a discrete one. We then treat this finite abstraction as a histogram where each abstract state corresponds to a bin of the histogram and counts the number of systems whose continuous-state is in some equivalence class associated with that abstract state (Section \ref{sec:solving_discrete}). As the continuous states evolve in time according to the chosen control inputs, the counts on this histogram also evolve according to certain aggregate dynamics. In particular, the evolution of the histogram can be represented as a constrained linear system over an integer lattice, and the counting constraints become linear constraints on the states and input of the aggregate system (Section \ref{sub:aggregate_dynamics_as_a_linear_system}). This representation allows us to reduce the discrete counting problem to the feasibility of an integer linear program (Section \ref{sub:solution_to_the_graph_mode_counting_problem}). Through this integer linear program, we search for inputs to steer the systems to periodic trajectories, captured by the cycles of the abstraction graph, so that the satisfaction of the counting constraints can be guaranteed indefinitely. We then analyze this integer linear program and a relaxation of the integrality constraints, and their relation to the (in)-feasibility of the original problem (Section \ref{sec:analysis}). An extension of the theory to the multi-class setting is briefly presented in Section \ref{sec:multiclass}. To highlight the applicability of the method, one numerical example and one application-motivated example are provided in Section \ref{sec:examples} before the paper is summarized in Section \ref{sec:conclusion}. In an effort to improve readability, technical proofs are deferred to the appendix.

\vspace{-0.3cm}
\section{Notation and Preliminaries}
\label{sec:preliminaries}

We introduce some notation that is used throughout the paper. The set of real numbers is denoted $\mathbb{R}$, the set of positive reals $\mathbb{R}_+$, and the set of non-negative integers $\mathbb{N}$. To express a finite set of positive integers, we write $[N] = \{ 0, \ldots, N-1 \}$. The indicator function of a set $X$ is denoted $\mathds{1}_{X}(x)$ and is equal to 1 if $x \in X$ and to $0$ otherwise. The identity function on a set $A$ is written as $\text{Id}_A$. For two sets $X$ and $Y$, we write the Minkowski sum as $X \oplus Y = \{ x + y : x \in X, y \in Y \}$, and the Minkowski difference $X \ominus Y = \{ x : \{x\} \oplus Y \subset X \}$. Set complement is denoted $X^C$.

To denote the floor and ceiling of a number we write $\left \lfloor \cdot \right \rfloor$ and $\left \lceil \cdot \right \rceil$. We use the same notation for vectors, where the operations are performed component-wise. We write the infinity norm as $\| \cdot \|_\infty$ and the $1$-norm as $\| \cdot \|_1$. The $\epsilon$-ball in $p-$norm centered at the point $x$ is denoted as $\mathcal B_p(x, \epsilon) = \{ y : \| y-x \|_p \leq \epsilon \} $. The vector of all $1$'s is written as $\mathbf{1}$. For a function $f : X \rightarrow \mathbb{R}$ with a finite domain we abuse notation and write $\| f \|_1 = \sum_{x \in X} |f(x)|$ for the ``1-norm'' of the finite image set. The function that is constantly equal to $0$ is written $\mathbf{0}$.

Given an ODE $\frac{\mathrm{d}}{\mathrm{d}t} \xc(t) = f_\du(\xc(t),\dc(t))$, where $\dc(t)$ is an uncontrolled input, the corresponding flow operator is denoted $\phi_\du(t,x,\dc)$ and has the properties that $\phi_\du(0,x,\dc) = x$, $\frac{\mathrm{d}}{\mathrm{d}t} \phi_\du(t,x,\dc) = f_\du(\phi_\du(t,x,\dc), \dc(t))$. A $\mathcal {KL}$-function $\beta : \mathbb{R}_+ \times \mathbb{R}_+ \rightarrow \mathbb{R}_+$ is characterized by being strictly increasing from 0 in the first argument and decreasingly converging to 0 in the second argument.

\vspace{-0.4cm}
\subsection{Transition Systems and Simulations}

We employ the following definition for a \emph{transition system}, which captures systems with both continuous and discrete state spaces.
\begin{definition}
\label{def:trans_syst}
	A \textbf{transition system} (TS) is a tuple $\Sigma = (\mathcal Q, \mathcal U, \longrightarrow, Y)$, where $\mathcal Q$ is a set of states, $\mathcal U$ a set of actions (inputs), $\longrightarrow \subset \mathcal Q \times \mathcal U \times \mathcal Q$ a transition relation, and $Y: \mathcal Q \longrightarrow \mathcal Y$ an output function. We say that $\Sigma$ is a \textbf{deterministic finite transition system} (DFTS) if i) transitions are deterministic, i.e., if $(\ds,\du,\ds') \in \mathcal \longrightarrow$ and $(\ds,\du,\ds'') \in \mathcal \longrightarrow$, then $\ds' = \ds''$, and ii) $\mathcal Q$ is finite.
\end{definition}
For simplicity, existence of a transition $(\ds, \du, \ds') \in \longrightarrow$ is written $\ds \overset{\du}{\longrightarrow} \ds'$. By a trajectory of a transition system, we mean a sequence $\xc(0)\xc(1)\xc(2)\ldots$ of states in $\mathcal Q$ with the property that $\xc(s) \overset{\u(s)}{\longrightarrow} \xc(s+1)$ for some $\u(s) \in \mathcal U$, for all $s \in \mathbb{N}$.

In this paper we restrict attention to finite input sets $\mathcal U$; such transition systems are obtained when a continuous-time switched system is time-discretized, and in this case the inputs are the modes of the switched system. A \emph{switching protocol} is then a function $\pi$ that generates control inputs from information about the current state\footnote{A switching protocol may also have internal memory states.}. A trajectory generated by a switching protocol $\pi$ is a trajectory where $\u(s)$ is generated by $\pi$.

For two systems with the same input space $\mathcal U$ and normed output space $\mathcal Y$, we adopt the following notion of system bisimilarity \cite{Girard:2007we}.
\begin{definition}
\label{def:bisimilar}
	Two transition systems $(\mathcal Q_1, \mathcal U, \longrightarrow_1, Y_1)$ and $(\mathcal Q_2, \mathcal U, \longrightarrow_2, Y_2)$ are \textbf{$\epsilon$-approximately bisimilar} if there exists a relation $R \subset \mathcal Q_1 \times \mathcal Q_2$ such that the sets $R(\ds_1) = \{ \ds_2 : (\ds_1, \ds_2) \in R \}$ and $R^{-1}(\ds_2) = \{ \ds_1 : (\ds_1, \ds_2) \in R \}$ are non-empty for all $\ds_1, \ds_2$, and such that for all $(\ds_1, \ds_2) \in R$,
	\begin{enumerate}
		\item $\|  Y_1(\ds_1) - Y_2(\ds_2)\|_\infty \leq \epsilon$,
		\item if $\ds_1 \overset{\du}{\longrightarrow_1} \ds_1'$, there exists $\ds_2 \overset{\du}{\longrightarrow_2} \ds_2'$ s.t $(\ds_1' ,\ds_2') \in R$,
		\item if $\ds_2 \overset{\du}{\longrightarrow_2} \ds_2'$, there exists $\ds_1 \overset{\du}{\longrightarrow_1} \ds_1'$ s.t $(\ds_1', \ds_2') \in R$.	
	\end{enumerate}
\end{definition}

\subsection{Graphs}

The following standard notions are used for a directed graph $G = (\mathcal Q,E)$ with node set $\mathcal Q$ and edge set $E \subset \mathcal Q \times \mathcal Q$. A \emph{path} in $G$ is a list of edges $(\ds_0, \ds_1) (\ds_1, \ds_2) \ldots (\ds_{J-1}, \ds_J)$. The distance between two nodes $\ds_0$ and $\ds_J$ is the length of the shortest path connecting the two. The graph diameter $\diam(G)$ is the longest distance between two nodes in the graph. If the first and last nodes in a path are equal, i.e. $\ds_J = \ds_0$, the path is a \emph{cycle}. A cycle is \emph{simple} if it visits every node at most one time. For a subset of nodes $D \subset \mathcal Q$ (or the corresponding subgraph, which we use interchangeably), it is said to be \emph{strongly connected} if for each node pair $(\ds,\ds') \in D$, there exists a path from $\ds$ to $\ds'$. Any directed graph can be decomposed into strongly connected components. The \emph{period} of a subgraph $D$ is the greatest common divisor of the lengths of all cycles in $D$. A subgraph $D$ is called \emph{aperiodic} if it has period one.

\section{The Counting Problem}
\label{sec:mode_counting_problem}

We first define the concept of a counting constraint.

\begin{definition}
	A \textbf{counting constraint} $(\cS, \cN)$ for a collection of $N$ subsystems numbered from $0$ to $N-1$, with the same state space $\mathcal Q$ and input space $\mathcal U$, is a set $\cS = \cS^{\mathcal Q} \times \cS^{\mathcal U}$ with $\cS^{\mathcal Q} \subset \mathcal Q$ and $\cS^{\mathcal U} \subset \mathcal U$, and a bound $\cN$. We say that the counting constraint is \textbf{satisfied} by state-input pairs $(\ds_n, \du_n)$, $n \in [N]$, if the number of pairs that fall in $X$ is less than or equal to $\cN$:
	\begin{equation*}
		\sum_{n \in [N]} \mathds{1}_\cS \left( \ds_n, \du_n \right) \leq \cN.
	\end{equation*}
\end{definition}

\begin{remark}
	For simplicity of exposition we restrict attention to counting sets in the form of a product of a set $\cS^{\mathcal Q}$ in state space and a set $\cS^{\mathcal U}$ in mode space. However, our solution method in Section \ref{sec:solving_discrete} can handle more general subsets of $\mathcal Q \times \mathcal U$.
\end{remark}

This notion of constraint extends that in \cite{no_hscc2016} by allowing $\cS$ to be a subset of $\mathcal Q \times \mathcal U$, instead of restricting it to a singleton subset of $\mathcal U$. This generalization permits counting constraints that include the state space, as opposed to only counting the number of systems using each input (i.e., \emph{mode}-counting). 

\begin{example}
  The TCL scheduling constraints \eqref{eq:tcl_1} and \eqref{eq:tcl_2} fit into this class. Let the output space be equal to the state space, i.e. $\mathcal Y = \mathbb{R}$ and $Y(x) = x$. Then the following counting constraints specify that the number of TCLs that are in mode $\on$ at time $t$ should be in the interval $[\munderbar K, \moverbar K]$:
  \begin{equation}
  \label{eq:tcl_mc}
  \begin{aligned}
    & \sum_{n \in [N]} \mathds{1}_{\mathbb{R} \times \{ \on \}} \left( \xc_n(t), \u_n(t) \right) \leq \moverbar K, \\
    & \sum_{n \in [N]} \mathds{1}_{\mathbb{R} \times \{ \off \}} \left( \xc_n(t), \u_n(t) \right) \leq N - \munderbar K.
  \end{aligned}
  \end{equation}
  In addition, the dead band constraints \eqref{eq:tcl_1} can be imposed by the counting constraint
  \begin{equation}
  \label{eq:tcl_sc}
    \sum_{n \in [N]} \mathds{1}_{[\munderbar a, \moverbar a]^C \times \{ \on, \off \}} \left( \xc_n(t), \u_n(t) \right) \leq 0.
  \end{equation}
\end{example}

The goal of this work is to synthesize controllers that satisfy counting constraints for a collection of switched systems with states $\xc_i(t)$, $i \in [N]$, governed by dynamics
\begin{equation}
\label{eq:cont_dynamics}
  \frac{\mathrm{d}}{\mathrm{d}t} \xc_i(t) = f_{\u_i(t)} (\xc_i(t), \dc_i(t)), \;  \u_i : \mathbb{R} \rightarrow \mathcal U,
\end{equation}
for $\xc_i(t) \in \mathbb{R}^{n_x}$ and a disturbance signal $\dc_i(t)$ assumed to take values in a set $\mathcal D$. Although the system description is identical for all $N$ systems in the family, the disturbance can model mild heterogeneity such as modeling inaccuracies and parameter variations (e.g., $\dc_i(t) = \dc_i)$ across subsystems. 

We assume that the vector fields and the disturbance signals satisfy standard assumptions for existence and uniqueness of solutions. In addition, we assume that the vector fields exhibit a certain form of stability \cite{Angeli:2002eb}.
\begin{assumption}
\label{ass:iss}
  For each $\du \in \mathcal U$, the vector field $f_\du(x, d)$ is $C^1$ in $x$ and continuous in $d$. Furthermore, the nominal system $\frac{\mathrm{d}}{\mathrm{d}t} \xc = f_\du(\xc, \mathbf{0})$ is forward-complete and incrementally stable. That is, there exists a $\mathcal {KL}$-function $\beta_\du$ such that
\begin{equation}
\label{eq:kl_func_dyn}
  \| \phi_\du(t,x,\mathbf{0}) - \phi_\du(t,y,\mathbf{0}) \|_\infty \leq \beta_\du \left( \| x - y \|_\infty, t  \right).
\end{equation}
\end{assumption}
Note that the vector field $f_\du(x, d)$ being $C^1$ in $x$ implies that $f_\du(x, d)$ is Lipschitz in $x$ (with some Lipschitz constant $K_\du$) when the states are constrained to a compact set.  

Furthermore, we assume that the disturbance signal is continuous, and that its effect is bounded in an absolute sense.
\begin{assumption}
\label{ass:disturbance}
The disturbance signal $\dc : \mathbb{R} \rightarrow \mathcal D$ is continuous. Furthermore for all $\du \in \mathcal U$, compared to the nominal vector fields without disturbance, the effect of the disturbance is less than $\bar \delta_\du$:
\begin{equation*}
  \| f_\du (x, d) - f_\du(x,0) \|_\infty \leq \bar \delta_\du, \quad \forall d \in \mathcal D.
\end{equation*}
\end{assumption}

To leverage the notion of bisimilarity we consider the time-sampled counterpart of \eqref{eq:cont_dynamics}, and define the problem we want to solve in terms of the resulting transition system rather than the ODE. An ODE solution where state counting constraints are violated in between samplings (but satisfied at sample instants) is still a valid solution to a time-sampled counting problem. If such inter-sample violations are unacceptable, the counting sets can be contracted by some margin determined by the dynamics to ensure satisfaction for all $t \in \mathbb{R}_+$ \cite{liu2016finite}.

We proceed with a problem definition for transition systems. Consider a family of $N$ identical subsystems with dynamics given by a transition system and some initial conditions $\{ \xc_n(0) \}_{n \in [N]}$. The problem we seek to solve is the following:
\begin{problem}
\label{prob:cont_state}
	Consider $N$ subsystems, all governed by the same transition system $\Sigma_{TS} = (\mathcal Q, \mathcal U, \longrightarrow, Y)$, where $\mathcal U$ is finite, and assume that initial states $\{\xc_n(0)\}_{n \in [N]}$ and $L$ counting constraints $\{\cS_l, \cN_l\}_{l \in [L]}$ are given. Synthesize individual switching protocols $\{ \pi_n \}_{n \in [N]}$ such that the generated actions $\u_n(0) \u_n(1) \u_n(2) \ldots$ and trajectories $\xc_n(0) \xc_n(1) \xc_n(2) \ldots$ for $n \in [N]$ satisfy the counting constraints
	\begin{equation}
	\label{eq:prob_counting_constr}
		\sum_{n \in [N]} \mathds{1}_{\cS_l} \left( \xc_n(s), \u_n(s) \right) \leq \cN_l, \quad \forall s \in \mathbb{N}, \; \forall l \in [L].
	\end{equation}
	Let $( N, \Sigma_{TS}, \{ \xc_n(0) \}_{n \in [N]} ,\{\cS_l, \cN_l\}_{l \in [L]})$ denote an instance of this problem.
\end{problem}

Na\"ively, if $\mathcal Q$ is a $n_\ds$-dimensional space, the overall system can be viewed as a $N \times n_\ds$-dimensional hybrid system with $|\mathcal U|^N$ modes and is thus beyond the reach of standard synthesis techniques for large $N$. However, from a specification point of view it does not matter which subsystems that contribute to the summation in \eqref{eq:prob_counting_constr}, so Problem \ref{prob:cont_state} exhibits subsystem permutation symmetry in its specification. In addition, there is also subsystem permutation symmetry in the dynamics since all subsystem states are governed by the same transition system. As an implication there is no need to keep track of identities of individual subsystems---they are all equivalent from both a dynamics- as well as from a specification-point of view. This symmetry (i.e., permutation invariance) is the key to solving large-scale counting problems.


In the following, we construct the time-sampled analogue of \eqref{eq:cont_dynamics} in order to leverage the notion of bisimilarity. We also restrict the domain to a bounded set $\mathcal X \subset \mathbb{R}^{n_x}$ and let the $\tau$-sampled counterpart of \eqref{eq:cont_dynamics} confined to $\mathcal X$ be
\begin{equation}
\label{eq:cont_sampled_dynamics}
	\Sigma_\tau = (\mathcal X, \mathcal U, \underset{\tau}{\longrightarrow}, \text{Id}_{\mathbb{R}^d}),
\end{equation}
where $x \overset{\du}{\underset{\tau}{\longrightarrow}} x'$ if and only if there exists $\dc : [0,\tau] \rightarrow \mathcal D$ such that $x' = \phi_\du(\tau, x, \dc)$. The instance 
\begin{equation}
	\label{eq:continuous_instance}
	( N, \Sigma_\tau, \{ \xc_n(0) \}_{n \in [N]} ,\{\cS_l, \cN_l\}_{l \in [L]})
\end{equation}
of Problem \ref{prob:cont_state} is referred to as the \emph{continuous-state counting problem}. Since $\Sigma_\tau$ is a non-deterministic, infinite transition system, this is a difficult problem to solve. In the following we construct a DFTS $\Sigma_{\tau, \eta}$ that is approximately bisimilar to $\Sigma_\tau$, and relate the corresponding problem instances.

\section{Abstracting the Continuous-State Counting Problem}
\label{sec:cont_to_discrete}

We first recall the abstraction procedure from \cite{Pola:2009bg} to create a finite-state model of \eqref{eq:cont_dynamics}. Under the assumptions outlined above, we establish approximate bisimilarity between the continuous-state system and its finite-state abstraction, which enables us to state results in Section \ref{sub:solvability_of_mode_counting_problem_on_abstraction} relating the solvability of the corresponding counting problems.

\subsection{Abstraction Procedure}
\label{sub:abstraction_procedure}

Consider a system of the form \eqref{eq:cont_dynamics}. For a state discretization parameter $\eta > 0$ we define an abstraction function $\abstr_\eta : \mathcal X \mapsto \mathcal X$ as $\abstr_\eta(x) = \eta \cdot \left\lfloor x / \eta \right \rfloor + (\eta / 2) \mathbf{1}$. This function is constant on hyperboxes of side $\eta$, and its image of the compact set $\mathcal X$ is finite. The abstraction function defines the $(\tau,\eta)$-discretized counterpart of \eqref{eq:cont_dynamics} as the transition system
\begin{equation*}
	\Sigma_{\tau, \eta} =  \left( \abstr_\eta(\mathcal X), \mathcal U, \underset{\tau, \eta}{\longrightarrow},\text{Id}_{\mathbb{R}^d} \right),
\end{equation*}
where $\ds \overset{\du}{\underset{\tau, \eta}{\longrightarrow} }  \ds'$ if and only if $\abstr_\eta (\phi_\du(\tau, \ds, \mathbf{0})) = \ds'$.

The domain $\mathcal X$ is partitioned into uniform boxes of size $\eta$ that represent discrete states, and transitions are established by simulating each mode, without disturbance, during a time $\tau$, starting at the center of the boxes. As opposed to $\Sigma_\tau$, the resulting transition system $\Sigma_{\tau, \eta}$ is finite and deterministic---for each state $\ds$ and action $\du$ there exists (at most) one successor state $\ds'$. This makes the resulting abstract counting problem easier to solve.
Similarly to a result from \cite{Pola:2009bg} we can now show that an abstraction constructed in this way is bisimilar to the time-sampled system \eqref{eq:cont_sampled_dynamics} if a certain inequality holds.
\begin{proposition}
\label{prop:bisimilarity}
	Assume that Assumptions \ref{ass:iss} and \ref{ass:disturbance} hold, i.e., for all $\du \in \mathcal U$ there are Lipschitz constants $K_\du$, $\mathcal{KL}$-functions $\beta_\du$, and disturbance effect bounds $\moverbar \delta_\du$ associated with the modes of \eqref{eq:cont_dynamics}. Then, if for all $\du \in \mathcal U$,
	\begin{equation}
	\label{eq:bisim_ineq}
		\beta_\du(\epsilon, \tau) + \frac{\moverbar \delta_\du}{K_\du} \left( e^{K_\du \tau}-1 \right) + \frac{\eta}{2} \leq \epsilon,
	\end{equation}
	the $(\tau,\eta)$-discretized abstraction $\Sigma_{\tau, \eta}$ and the $\tau$-sampled system $\Sigma_\tau$ are $\epsilon$-approximately bisimilar.
\end{proposition}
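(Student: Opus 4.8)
The plan is to exhibit an explicit relation and verify the three conditions of Definition \ref{def:bisimilar}. Since both transition systems carry the identity output map, the natural candidate is
\begin{equation*}
	R = \left\{ (x, \ds) \in \mathcal X \times \abstr_\eta(\mathcal X) : \| x - \ds \|_\infty \leq \epsilon \right\},
\end{equation*}
which makes the output condition (1) hold by construction. For non-emptiness, given any $x \in \mathcal X$ the abstract point $\abstr_\eta(x)$ satisfies $\|x - \abstr_\eta(x)\|_\infty \leq \eta/2 \leq \epsilon$, the last inequality following from \eqref{eq:bisim_ineq} since the remaining terms are nonnegative, so $R(x) \neq \emptyset$; and for any $\ds \in \abstr_\eta(\mathcal X)$ the box center $\ds$ itself lies in $\mathcal X$ and in $R^{-1}(\ds)$, so $R^{-1}(\ds) \neq \emptyset$.

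The heart of the argument is the two-sided simulation requirement, and both directions reduce to a single estimate. I would fix $(x,\ds) \in R$ and a mode $\du$. The abstract successor is the unique $\ds' = \abstr_\eta(\phi_\du(\tau, \ds, \mathbf{0}))$, whereas a concrete successor has the form $x' = \phi_\du(\tau, x, \dc)$ for some admissible disturbance $\dc$. The plan is to bound $\|x' - \ds'\|_\infty$ by a triangle inequality separating three error sources,
\begin{equation*}
\begin{aligned}
	\| x' - \ds' \|_\infty \leq \; & \| \phi_\du(\tau, x, \dc) - \phi_\du(\tau, x, \mathbf{0}) \|_\infty \\
	& + \| \phi_\du(\tau, x, \mathbf{0}) - \phi_\du(\tau, \ds, \mathbf{0}) \|_\infty \\
	& + \| \phi_\du(\tau, \ds, \mathbf{0}) - \ds' \|_\infty,
\end{aligned}
\end{equation*}
where the three terms are, respectively, a disturbance term, an incremental-stability term, and a quantization term. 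The incremental-stability term is at most $\beta_\du(\|x - \ds\|_\infty, \tau) \leq \beta_\du(\epsilon, \tau)$ by \eqref{eq:kl_func_dyn} and monotonicity of $\beta_\du$ in its first argument, and the quantization term is at most $\eta/2$ by definition of $\abstr_\eta$. Adding the three bounds and invoking \eqref{eq:bisim_ineq} yields $\|x' - \ds'\|_\infty \leq \epsilon$, i.e. $(x',\ds') \in R$. For condition (2) this shows the deterministic abstract successor matches the given concrete transition; for condition (3) one is free to choose the concrete disturbance, so taking $\dc \equiv \mathbf{0}$ makes the disturbance term vanish and the same bound applies a fortiori.

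The main obstacle, and the only genuinely analytic step, is bounding the disturbance term, which is where Assumption \ref{ass:disturbance} and the Lipschitz constant $K_\du$ enter. Writing $\zeta(t) = \phi_\du(t,x,\dc)$ and $\xi(t) = \phi_\du(t,x,\mathbf{0})$, both starting at $x$, I would add and subtract $f_\du(\zeta,\mathbf{0})$ to split $\tfrac{\mathrm d}{\mathrm dt}(\zeta - \xi)$ into a disturbance mismatch bounded by $\moverbar\delta_\du$ and a state mismatch bounded by $K_\du \|\zeta - \xi\|_\infty$. The upper Dini derivative of $v(t) = \|\zeta(t) - \xi(t)\|_\infty$ then satisfies $D^+ v \leq K_\du v + \moverbar\delta_\du$ with $v(0) = 0$, and the comparison (Gr\"onwall) lemma gives $v(\tau) \leq \tfrac{\moverbar\delta_\du}{K_\du}(e^{K_\du \tau} - 1)$, precisely the second term in \eqref{eq:bisim_ineq}. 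I would note that the Lipschitz estimate is only valid while trajectories stay in the compact set $\mathcal X$; this is ensured by forward-completeness together with the confinement of $\Sigma_\tau$ and $\Sigma_{\tau,\eta}$ to $\mathcal X$, and is the point at which boundedness of the domain is used.
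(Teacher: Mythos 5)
Your proposal is correct and follows essentially the same route as the paper's proof: the same relation $\|x - \ds\|_\infty \leq \epsilon$, the same three-term triangle-inequality decomposition into disturbance, incremental-stability, and quantization errors, each bounded by the corresponding term of \eqref{eq:bisim_ineq}. The only difference is cosmetic---you derive the Gr\"onwall bound on the disturbance term explicitly where the paper cites a standard reference, and you handle condition (3) by picking the zero disturbance where the paper notes that the single estimate covers all admissible disturbances at once.
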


The trajectories of $\epsilon$-approximately bisimilar systems remain within distance $\epsilon$ of each other \cite{Girard:2007we}, and the additional term $\frac{\moverbar \delta_\du}{K_\du} \left( e^{K_\du \tau}-1 \right)$ in \eqref{eq:bisim_ineq} introduces a robustness margin that allows the same abstraction to be bisimilar to a family of mildly heterogeneous subsystems. Using these facts we can establish relations between existence of solutions of the counting problem in the continuous-state and discrete-state settings.

\subsection{Relations Between the Continuous-State and Discrete-State Counting Problems}
\label{sub:solvability_of_mode_counting_problem_on_abstraction}

If $\epsilon$-approximate bisimilarity holds, trajectories of $\Sigma_{\tau, \eta}$ and $\Sigma_\tau$ are guaranteed to remain $\epsilon$-close when initial states are $\epsilon$-close and corresponding actions are chosen. Therefore, solvability of the counting problem is equivalent for the two up to an approximation margin $\epsilon$. To make precise statements we introduce functions $\mathcal G^{\pm \epsilon}$ that expand (resp. contract) a counting set $\cS = \cS^{\mathcal X} \times \cS^{\mathcal U}$ in $\mathcal X$-space before quantization:
\begin{equation*}
\begin{aligned}
	\mathcal G^{+\epsilon}(\cS^{\mathcal X} \times \cS^{\mathcal U}) = \abstr_\eta \left( \cS^{\mathcal X} \oplus \mathcal B_\infty(0, \epsilon) \right) \times \cS^{\mathcal U}, \\
	\mathcal G^{-\epsilon}(\cS^{\mathcal X} \times \cS^{\mathcal U}) = \abstr_\eta \left( \cS^{\mathcal X} \ominus \mathcal B_\infty(0, \epsilon) \right) \times \cS^{\mathcal U}.
\end{aligned}
\end{equation*}
Now we can describe how solutions of the counting problem can be mapped between the time-sampled system $\Sigma_\tau$ and the time-sampled and state-quantized system $\Sigma_{\tau, \eta}$.

\begin{theorem}
\label{thm:abs_to_cont}
	Let $\Sigma_{\tau}$ and $\Sigma_{\tau, \eta}$ be the time-sampled and time-sampled and state-quantized systems constructed from a system on the form \eqref{eq:cont_dynamics}, such that $\Sigma_\tau$ and $\Sigma_{\tau, \eta}$ are $\epsilon$-approximately bisimilar. Let $\abstr_\eta$ be the abstraction function for $\Sigma_{\tau, \eta}$.

	If there exists a solution to the instance 
	\begin{equation}
	\label{eq:thm_graph_inst}
	\left(N, \Sigma_{\tau, \eta}, \{ \abstr_\eta(\xc_n(0)) \}_{n \in [N]}, \{  \mathcal G^{+\epsilon}(\cS_l) , \cN_l \}_{l \in [L]} \right)
	\end{equation}
	of Problem \ref{prob:cont_state}, then there exists a solution to the instance 
	\begin{equation}
	\label{eq:thm_cont_inst}
	 (N, \Sigma_\tau, \{ \xc_n(0) \}_{n \in [N]}, \{ \cS_l, \cN_l \}_{l \in [L]} ).
	\end{equation} 
\end{theorem}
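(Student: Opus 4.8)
The plan is to transfer the abstract solution of instance \eqref{eq:thm_graph_inst} back to the continuous system $\Sigma_\tau$ by matching transitions through the $\epsilon$-approximate bisimulation relation, and then to verify that the $+\epsilon$-expansion $\mathcal G^{+\epsilon}$ of the counting sets exactly absorbs the resulting $\epsilon$-discrepancy so that the original constraints hold. First I would fix the bisimulation relation $R \subset \mathcal X \times \abstr_\eta(\mathcal X)$ furnished by Proposition \ref{prop:bisimilarity}, orienting it (using symmetry of bisimilarity) so that $\Sigma_\tau$ plays the role of the first system and $\Sigma_{\tau, \eta}$ the second. Then $(\xc, \ds) \in R$ implies $\| \xc - \ds \|_\infty \le \epsilon$ by condition 1 of Definition \ref{def:bisimilar} (the outputs being the identity), and condition 3 lets every abstract transition be mirrored by a continuous one that stays in $R$. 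Because all terms on the left of \eqref{eq:bisim_ineq} are nonnegative, that inequality forces $\eta/2 \le \epsilon$, so each initial pair satisfies $\| \xc_n(0) - \abstr_\eta(\xc_n(0)) \|_\infty \le \eta/2 \le \epsilon$ and hence $(\xc_n(0), \abstr_\eta(\xc_n(0))) \in R$.

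Let $\ds_n(0) \ds_n(1) \ldots$ and $\u_n(0) \u_n(1) \ldots$ denote the abstract trajectories and actions solving \eqref{eq:thm_graph_inst}, with $\ds_n(0) = \abstr_\eta(\xc_n(0))$. I would build continuous trajectories by induction on $s$: given $(\xc_n(s), \ds_n(s)) \in R$ and the abstract step $\ds_n(s) \overset{\u_n(s)}{\longrightarrow} \ds_n(s+1)$, condition 3 of Definition \ref{def:bisimilar} supplies a continuous step $\xc_n(s) \overset{\u_n(s)}{\longrightarrow} \xc_n(s+1)$ with $(\xc_n(s+1), \ds_n(s+1)) \in R$. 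This yields, for each $n$, a trajectory of $\Sigma_\tau$ using the same action sequence $\u_n(\cdot)$ and satisfying $\| \xc_n(s) - \ds_n(s) \|_\infty \le \epsilon$ for all $s$; the open-loop protocols $\pi_n$ that replay $\u_n(\cdot)$ are the synthesized controllers.

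The crux is the counting step, where $\mathcal G^{+\epsilon}$ must compensate exactly for this closeness error. I would establish the pointwise domination
\begin{equation*}
  \mathds{1}_{\cS_l}(\xc_n(s), \u_n(s)) \le \mathds{1}_{\mathcal G^{+\epsilon}(\cS_l)}(\ds_n(s), \u_n(s)).
\end{equation*}
The mode components agree, since $\mathcal G^{+\epsilon}$ leaves $\cS_l^{\mathcal U}$ unchanged. For the state component, suppose $\xc_n(s) \in \cS_l^{\mathcal X}$; then $\ds_n(s) \in \{ \xc_n(s) \} \oplus \mathcal B_\infty(0, \epsilon) \subset \cS_l^{\mathcal X} \oplus \mathcal B_\infty(0, \epsilon)$, and since every abstract state is a fixed point of $\abstr_\eta$ we have $\ds_n(s) = \abstr_\eta(\ds_n(s)) \in \abstr_\eta(\cS_l^{\mathcal X} \oplus \mathcal B_\infty(0, \epsilon))$, i.e. $(\ds_n(s), \u_n(s)) \in \mathcal G^{+\epsilon}(\cS_l)$. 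Summing the domination over $n$ and using the abstract feasibility $\sum_{n} \mathds{1}_{\mathcal G^{+\epsilon}(\cS_l)}(\ds_n(s), \u_n(s)) \le \cN_l$ gives $\sum_{n} \mathds{1}_{\cS_l}(\xc_n(s), \u_n(s)) \le \cN_l$ for every $s \in \mathbb N$ and $l \in [L]$, which is precisely \eqref{eq:prob_counting_constr} for instance \eqref{eq:thm_cont_inst}.

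I expect the main obstacle to be this last paragraph: correctly handling the interplay between the Minkowski expansion and the quantizer $\abstr_\eta$, in particular using that abstract states are fixed points of $\abstr_\eta$ so that the expanded-then-quantized set really does contain $\ds_n(s)$. The inductive trajectory construction is routine once the orientation of $R$ is fixed; the delicate point is that $\mathcal G^{+\epsilon}$ is tailored exactly so that the indicator domination holds, which is what lets the original (unexpanded) counting bound survive the transfer.
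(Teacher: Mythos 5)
Your proposal is correct and follows essentially the same route as the paper's proof: transfer the abstract action sequences through the $\epsilon$-approximate bisimulation to obtain $\epsilon$-close continuous trajectories, then use the pointwise implication $\xc_n(s) \in \cS_l^{\mathcal X} \Rightarrow \xd_n(s) \in \abstr_\eta\left( \cS_l^{\mathcal X} \oplus \mathcal B_\infty(0,\epsilon) \right)$ (justified, as in the paper, by abstract states being fixed points of $\abstr_\eta$) to dominate the continuous count by the expanded abstract count. The only difference is that you spell out the inductive trajectory-matching and the $\eta/2 \leq \epsilon$ argument for the initial conditions in more detail than the paper, which simply cites bisimilarity and $\eta/2$-proximity.
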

An converse result can also be obtained, with the only difference that an additional correction $\eta/2$ is required to account for counting sets that are inflated by quantization.
\begin{theorem}
\label{thm:cont_to_abs}
	Under the same assumptions as in Theorem \ref{thm:abs_to_cont}, if there is no solution to the instance
	\begin{equation}
	\label{eq:thm2_graph_inst}
		\left ( \hspace{-1mm} N, \Sigma_{\tau, \eta}, \{ \abstr_\eta(\xc_n(0)) \}_{n \in [N]}, \hspace{-1mm} \left\{ \mathcal G^{-(\epsilon + \frac{\eta}{2})}(\cS_l), \cN_l\hspace{-0.5mm} \right\}_{l \in [L]} \hspace{-0.5mm}  \right )
	\end{equation}
	of Problem \ref{prob:cont_state}, then there is no solution to the instance 
	\begin{equation}
	\label{eq:thm2_cont_inst}
	 	\left(N, \Sigma_\tau, \{ \xc_n(0) \}_{n \in [N]}, \{ \cS_l, \cN_l \}_{l \in [L]} \right).
	\end{equation}
 \end{theorem}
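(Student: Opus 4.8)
The plan is to establish the contrapositive: assuming the continuous-state instance \eqref{eq:thm2_cont_inst} admits a solution, I construct a solution to the contracted discrete-state instance \eqref{eq:thm2_graph_inst}, whose nonexistence is hypothesized. This is the mirror image of the argument for Theorem \ref{thm:abs_to_cont}, with set expansion replaced by contraction and an extra $\eta/2$ margin to absorb quantization. Concretely, I take a solution of \eqref{eq:thm2_cont_inst} --- protocols $\{\pi_n\}$ generating trajectories $\xc_n(0)\xc_n(1)\ldots$ and actions $\u_n(0)\u_n(1)\ldots$ in $\Sigma_\tau$ with $\sum_{n\in[N]} \mathds{1}_{\cS_l}(\xc_n(s),\u_n(s)) \leq \cN_l$ for all $s,l$. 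Using the bisimulation relation $R$ underlying Proposition \ref{prop:bisimilarity} (states within $\epsilon$ in $\infty$-norm), I note $(\xc_n(0), \abstr_\eta(\xc_n(0))) \in R$ since $\|\xc_n(0) - \abstr_\eta(\xc_n(0))\|_\infty \leq \eta/2 \leq \epsilon$, the last inequality following from the nonnegativity of the other terms in \eqref{eq:bisim_ineq}. Starting from $\ds_n(0) = \abstr_\eta(\xc_n(0))$ and replaying the actions $\u_n$, I inductively invoke the matching property of Definition \ref{def:bisimilar} and determinism of $\Sigma_{\tau,\eta}$ to obtain unique abstract trajectories $\ds_n(0)\ds_n(1)\ldots$ with $(\xc_n(s),\ds_n(s)) \in R$, hence $\|\xc_n(s) - \ds_n(s)\|_\infty \leq \epsilon$, for all $s$.

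The crux is the pointwise implication: for every $n,s,l$, membership $(\ds_n(s),\u_n(s)) \in \mathcal G^{-(\epsilon+\eta/2)}(\cS_l)$ forces $(\xc_n(s),\u_n(s)) \in \cS_l$. The input parts coincide, both demanding $\u_n(s) \in \cS_l^{\mathcal U}$. For the state part, $\ds_n(s) \in \abstr_\eta(\cS_l^{\mathcal X} \ominus \mathcal B_\infty(0,\epsilon+\eta/2))$ means $\ds_n(s) = \abstr_\eta(z)$ for some $z \in \cS_l^{\mathcal X} \ominus \mathcal B_\infty(0,\epsilon+\eta/2)$; since $z$ lies in the box centered at $\ds_n(s)$ we have $\|z - \ds_n(s)\|_\infty \leq \eta/2$, so the triangle inequality gives $\|\xc_n(s) - z\|_\infty \leq \epsilon + \eta/2$. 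The defining property of the Minkowski difference, $\mathcal B_\infty(z,\epsilon+\eta/2) \subset \cS_l^{\mathcal X}$, then yields $\xc_n(s) \in \cS_l^{\mathcal X}$. This is precisely where the additional $\eta/2$ is consumed.

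It remains to chain indicators: the implication gives $\mathds{1}_{\mathcal G^{-(\epsilon+\eta/2)}(\cS_l)}(\ds_n(s),\u_n(s)) \leq \mathds{1}_{\cS_l}(\xc_n(s),\u_n(s))$ for each $n$, so summation and feasibility of the continuous solution give $\sum_{n\in[N]} \mathds{1}_{\mathcal G^{-(\epsilon+\eta/2)}(\cS_l)}(\ds_n(s),\u_n(s)) \leq \cN_l$ for all $s,l$. Thus the constructed abstract trajectories solve \eqref{eq:thm2_graph_inst}, contradicting the hypothesis and closing the contrapositive. I expect the main delicacy to be the very first step: because the instance \eqref{eq:thm2_graph_inst} pins the abstract initial states to the specific quantized points $\abstr_\eta(\xc_n(0))$, I cannot treat bisimilarity as a black box but must rely on the explicit form of the relation $R$ to certify that these points are $R$-related to $\xc_n(0)$, which is what lets me propagate the matching forward in time.
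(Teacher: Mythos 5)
Your proposal is correct and follows essentially the same route as the paper's proof: argue by contradiction, transfer the continuous solution to abstract trajectories via the explicit $\epsilon$-bisimulation relation (using the $\eta/2$-proximity of the quantized initial states), and establish the pointwise implication that membership in $\mathcal G^{-(\epsilon+\eta/2)}(\cS_l)$ forces membership of the continuous state in $\cS_l$. The only cosmetic difference is that you prove that implication via a witness $z$ with $\abstr_\eta(z)=\ds_n(s)$ and the triangle inequality, whereas the paper phrases the same computation as the inclusion chain $\abstr_\eta\left(\cS_l^{\mathcal X}\ominus\mathcal B_\infty(0,\epsilon+\tfrac{\eta}{2})\right)\subset\cS_l^{\mathcal X}\ominus\mathcal B_\infty(0,\epsilon)$ followed by a Minkowski-sum step.
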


\begin{example}[continued]
	We proceed with the TCL scheduling problem by adjusting the constraints. The mode-counting constraints \eqref{eq:tcl_mc} lack a state space part, and are therefore not modified. However, by Theorem \ref{thm:abs_to_cont} the local safety constraints \eqref{eq:tcl_sc} must be expanded to
	\begin{equation*}
		\sum_{n \in [N]} \mathds{1}_{[\munderbar a + \epsilon, \moverbar a - \epsilon]^C \times \{ \on, \off \}} \left( \xd_n(s), \u_n(s) \right) \leq 0,
	\end{equation*}
	in order for a solution of the discrete-state instance to be mappable to a valid solution of the continuous-state instance of Problem \ref{prob:cont_state}. Similarly, by Theorem \ref{thm:cont_to_abs}, if it can be shown that the discrete-state instance obtained by contracting the counting set to $[\munderbar a - \epsilon - \eta/2, \moverbar a + \epsilon + \eta/2]^C$ before discretizing is infeasible, then also the original problem is infeasible.
\end{example}

\section{Solving the Discrete-State Counting Problem}
\label{sec:solving_discrete}

Having described the reduction of a continuous-state instance of Problem \ref{prob:cont_state} to a deterministic discrete-state instance, we proceed with a solution procedure for the latter. Consider a DFTS $\Sigma_{DFTS} = (\mathcal Q, \mathcal U, \longrightarrow, Y)$; it may be the result of a continuous-state abstraction or just represent a discrete structure onto which a counting problem is defined. A DFTS can alternatively be viewed as a directed graph $G = (\mathcal Q, \longrightarrow)$, where $\mathcal Q$ is a set of nodes and $\longrightarrow$ is a set of edges, and each edge is labeled with an action from $\mathcal U$. This dual viewpoint allows us to leverage graph-theoretical concepts to investigate the aggregate system; in the following we use the transition system viewpoint and the graph viewpoint interchangeably. We first define aggregate dynamics which take the form of a constrained linear system. The solution method is then presented in the form of a linear feasibility problem over aggregate states.

\subsection{Aggregate Dynamics as a Linear System} 
\label{sub:aggregate_dynamics_as_a_linear_system}

Consider a total of $N$ subsystems whose dynamics are governed by a DFTS $\Sigma_{DFTS} = (\mathcal Q, \mathcal U, \longrightarrow, Y)$.

We introduce $|\mathcal Q|$ aggregate states labeled $w_\ds$ for $\ds \in \mathcal Q$, that describe the number of individual systems that are \emph{at state $\ds$}. By also introducing control actions $r_\ds^\du$ that represent the number of systems \emph{at state $\ds$ using action $\du$}, the aggregate dynamics can be written as
\begin{equation}
	\label{eq:aggregate_dynamics}
	w_\ds(s+1) = \sum_{\du \in \mathcal U} \sum_{\ds' \in \mathcal N_\ds^\du} r_{\ds'}^\du(s), \quad \ds \in \mathcal Q,
\end{equation}
where $\mathcal N_\ds^\du = \left\{ \ds' \in \mathcal Q : \ds' \overset{\du}{\longrightarrow}  \ds \right\}$ is the set of predecessors of $\ds$ under the action $\du$. We constrain the control actions $r_\ds^\du$ such that for all $\du \in \mathcal U$,
\begin{equation}
	\label{eq:cst_pos}
	r_\ds^\du(s) \geq 0, \quad  \sum_{\du \in \mathcal U} r_\ds^\du(s) = w_\ds(s),
\end{equation}
which ensures the continued positivity of the states: $w_\ds (s+1) \geq 0$ for $\ds \in \mathcal Q$. Furthermore, the invariant $\| \mathbf{w}(s) \|_1 = N$ holds over time, where $N$ is the total number of subsystems. In the following, we use the compact notation 
\begin{equation}
\label{eq:agg_dyn_compact}
  \Gamma: \; \mathbf{w}(s+1) = B \mathbf{r}(s),
\end{equation}
to denote this system, where $B$ is composed of the incidence matrices $A^\du, \du \in \mathcal U$ of the system graph. The state space $\mathcal W$ and (state-dependent) admissible control set $\mathcal R$ of this system are then
\begin{equation*}
\begin{aligned}
	\mathcal W & = \left\{ \mathbf{w} \in \mathbb{N}^{|\mathcal Q|} : \| \mathbf{w} \|_1 = N \right\},\\
	\mathcal R(\mathbf{w}) & = \left\{ \mathbf{r} \in \mathbb{N}^{|\mathcal Q||\mathcal U|} : \; \text{\eqref{eq:cst_pos} holds for $(\mathbf{w}, \mathbf{r})$} \right\}.
\end{aligned}	
\end{equation*}

\begin{figure}
	\begin{center}
    \footnotesize
		\begin{tikzpicture}[
    dot/.style = {ellipse, fill, inner sep=0, minimum width=2.55mm, minimum height=0.75mm, rotate=10}]
    \fill[fill=gray!20] (-0.5,0,-0.5) -- (6.5,0,-0.5) -- (6.5,0,3.5) -- (-0.5,0,3.5) -- cycle;
    \draw[dashed] (0,0,0) -- (0,0,3);
    \draw[dashed] (1.5,0,0) -- (1.5,0,3);
    \draw[dashed] (3,0,0) -- (3,0,3);
    \draw[dashed] (4.5,0,0) -- (4.5,0,3);
    \draw[dashed] (6,0,0) -- (6,0,3);
    \draw[dashed] (0,0,0) -- (6,0,0);
    \draw[dashed] (0,0,1.5) -- (6,0,1.5);
    \draw[dashed] (0,0,3) -- (6,0,3);

    \draw[latex-] (6.25, 0, -0) to[out=90, in=-90] ++ (-0.5,2,0) node[above] {$\mathcal X$};

    \node at (1.25,0,2.6) {$q_0$};
    \node at (2.75,0,2.6) {$q_1$};
    \node at (4.25,0,2.6) {$q_2$};
    \node at (5.75,0,2.6) {$q_3$};

    \node at (1.25,0,0.4) {$q_4$};
    \node at (2.75,0,0.4) {$q_5$};
    \node at (4.25,0,0.4) {$q_6$};
    \node at (5.75,0,0.4) {$q_7$};

    \node[dot] at (2.4108, 0, 1.8256) {};
    \node[dot] at (0.4558, 0, 0.9793) {};
    \node[dot] at (1.0558, 0, 0.7793) {};
    \node[dot] at (2.4395, 0, 2.7043) {};
    \node[dot] at (0.7399, 0, 1.0595) {};
    \node[dot] at (0.9034, 0, 2.4636) {};
    \node[dot] at (1.8397, 0, 1.262) {};
    \node[dot] at (2.5036, 0, 0.7291) {};
    \node[dot] at (0.2979, 0, 0.5070) {};
    \node[dot] at (5.4163, 0, 1.9473) {};
    \node[dot] at (5.6687, 0, 2.1952) {};
    \node[dot] at (2.7452, 0, 1.9432) {};
    \node[dot] at (3.4355, 0, 1.3528) {};


    \footnotesize

    \draw[-latex] (0,0,0) -- (0,2.5,0) node[above] {$w_q$};
    \draw (0.1,0,0) -- (-0.1,0,0) node[left] {0};
    \draw (0.1,0.5,0) -- (-0.1,0.5,0) node[left] {1};
    \draw (0.1,1,0) -- (-0.1,1,0) node[left] {2};
    \draw (0.1,1.5,0) -- (-0.1,1.5,0) node[left] {3};
    \draw (0.1,2,0) -- (-0.1,2,0) node[left] {4};

    \pgfmathsetmacro{\cubex}{1.5}
    \pgfmathsetmacro{\cubez}{1.5}
    \pgfmathsetmacro{\cubey}{2}
    \draw[black,opacity=0.3] (0,0,0) coordinate (o) -- ++(\cubex,0,0) coordinate (a) -- ++(0,\cubey,0) -- ++(-\cubex, 0, 0) -- cycle -- ++ (0,0,\cubez) coordinate[pos=1] (end1) {};
    \draw[black, fill=blue!50, opacity=0.3] (end1) -- ++(\cubex,0,0) -- ++(0,0,-\cubez) -- ++(0,\cubey,0) -- ++ (0,0,\cubez) -- ++(0,-\cubey,0);
    \draw[black, fill=blue!50, opacity=0.3] (end1) -- ++(\cubex,0,0) -- ++(0,\cubey,0) -- ++(-\cubex,0,0) coordinate[pos=1] (end2) -- cycle;
    \draw[black, fill=blue!50, opacity=0.3] (end2) -- ++(\cubex,0,0) -- ++(0,0,-\cubez) -- ++(-\cubex,0,0) -- cycle;

    \pgfmathsetmacro{\cubey}{0.5}
    \draw[black, opacity=0.3] (0,0,1.5) coordinate (o) -- ++(\cubex,0,0) coordinate (a) -- ++(0,\cubey,0) -- ++(-\cubex, 0, 0) -- cycle -- ++ (0,0,\cubez) coordinate[pos=1] (end1) {};
    \draw[black, fill=blue!50, opacity=0.3] (end1) -- ++(\cubex,0,0) -- ++(0,0,-\cubez) -- ++(0,\cubey,0) -- ++ (0,0,\cubez) -- ++(0,-\cubey,0);
    \draw[black, fill=blue!50, opacity=0.3] (end1) -- ++(\cubex,0,0) -- ++(0,\cubey,0) -- ++(-\cubex,0,0) coordinate[pos=1] (end2) -- cycle;
    \draw[black, fill=blue!50, opacity=0.3] (end2) -- ++(\cubex,0,0) -- ++(0,0,-\cubez) -- ++(-\cubex,0,0) -- cycle;

    \pgfmathsetmacro{\cubey}{1}
    \draw[black, opacity=0.3] (1.5,0,0) coordinate (o) -- ++(\cubex,0,0) coordinate (a) -- ++(0,\cubey,0) -- ++(-\cubex, 0, 0) -- cycle -- ++ (0,0,\cubez) coordinate[pos=1] (end1) {};
    \draw[black, fill=blue!50, opacity=0.3] (end1) -- ++(\cubex,0,0) -- ++(0,0,-\cubez) -- ++(0,\cubey,0) -- ++ (0,0,\cubez) -- ++(0,-\cubey,0);
    \draw[black, fill=blue!50, opacity=0.3] (end1) -- ++(\cubex,0,0) -- ++(0,\cubey,0) -- ++(-\cubex,0,0) coordinate[pos=1] (end2) -- cycle;
    \draw[black, fill=blue!50, opacity=0.3] (end2) -- ++(\cubex,0,0) -- ++(0,0,-\cubez) -- ++(-\cubex,0,0) -- cycle;

    \pgfmathsetmacro{\cubey}{1.5}
    \draw[black, opacity=0.3] (1.5,0,1.5) coordinate (o) -- ++(\cubex,0,0) coordinate (a) -- ++(0,\cubey,0) -- ++(-\cubex, 0, 0) -- cycle -- ++ (0,0,\cubez) coordinate[pos=1] (end1) {};
    \draw[black, fill=blue!50, opacity=0.3] (end1) -- ++(\cubex,0,0) -- ++(0,0,-\cubez) -- ++(0,\cubey,0) -- ++ (0,0,\cubez) -- ++(0,-\cubey,0);
    \draw[black, fill=blue!50, opacity=0.3] (end1) -- ++(\cubex,0,0) -- ++(0,\cubey,0) -- ++(-\cubex,0,0) coordinate[pos=1] (end2) -- cycle;
    \draw[black, fill=blue!50, opacity=0.3] (end2) -- ++(\cubex,0,0) -- ++(0,0,-\cubez) -- ++(-\cubex,0,0) -- cycle;

    \pgfmathsetmacro{\cubey}{0.5}
    \draw[black, opacity=0.3] (3,0,0) coordinate (o) -- ++(\cubex,0,0) coordinate (a) -- ++(0,\cubey,0) -- ++(-\cubex, 0, 0) -- cycle -- ++ (0,0,\cubez) coordinate[pos=1] (end1) {};
    \draw[black, fill=blue!50, opacity=0.3] (end1) -- ++(\cubex,0,0) -- ++(0,0,-\cubez) -- ++(0,\cubey,0) -- ++ (0,0,\cubez) -- ++(0,-\cubey,0);
    \draw[black, fill=blue!50, opacity=0.3] (end1) -- ++(\cubex,0,0) -- ++(0,\cubey,0) -- ++(-\cubex,0,0) coordinate[pos=1] (end2) -- cycle;
    \draw[black, fill=blue!50, opacity=0.3] (end2) -- ++(\cubex,0,0) -- ++(0,0,-\cubez) -- ++(-\cubex,0,0) -- cycle;

    \pgfmathsetmacro{\cubey}{1}
    \draw[black, opacity=0.3] (4.5,0,1.5) coordinate (o) -- ++(\cubex,0,0) coordinate (a) -- ++(0,\cubey,0) -- ++(-\cubex, 0, 0) -- cycle -- ++ (0,0,\cubez) coordinate[pos=1] (end1) {};
    \draw[black, fill=blue!50, opacity=0.3] (end1) -- ++(\cubex,0,0) -- ++(0,0,-\cubez) -- ++(0,\cubey,0) -- ++ (0,0,\cubez) -- ++(0,-\cubey,0);
    \draw[black, fill=blue!50, opacity=0.3] (end1) -- ++(\cubex,0,0) -- ++(0,\cubey,0) -- ++(-\cubex,0,0) coordinate[pos=1] (end2) -- cycle;
    \draw[black, fill=blue!50, opacity=0.3] (end2) -- ++(\cubex,0,0) -- ++(0,0,-\cubez) -- ++(-\cubex,0,0) -- cycle;


\end{tikzpicture}
	\end{center}
  \vspace{-4mm}
	\caption{Illustration of how the aggregate dynamics can be interpreted as a time-varying histogram. The continuous state space $\mathcal X$ is partitioned into abstract states $q_0, \ldots, q_7$, and the aggregate states $w_q$ count the number of subsystems (black dots) present in each abstract state. In this example, $w_{q_0} = 1$ and $w_{q_4} = 4$, since there is one subsystem in $q_0$ and four subsystems in $q_4$. As time evolves, the subsystems move around in $\mathcal X$, and the aggregate states follow the dynamics \eqref{eq:aggregate_dynamics}.}
	\label{fig:histogram}
\end{figure}
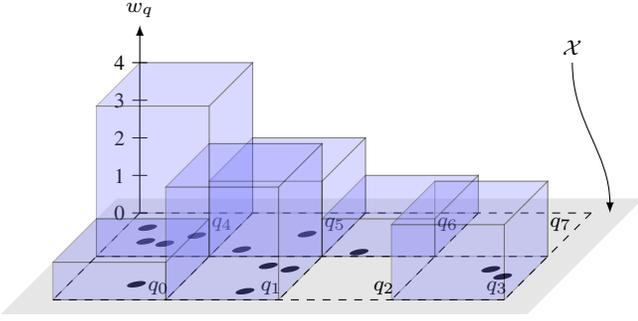

When the discrete-state counting problem is obtained via the abstraction procedure in Section \ref{sec:cont_to_discrete}, the aggregate dynamics \eqref{eq:aggregate_dynamics} can be seen as a time-varying histogram, where each bin in the histogram represents the number of subsystems in a given area of the state space, as illustrated in Fig. \ref{fig:histogram}.

Counting constraints become linear constraints on these aggregate states, so our problem can be rephrased as a linear feasibility problem. However, the feasibility problem is infinite-dimensional since we are searching for solution-trajectories over an infinite horizon. To limit the search to a finite-dimensional space we consider solutions that consist of a finite prefix trajectory and a periodic suffix trajectory. While the prefix trajectory is just a trajectory of the aggregate dynamics, we consider suffix trajectories that are defined in terms of \emph{cycle assignments}.

\subsection{Cycle assignments}
\label{sub:graph_assignments}
 
We consider cycles $C$ of the form 
\begin{equation}
\label{eq:cycle_augmented}
	\cycleC = (\ds_{0}, \du_{0}, \ds_{1}) (\ds_{1}, \du_{1}, \ds_{2}) \ldots (\ds_{{I-1}}, \du_{{I-1}}, \ds_{0}),
\end{equation} 
of length given by $| \cycleC | = I$, where $(\ds_{i}, \du_{i}, \ds_{{i+1}}) \in \longrightarrow$. A \emph{cycle assignment} assigns ``weights'' (or numbers) of subsystems to each node along a cycle. If the graph represents an abstraction, the assignment counts the number of subsystems whose continuous states are in the vicinity of the abstract states on this cycle.
\begin{definition}
	An \textbf{assignment} to a cycle $\cycleC$ is a function $\assgn : [|\cycleC|] \rightarrow \mathbb{R}_+$. If $\assgn(i) \in \mathbb{N}$ for  all $i \in [|C|]$, $\assgn$ is an \textbf{integer assignment}.
\end{definition}
For now we will only be concerned with integer assignments that represent a number of subsystems. However, later in the paper we will study relaxed non-integer solutions consisting of assignments that are not necessarily integral.

An assignment assigns subsystems to a cycle, so that the state of a subsystem circulates along the abstract states on this cycle as time progresses (provided that the appropriate actions are chosen). The movement corresponds to a circular shift of the assignment.
\begin{definition}
	For an assignment $\assgn$ we define its $s$-step \textbf{circulation}, denoted $\assgn^{\circlearrowleft s} : [|\cycleC|] \rightarrow \mathbb{R}_+$, as the shifted function
	\begin{equation*}
		\assgn^{\circlearrowleft s}(i) = \assgn \left( (i - s) \mod |C| \right).
	\end{equation*}
\end{definition}
The periodicity is manifested by the relation $\assgn^{\circlearrowleft (|C| + s)} = \assgn^{\circlearrowleft s}$. To capture how counting quantities vary during the circulation we introduce the following notation for the matching of a cycle and a circulated assignment. 

\begin{definition}
	For a cycle $\cycleC$ and an assignment $\assgn : [|\cycleC|] \rightarrow \mathbb{R}_+$ the \textbf{$\cS$-count at time $s$} of a counting set $\cS$ is defined as
	\begin{equation}
	\label{eq:inner_prod}
		\left\langle \cycleC, \assgn^{\circlearrowleft s} \right\rangle^\cS = \sum_{i \in [|C|]} \begin{cases}
			\alpha^{\circlearrowleft s} (i) \quad & \text{if} \; (q_i, \mu_i) \in X, \\
			0 & \text{otherwise}.
		\end{cases}
	\end{equation}
\end{definition}
If the cycle $\cycleC$ includes elements from a set $\cS$ coming from a counting constraint, \eqref{eq:inner_prod} counts the number of subsystems contributing to the counting constraint at time $s$, assuming that the subsystems make up the assignment $\alpha$ at time zero and circulate in the cycle $C$. The concept is illustrated in Fig. \ref{fig:circulate}.

\begin{figure}
	\begin{center}
		\footnotesize
		\begin{tikzpicture}[
  node distance=1.5cm
  ]
  \node (empty) [circle, minimum width=7mm, draw] at (0,1) {};
  \node (4) [circle, minimum width=7mm, draw, right of=empty, label=$\ds_4$] {$2$};
  \node (3) [circle, minimum width=7mm, draw, right of=4, label=$\ds_3$] {$3$};
  \node (0) [circle, minimum width=7mm, draw, below of=empty, label=below:$\ds_0$] {$6$};
  \node (1) [circle, minimum width=7mm, draw, right of=0, label=below:$\ds_1$] {$5$};
  \node (2) [circle, minimum width=7mm, draw, right of=1, label=below:$\ds_2$] {$4$};
  \draw[-latex] (4)  -- node[right] {$\mu_4$} (0);
  \draw[-latex] (3)  -- node[above, pos=0.25] {$\mu_3$} (4);
  \draw[-latex] (0) -- node[below] {$\mu_0$} (1);
  \draw[-latex] (1) -- node[below] {$\mu_1$} (2);
  \draw[-latex] (2) -- node[left]  {$\mu_2$} (3);

  \node at (2.2,-1.2) {$X$};

  \begin{scope}[on background layer]
    \draw[fill=green!20, dashed] ($(3.north east) + (0.2, 0.5)$) -- ($(3.north west) + (-0.2, 0.5)$) 
    -- ($(1.north west) + (-0.15, 0.1)$) -- ($(1.south west) + (-0.15, -0.6)$)
    -- ($(2.south east) + (0.2, -0.6)$) -- cycle;
  \end{scope}

\end{tikzpicture} \hfill \begin{tikzpicture}[
  node distance=1.5cm
  ]
  \node (empty) [circle, minimum width=7mm, draw] at (0,1) {};
  \node (4) [circle, minimum width=7mm, draw, right of=empty, label=$\ds_4$] {$3$};
  \node (3) [circle, minimum width=7mm, draw, right of=4, label=$\ds_3$] {$4$};
  \node (0) [circle, minimum width=7mm, draw, below of=empty, label=below:$\ds_0$] {$2$};
  \node (1) [circle, minimum width=7mm, draw, right of=0, label=below:$\ds_1$] {$6$};
  \node (2) [circle, minimum width=7mm, draw, right of=1, label=below:$\ds_2$] {$5$};
  \draw[-latex] (4)  -- node[right] {$\mu_4$} (0);
  \draw[-latex] (3)  -- node[above, pos=0.25] {$\mu_3$} (4);
  \draw[-latex] (0) -- node[below] {$\mu_0$} (1);
  \draw[-latex] (1) -- node[below] {$\mu_1$} (2);
  \draw[-latex] (2) -- node[left]  {$\mu_2$} (3);

  \begin{scope}[on background layer]
    \draw[fill=green!20, dashed] ($(3.north east) + (0.2, 0.5)$) -- ($(3.north west) + (-0.2, 0.5)$) 
    -- ($(1.north west) + (-0.15, 0.1)$) -- ($(1.south west) + (-0.15, -0.6)$)
    -- ($(2.south east) + (0.2, -0.6)$) -- cycle;
  \end{scope}

  \node at (2.2,-1.2) {$X$};

\end{tikzpicture}
	\end{center}
  \vspace{-4mm}
  \caption{Illustration of how an assignment $\assgn = [6,5,4,3,2]$ circulating in a cycle $C = (q_0, \mu_0, q_1) \ldots (q_4, \mu_4, q_0)$ contributes to a counting set $X = \{ (q_1, \mu_1), (q_2, \mu_2), (q_3, \mu_3) \}$. When the assignment is unshifted (left), the $X$-count is $\left\langle C, \assgn \right\rangle^X = 5+4+3 = 12$, and when the assignment has circulated one step, the $X$-count becomes $\left\langle C, \assgn^{\circlearrowleft 1} \right\rangle^X = 6+5+4 = 15$.}
  \label{fig:circulate}
\end{figure}
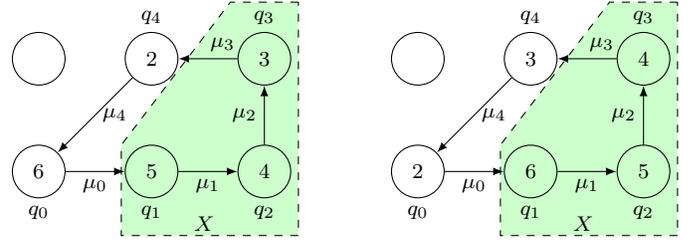

Next we introduce a function that returns the highest count over all circulations for given cycle-assignment pairs.

\begin{definition}
	The \textbf{maximal $\cS$-count} for a cycle $\cycleC$ with assignment $\assgn$, denoted $\maxcnt^\cS(\cycleC, \alpha)$, is the maximal number of subsystems simultaneously in $\cS$ when $\assgn$ is circulated around $\cycleC$:
	\begin{equation*}
		\maxcnt^\cS \left(\cycleC, \assgn \right) 
		= \max_{s \in \mathbb{N}} \left\langle C, \alpha^{\circlearrowleft s} \right\rangle^\cS.
	\end{equation*}
\end{definition}
The maximal $\cS$-count can be computed as the maximum entry in a matrix-vector product $\maxcnt^{\cS}(\cycleC, \alpha) = \left\| B_\cycleC^\cS \alpha \right\|_
	\infty$,
where $B_\cycleC^\cS \in \{ 0,1 \}^{|\cycleC| \times |\cycleC|}$ is a circulant binary matrix. 
\begin{example}
	Going back to cycle and assignment in Fig. \ref{fig:circulate}, it can be seen that the maximal $\cS$-count occurs for $s = 1$, and that $\maxcnt^\cS(C, \alpha) = 15$. The same value is obtained as the infinity norm of the matrix-vector product
	\begin{equation*} 
		\| B_C^\cS \alpha \|_\infty =  \left\|
		\begin{bmatrix}
			0 & 1 & 1 & 1 & 0 \\
			1 & 1 & 1 & 0 & 0 \\
			1 & 1 & 0 & 0 & 1 \\
			1 & 0 & 0 & 1 & 1 \\
			0 & 0 & 1 & 1 & 1
		\end{bmatrix} \begin{bmatrix}
			6 \\ 5 \\ 4 \\ 3 \\ 2
		\end{bmatrix} \right\|_\infty \hspace{-3mm} = \left\| \begin{bmatrix}
			12 \\ 15 \\ 13 \\ 11 \\ 9
		\end{bmatrix} \right\|_\infty \hspace{-3mm} = 15,
	\end{equation*}
  where the first row of the matrix has ones at positions in $C$ corresponding to states in $X$ (i.e. $q_1$, $q_2$, and $q_3$), and the remaining rows are shifted to form a circulant matrix. 
\end{example}
Our solution method to the counting problems requires consideration of multiple assignments that circulate in different cycles. For this reason we introduce the \emph{joint} maximal count of assignments to cycles.
\begin{definition}
\label{def:joint_count}
	The \textbf{maximal joint $\cS$-count} for a set of cycles $\{ \cycleC_j \}_{j \in J}$ and a matching set of assignments $\{ \assgn_j \}_{j \in J}$, denoted $\maxcnt^\cS \left( \{ \cycleC_j \}_{j \in J}, \{ \assgn_j \}_{j \in J} \right)$, is the maximal number of subsystems simultaneously in $\cS$ when the assignments $\{ \assgn_j \}_{j \in J}$ are synchronously circulated around the cycles $\{ \cycleC_j \}_{j \in J}$:
	\begin{equation*}
		\maxcnt^\cS \left( \{ \cycleC_j \}_{j \in J}, \{ \assgn_j \}_{j \in J} \right)
		= \max_{s \in \mathbb{N}} \sum_{j \in J} \left\langle \cycleC_j, \alpha_j^{\circlearrowleft s} \right\rangle^X.
	\end{equation*}
\end{definition}
Also the maximal joint $\cS$-count can be expressed as the maximal element in a sum of matrix-vector products.
\begin{proposition}
	The maximal joint $\cS$-count satisfies
\begin{equation}
\label{eq:joint_count}
\begin{aligned}
	\maxcnt^\cS & \hspace{-1mm} \left( \{ \cycleC_j \}_{j \in J}, \{ \hspace{-0.5mm} \assgn_j \hspace{-0.5mm} \}_{j \in J} \right) \hspace{-0.75mm} = \hspace{-0.75mm} \bigg\| \sum_{j \in J} (\mathbf{1}_{k_j} \hspace{-1.5mm} \otimes\hspace{-1mm}  B_{\cycleC_j}^\cS) \alpha_j \bigg\|_\infty \hspace{-3mm}, 
\end{aligned}
\end{equation}
where $k_j = \lcm \left(\{ |\cycleC_j| \}_{j \in J} \right)/|\cycleC_j|$, $\lcm$ is the least common multiple, $\mathbf{1}_k$ is the length $k$ column vector of all ones, and $\otimes$ is the Kronecker product.
\end{proposition}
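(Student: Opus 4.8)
The plan is to reduce the joint count to the single-cycle identity $\maxcnt^\cS(\cycleC, \assgn) = \|B_\cycleC^\cS \assgn\|_\infty$ recorded above, using the Kronecker factor $\mathbf{1}_{k_j}$ purely as a periodic-tiling device that aligns the individual cycle periods to a common one. First I would make explicit the entrywise meaning of the circulant matrix $B_\cycleC^\cS$: writing $\chi_i = \mathds{1}_\cS(q_i,\mu_i)$ for the indicator of membership along the cycle, its circulant structure gives $\left(B_\cycleC^\cS \assgn\right)_m = \sum_i \chi_{(i+m)\bmod |\cycleC|}\,\assgn(i) = \langle \cycleC, \assgn^{\circlearrowleft m}\rangle^\cS$ for every $m \in [|\cycleC|]$. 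In words, the $m$-th entry of $B_\cycleC^\cS \assgn$ is exactly the $\cS$-count after $m$ circulation steps; this is the single-cycle content already visible in the worked example.

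Next I would use periodicity to pass to a finite search window. For each fixed $j$ the map $s \mapsto \langle \cycleC_j, \assgn_j^{\circlearrowleft s}\rangle^\cS$ has period $|\cycleC_j|$, so the joint objective $\sum_{j \in J} \langle \cycleC_j, \assgn_j^{\circlearrowleft s}\rangle^\cS$ is periodic in $s$ with period $L := \lcm(\{|\cycleC_j|\}_{j \in J})$. Hence the maximum over $s \in \mathbb{N}$ in Definition \ref{def:joint_count} is already attained over the finite window $s \in [L]$, and it suffices to assemble one length-$L$ vector whose $m$-th entry is the joint count at time $m$ and to read off its infinity norm.

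The key step is to verify that $\mathbf{1}_{k_j} \otimes B_{\cycleC_j}^\cS$ produces precisely this length-$L$ vector for each summand. Since $k_j = L/|\cycleC_j|$, we have $(\mathbf{1}_{k_j} \otimes B_{\cycleC_j}^\cS)\assgn_j = \mathbf{1}_{k_j} \otimes (B_{\cycleC_j}^\cS \assgn_j)$, which stacks $k_j$ identical copies of the length-$|\cycleC_j|$ vector $B_{\cycleC_j}^\cS \assgn_j$; its $m$-th entry is therefore $(B_{\cycleC_j}^\cS \assgn_j)_{m \bmod |\cycleC_j|} = \langle \cycleC_j, \assgn_j^{\circlearrowleft m}\rangle^\cS$ for all $m \in [L]$, i.e. the periodic extension of the individual count to the common period $L$. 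Summing over $j$ and applying the entrywise identity of the first step, the $m$-th entry of $\sum_{j \in J}(\mathbf{1}_{k_j}\otimes B_{\cycleC_j}^\cS)\assgn_j$ equals $\sum_{j \in J}\langle \cycleC_j, \assgn_j^{\circlearrowleft m}\rangle^\cS$. Taking $\|\cdot\|_\infty$, i.e. maximizing over $m \in [L]$, then yields the claimed formula by the periodicity reduction.

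The main obstacle I anticipate is bookkeeping rather than conceptual depth: ensuring the modular indexing of the Kronecker tiling lines up correctly — that stacking $k_j$ copies genuinely replicates the $|\cycleC_j|$-periodic sequence over the window $[L]$, and that the Kronecker ordering places the copies contiguously — together with a clean argument that a least-common-multiple window captures the global maximum of a sum of periodic sequences. None of these is hard, but they are the points where a sign or off-by-one indexing error would most easily creep in.
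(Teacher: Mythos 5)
Your proof is correct. The paper states this proposition without proof, and your argument---identifying the $m$-th entry of $B_{\cycleC_j}^\cS \assgn_j$ with the $\cS$-count at circulation step $m$, tiling each such vector to the common period $\lcm(\{|\cycleC_j|\}_{j\in J})$ via $\mathbf{1}_{k_j}\otimes B_{\cycleC_j}^\cS$, and reducing the supremum over $s\in\mathbb{N}$ to a maximum over one common period by periodicity of the summands---is exactly the intended unpacking of the definitions, and your indexing convention for the circulant matrix agrees with the worked example in the paper (the only unstated triviality being that all entries are nonnegative, so $\|\cdot\|_\infty$ coincides with the maximum entry).
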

The matrices in \eqref{eq:joint_count} are large if the least common multiple is large, which illustrates that relative assignment circulations must be taken into account. We elaborate further on these matters, and how they are connected to graph periodicity, in Section \ref{ssec:size_reduc} below.


\subsection{Solution to the Discrete-State Counting Problem}
\label{sub:solution_to_the_graph_mode_counting_problem}

Consider now a problem instance 
\begin{equation}
\label{eq:graph_instance}
	( N, \Sigma_{DFTS}, \{ \xd_n(0) \}_{n \in [N]} ,\{\cS_l, \cN_l \}_{l \in [L]})
\end{equation}
and let $\mathbf{w}(0)$ be the aggregate initial state
\begin{equation}
\label{eq:aggregate_initial}
  w_\ds(0) = \sum_{n \in [N]} \begin{cases}
    1 & \text{if} \; \xi_n(0) = \ds, \\
    0 & \text{otherwise}.
  \end{cases}
\end{equation}
We restrict attention to solutions of a particular form; it is shown in later sections that this is without loss of generality.
\begin{definition}
	A trajectory for an aggregate initial state $\mathbf{w}(0)$ is of \textbf{prefix-suffix type} if it consists of a finite number of inputs $\mathbf{r}(0), \ldots, \mathbf{r}(T-1)$, and a set of cycles $\{\cycleC_j \}_{j \in J}$ with assignments $\{ \assgn_j \} _{j \in J}$ such that the cycles are populated with their respective cycle assignments at time $T$.
\end{definition}

Given aggregate initial states $\mathbf{w}(0)$, a set $\{ \cycleC_j \}_{j \in [J]}$ of cycles in $G$, and a prefix horizon $T$, prefix-suffix trajectories can be extracted from feasible points of the following linear feasibility problem:
\begin{subequations}
\label{eq:lp}
\begin{align}
	\nonumber \text{find} \; & \assgn_0, \ldots, \assgn_{J-1} \; \text{(cycle assignments)}, & \\
	\nonumber \; & \mathbf{r}(0), \ldots, \mathbf{r}(T-1) \; \text{(aggregate inputs)}, & \\
	\nonumber \; & \mathbf{w}(1), \ldots, \mathbf{w}(T) \; \text{(aggregate states)}, & \\
	\label{eq:lp_c1} \text{s.t.} \; & \sum_{\ds \in \mathcal Q} \sum_{\du \in \mathcal U} \mathds{1}_{\cS_l} \left( \ds, \du \right) r_\ds^\du(s) \leq  \cN_l, & \hspace{-15mm} s \in [T], l \in [L], \\
	\label{eq:lp_c2}  &\maxcnt^{\cS_l}( \left\{ \cycleC_j \right\}_{j \in J}, \left\{ \alpha_j \right\}_{j \in J}) \leq \cN_l, & l \in [L], \\
	\label{eq:lp_c4}  & w_\ds(T) = \sum_{j \in [J]} \left\langle C_j, \alpha_j \right\rangle^{\{ \ds \} \times \mathcal U},  & \ds \in \mathcal Q, \\
	\label{eq:lp_c3}  & \mathbf{w}(s+1) = B \mathbf{r}(s), & s \in [T], \\
	\label{eq:lp_c5} &  \sum_{\du \in \mathcal U} r_\ds^\du(s) = w_\ds(s), & \hspace{-20mm} s \in [T], \ds \in \mathcal Q, \\
	\label{eq:lp_c6} & r_\ds^\du(s) \geq 0, & \hspace{-40mm}s \in [T], \ds \in \mathcal Q, \du \in \mathcal U.
\end{align}
\end{subequations}

The constraints \eqref{eq:lp_c1} and \eqref{eq:lp_c2} enforce counting constraints in the prefix and suffix phases, and \eqref{eq:lp_c4} ensures that the prefix and suffix phases are adequately connected. The remaining constraints \eqref{eq:lp_c3}-\eqref{eq:lp_c6} certify that the prefix trajectory is dynamically feasible with respect to \eqref{eq:aggregate_dynamics}.

The number of variables and (in)equalities in \eqref{eq:lp} are $\mathcal O( T|\mathcal Q||\mathcal U| + \sum_{j \in [J]} |C_j|)$ and $\mathcal O \left( LT + L \lcm \left( \{ |C_j| \}_{j \in J} \right) + T|\mathcal Q| \right)$, respectively (not counting positivity constraints that solvers handle easily). Crucially, these numbers do not depend on the total number of subsystems $N$ which makes the approach suitable for large $N$ and moderate-sized graphs. 

The set $\{ C_j \}_{j \in J}$ of cycles is an input to the optimization problem \eqref{eq:lp}; the solver selects good cycles within this set by setting the assignments of the rest to zero. Section \ref{sec:analysis} presents results regarding when an input cycle set is ``sufficiently rich'' not to compromise the existence of solutions. On the other hand, in practice, we find that randomly sampling a set of cycles is sufficient for feasibility in many problems.

\subsection{Size Reductions of the Linear Program}
\label{ssec:size_reduc}

As we point out next, there are certain ways to further decrease the number of variables and/or inequalities---sometimes without loss of generality. 

First of all, state-mode pairs in $G$ with mandated $0$-count can be pruned from $G$ to decrease the number of aggregate variables. Specifically, we can construct a pruned graph $\tilde G$ in the following way: if $(\ds,\du) \in \cS_l$ and $\cN_l = 0$, then remove the action $\du$ at $\ds$. If this results in nodes in $\tilde G$ with no outgoing edges, prune these nodes together with incoming edges and repeat until all nodes have at least one valid action. This procedure is equivalent to finding the largest controlled-invariant set contained in $\cS_l^C$.

Secondly, the following result drastically reduces the number of suffix counting constraint inequalities in \eqref{eq:lp_c2}.
\begin{proposition}
	The joint $\cS$-count for two cycles $\cycleC_0$ and $\cycleC_1$ with co-prime length, i.e. $\gcd(|\cycleC_0|, |\cycleC_1|) = 1$, can be computed as 
\begin{equation*}
\begin{aligned}
	& \maxcnt^\cS \left( \{ \cycleC_0, \cycleC_1 \}, \{ \assgn_0, \assgn_1 \} \right) \\
	& \; = \maxcnt^\cS \left( \cycleC_0 , \assgn_0 \right) + \maxcnt^\cS \left( \cycleC_1 , \assgn_1 \right).
\end{aligned}
\end{equation*}
\end{proposition}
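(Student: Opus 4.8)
The plan is to recognize this as a statement about simultaneously maximizing two periodic integer sequences, with the coprimality hypothesis enabling the alignment of their maxima. First I would unfold Definition~\ref{def:joint_count} for the index set $J = \{0,1\}$ and introduce the shorthand $f(s) = \langle \cycleC_0, \assgn_0^{\circlearrowleft s}\rangle^\cS$ and $g(s) = \langle \cycleC_1, \assgn_1^{\circlearrowleft s}\rangle^\cS$, so that the left-hand side is $\max_{s \in \mathbb{N}}[f(s)+g(s)]$ while the right-hand side is $\max_{s}f(s) + \max_{s}g(s)$. The crucial structural fact is that, by the periodicity relation $\assgn^{\circlearrowleft(|\cycleC|+s)} = \assgn^{\circlearrowleft s}$ noted earlier, $f$ is periodic in $s$ with period $|\cycleC_0|$ and $g$ is periodic with period $|\cycleC_1|$; consequently each individual maximum is attained within a single period and hence over all of $\mathbb{N}$.

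The inequality ``$\leq$'' is immediate from subadditivity of the maximum: for every fixed $s$ we have $f(s)+g(s) \leq \max_{s'}f(s') + \max_{s''}g(s'')$, and taking the maximum over $s$ on the left preserves the bound. For the reverse inequality ``$\geq$'' I would choose indices $s_0, s_1 \in \mathbb{N}$ realizing the individual maxima, i.e.\ $f(s_0) = \max_s f(s)$ and $g(s_1) = \max_s g(s)$, and then invoke the Chinese Remainder Theorem. Because $\gcd(|\cycleC_0|, |\cycleC_1|) = 1$, there exists $s^\star \in \mathbb{N}$ with $s^\star \equiv s_0 \pmod{|\cycleC_0|}$ and $s^\star \equiv s_1 \pmod{|\cycleC_1|}$. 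Periodicity then yields $f(s^\star) = f(s_0) = \max_s f(s)$ and $g(s^\star) = g(s_1) = \max_s g(s)$ at the \emph{same} time step, whence $\max_s[f(s)+g(s)] \geq f(s^\star)+g(s^\star) = \max_s f(s) + \max_s g(s)$. Combining the two inequalities gives the claimed identity.

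The main obstacle is the reverse inequality, and it is precisely here that coprimality is indispensable: without $\gcd(|\cycleC_0|,|\cycleC_1|)=1$ the residue class of $s_0$ modulo $|\cycleC_0|$ and the residue class of $s_1$ modulo $|\cycleC_1|$ need not intersect, so the individual maxima cannot in general be phased to coincide and the identity can fail. Given how directly the CRT settles the alignment, I expect the only care needed is to confirm that $f$ and $g$ are genuinely periodic---which follows immediately from the definitions---and that their maxima over $\mathbb{N}$ are attained, as guaranteed by that periodicity.
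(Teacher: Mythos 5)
Your proposal is correct and follows essentially the same route as the paper's proof: both use the Chinese Remainder Theorem on the coprime cycle lengths to align the individually maximizing circulations at a common time step, which gives the nontrivial direction of the identity. Your write-up is somewhat more explicit in spelling out the easy subadditivity inequality and the periodicity of the count functions, but the key idea is identical.
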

\begin{proof}
	If $|\cycleC_0|$ and $|\cycleC_1|$ are coprime, by the Chinese Remainder Theorem \cite{rosen2000}, the equations $k_1 = s \mod |C_1|$ and $k_2 = s \mod |C_2|$ have a unique solution $s < |\cycleC_0| |\cycleC_1|$ for every pair $k_1 < |C_1|, k_2 < |C_2|$. It follows that every circulation of $\assgn_0$ in $\cycleC_0$ at some point coincides with every circulation of $\assgn_1$ in $\cycleC_1$; hence the maximal joint $X$-count is equal to the sum of individual maximal $X$-counts for the two cycles.
\end{proof}

The proof illustrates that if there is no periodicity, every relative assignment position will eventually be attained, including the worst-case relative assignment position which is exactly the combined worst-case absolute positions of individual cycles. However, in the presence of periodicity only a subset of all relative positions will be attained and representing the worst-case \emph{within this subset} is not as straight-forward.

More generally, if the cycle set can be partitioned into sets of cycles with mutually co-prime length, the number of inequalities can be reduced. That is, for sets of cycles $\left\{ \cycleC_j \right\}_{j \in J}$ and $\left\{ \cycleC_{j'} \right\}_{j' \in J'}$ with the property that $\text{gcd}(|\cycleC_j|, |\cycleC_{j'}|) = 1$ for all pairs $j \in J$, $j' \in J'$, it holds that 
\begin{equation}
\label{eq:count_sep}
\begin{aligned}
		& \maxcnt^\cS \left( \{ \cycleC_j \}_{j \in J \cup J'}, \{ \assgn_j \}_{j \in J \cup J'} \right) \\
		& \; \; = \maxcnt^\cS  \left( \{ \cycleC_j \}_{j \in J}, \{ \assgn_j \}_{j \in J}  \right) \\
		& \; \; + \maxcnt^\cS  \left(  \{ \cycleC_j \}_{j \in J'}, \{ \assgn_j \}_{j \in J'}  \right).
\end{aligned}
\end{equation}

\begin{example}
	To exemplify this reduction, consider a set of cycles with lengths ranging from 2 to 20. We have $\lcm([21]) = 232792560$ and $\lcm([21]\setminus \{ 11, 13, 17, 19 \}) + 11 + 13 + 17 + 19 = 5100$.
	The number 232792560 is the number of inequalities in the na\"ive approach \eqref{eq:joint_count}, which by \eqref{eq:count_sep} can be drastically reduced to 5100 if the cycles of prime lengths $11,13,17$, and $19$ are considered separately.
\end{example}

If the number of constraints is still prohibitively large, it can be replaced by a conservative constraint as follows.

\begin{remark} 
\label{rem:cyclelength}
	The constraint \eqref{eq:lp_c2} can be substituted by the conservative constraint
	\begin{equation*} 
		\sum_{i \in \mathbb{N}} \maxcnt^{\cS_l}( \{ \cycleC_j \}_{j : |\cycleC_j| = i}, \{ \alpha_j \}_{j : |\cycleC_j| = i} ) \leq \cN_l, \quad \forall l \in [L],
	\end{equation*}
	which groups cycles by cycle length $i$ and disregards effects from periodicity. The number of constraints in this case becomes $L(1 + \| \bigcup_{j \in J} \{ |C_j|\}) \|_1)$ instead of $L \lcm \left( \{ |C_j| \}_{j \in J} \right)$, where $\bigcup_{j \in J} \{ |C_j|\}$ is the set of cycle lengths without repetition.
\end{remark}

While the total number of subsystems $N$ does not impact the number of inequalities or constraints, it might affect the performance of integer linear program solvers since the number of candidate integer points grows with $N$. In addition, the converse results in Section \ref{sec:analysis} depend on $N$---in order to prove infeasiblity of the problem very large horizons $T$ and/or cycle sets may be required. If $N$ is prohibitively large for this purpose it can be scaled down if a certain divisibility condition holds: if there is a common divisor $S$ that divides $w_\ds(0)$ for all $\ds \in \mathcal Q$, and that divides $\cN_l$ for all $l \in [L]$, then there is a 1-1 correspondence between solutions of \eqref{eq:lp} and solutions of its analogue obtained from the substitutions $\mathbf{w}(0) \rightarrow \mathbf{w}(0)/S$ and $\cN_l \rightarrow \cN_l/S$. The correspondence simply consists in scaling $\mathbf{r}$, $\mathbf{w}$, and the assignments $\assgn$ with the same $S$. 

\subsection{Control strategy extraction}

We conclude the section by giving a switching protocol that solves the instance \eqref{eq:graph_instance} from a feasible solution of \eqref{eq:lp}.

\begin{algorithm}
 \label{alg:control_extraction}
 \KwData{Time $s$, current state $\xd_n(s)$ for $n \in [N]$}
 \KwResult{Switching signals $\u_n(s)$ for $n \in [N]$}
 \eIf{$s < T$}{
 	Select $\u_n(s)$ s.t. for all $\ds \in \mathcal Q, \du \in \mathcal U$; $\sum\limits_{n \in [N]} \mathds{1}_{\{ (\ds, \du) \} } \left( \xd_n(s), \u_n(s) \right) = r_\ds^\du(s)$\;
 }{
 	Select $\u_n(s)$ s.t. for all $\ds \in \mathcal Q, \du \in \mathcal U$; $\sum_{n \in [N]} \mathds{1}_{\{ (\ds, \du) \} } \left( \xd_n(s), \u_n(s) \right) = \sum_{j \in [J]} \left\langle C_j, \alpha_j^{\circlearrowleft s-T} \right\rangle^{\{ (\ds, \du) \}}$\;
 }
 \caption{Switching protocol.}
\end{algorithm}

\begin{theorem}
\label{thm:feas_to_sol}
	If $\{ \mathbf{r}(s) \}_{s \in [T]}$, $\{ \mathbf{w}(s) \}_{s \in [T+1]}$, $\{ \alpha_j \}_{j \in [J]}$ is a feasible solution of \eqref{eq:lp}, then input selection according to the switching protocol in Algorithm \ref{alg:control_extraction} is recursively feasible, and solves the instance \eqref{eq:graph_instance}.
\end{theorem}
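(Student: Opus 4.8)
The plan is to verify that Algorithm~\ref{alg:control_extraction} is well-defined at every step and that the counting constraints \eqref{eq:prob_counting_constr} hold throughout, using the feasibility of \eqref{eq:lp} as the certificate. The argument splits naturally into two phases, the prefix ($s < T$) and the suffix ($s \geq T$), together with a matching condition at the interface $s = T$.

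First I would establish a loop invariant for the prefix phase: namely that at each time $s \in [T]$ the aggregate state satisfies $w_\ds(s) = \sum_{n \in [N]} \mathds{1}_{\{\ds\} \times \mathcal U}(\xd_n(s), \u_n(s))$, i.e., the aggregate variables genuinely count how many subsystems occupy each abstract state. The base case at $s=0$ holds by the definition \eqref{eq:aggregate_initial} of the aggregate initial state. For the inductive step, the key point is that the input-selection in the $s<T$ branch is \emph{executable}: since $\sum_{\du \in \mathcal U} r_\ds^\du(s) = w_\ds(s)$ by \eqref{eq:lp_c5}, there are exactly $w_\ds(s)$ subsystems at state $\ds$ to be partitioned among the modes $\du$ according to the integer quantities $r_\ds^\du(s)$, so a valid assignment of actions exists. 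Because $\Sigma_{DFTS}$ is deterministic, every subsystem at $\ds$ using $\du$ transitions to the unique successor, and counting successors through the predecessor sets $\mathcal N_\ds^\du$ reproduces exactly the dynamics \eqref{eq:aggregate_dynamics}; hence the count at time $s+1$ equals $B\mathbf{r}(s) = \mathbf{w}(s+1)$ by \eqref{eq:lp_c3}, closing the induction. Satisfaction of the counting constraints for $s \in [T]$ then follows because $\sum_{n} \mathds{1}_{\cS_l}(\xd_n(s),\u_n(s)) = \sum_{\ds,\du} \mathds{1}_{\cS_l}(\ds,\du) r_\ds^\du(s) \leq \cN_l$ by \eqref{eq:lp_c1}.

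For the suffix phase I would argue that at time $T$ the constraint \eqref{eq:lp_c4} guarantees $w_\ds(T) = \sum_{j} \langle C_j, \alpha_j \rangle^{\{\ds\}\times \mathcal U}$, so the population count at state $\ds$ agrees with the total cycle-assignment mass sitting at $\ds$; this is precisely the matching that makes the suffix selection realizable at $s=T$, and circulating the assignments realizes it for all $s \geq T$. The main obstacle, and the step warranting the most care, is recursive feasibility of the suffix selection across the full infinite horizon: one must check that the mass prescribed by $\sum_j \langle C_j, \alpha_j^{\circlearrowleft s-T}\rangle^{\{(\ds,\du)\}}$ at each $(\ds,\du)$ is both nonnegative-integer-realizable given the subsystems actually present at $\ds$, and consistent with the deterministic transition structure so that circulating by one more step matches the next prescribed distribution. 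This works because an assignment circulating in a cycle moves deterministically along the cycle's edges using the labelled actions $\du_i$; a subsystem placed on node $\ds_i$ of $C_j$ deterministically advances to $\ds_{i+1}$, so the circulation $\alpha_j^{\circlearrowleft s-T}$ is exactly what the dynamics produce, and the invariant that the physical population at each node equals the summed assignment mass is preserved by one circulation step. Finally, the suffix counting constraints hold because $\sum_n \mathds{1}_{\cS_l}(\xd_n(s),\u_n(s)) = \sum_j \langle C_j, \alpha_j^{\circlearrowleft s-T}\rangle^{\cS_l} \leq \max_{s'} \sum_j \langle C_j, \alpha_j^{\circlearrowleft s'}\rangle^{\cS_l} = \maxcnt^{\cS_l}(\{C_j\},\{\alpha_j\}) \leq \cN_l$ by \eqref{eq:lp_c2}, which bounds the worst case over all circulation offsets and therefore covers every $s \geq T$.

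I expect the genuinely delicate point to be making rigorous the claim that \emph{integer} feasibility of the per-step physical assignment is maintained: the LP solution provides integer assignments $\alpha_j(i) \in \mathbb{N}$ and integer inputs, but one must confirm that at each abstract state the integer counts demanded by the circulated assignments can be met by distributing the integer number of subsystems actually present—equivalently, that the physical occupancy equals the prescribed sum at every node and every time, which the loop invariant delivers. I would phrase this as a single combined induction over $s \in \mathbb{N}$ maintaining the invariant ``physical count at $\ds$ equals the prescribed aggregate/assignment count at $\ds$,'' using determinism of $\Sigma_{DFTS}$ at each transition, and then read off constraint satisfaction from \eqref{eq:lp_c1} and \eqref{eq:lp_c2} in the two regimes.
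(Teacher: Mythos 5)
Your proposal is correct and follows essentially the same route as the paper's proof: the same two-phase induction (prefix invariant $w_\ds(s)=\sum_n \mathds{1}_{\{\ds\}}(\xd_n(s))$ established from \eqref{eq:aggregate_initial} and \eqref{eq:lp_c5}--\eqref{eq:lp_c3}, the interface condition from \eqref{eq:lp_c4}, the suffix invariant preserved by circulation via the predecessor-index observation, and constraint satisfaction read off from \eqref{eq:lp_c1} and \eqref{eq:lp_c2}). The ``delicate point'' you flag about integer realizability is handled in the paper exactly as you suggest, by noting that Line 2 amounts to partitioning the $w_\ds(s)$ subsystems at $\ds$ into bins of sizes $r_\ds^\du(s)$.
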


\begin{remark}
	The switching protocol in Algorithm \ref{alg:control_extraction} depends only on the current states $\{ \xd_n(s) \}_{n \in [N]}$ and on auxiliary information from the solution of \eqref{eq:lp}. However, for implementation central coordination is required at each time step. The coordination requirement can be relaxed by simulating the system up to time $T$ and assigning an individual prefix and suffix to each subsystem. Then decentralized open-loop controllers can be constructed that realize these individual prefix-suffix paths and mimic the performance of the centralized protocol without communication requirements.
\end{remark}

\section{Analysis of the Proposed Linear Program}
\label{sec:analysis}

Theorem \ref{thm:feas_to_sol} establishes that solving \eqref{eq:lp} provides a correct solution to a deterministic discrete instance of Problem \ref{prob:cont_state}. We now discuss completeness of the solution approach and specify the information that can be obtained from (in)feasibility of \eqref{eq:lp} when it is solved as an integer linear program, and when integer constraints are relaxed to obtain more efficiently solvable linear programs. Table \ref{tab:conclusions} summarizes the results of this section.

\begin{table*}
\centering
\caption{Given feasibility or infeasiblity of \eqref{eq:lp} in various configurations, this table lists the inferences that can be made according to results in Section \ref{sec:analysis}.}
\label{tab:conclusions}
\begin{tabular}{llccll}
	\toprule
		\textbf{LP/ILP} & \textbf{Feasible?}  & \textbf{Cycle set} 	&  $T$ 	& \textbf{Result} & \textbf{Why?}   \\
	\midrule 
		ILP 		    & Yes &	 		Any			& 		Any							& Solution  		& Theorem \ref{thm:feas_to_sol}    \\
		LP 				& Yes & 		Any				& 	Any								& Approximate solution & Theorem \ref{thm:rounding_bounds},\ref{thm:nonint_to_int} \\
		ILP 			& No & $\left\{ C : |C| \leq |\mathcal Q| \binom{|\mathcal Q|+N-1}{N}\right\}$ & $\binom{|\mathcal Q|+N-1}{N}$			& No solution exists & Theorem \ref{thm:periodic}  \\
		LP 		 		& No with $\epsilon$-relaxation & $\{ C : C \text{ simple} \}$  	& $\frac{(\diam(G)^2+1)N}{\epsilon}$	& No solution exists & Theorem \ref{thm:sufficient} 	\\
		\bottomrule
\end{tabular}
\end{table*}

\subsection{Graph Properties and Aggregate Controllability}
\label{sub:graph_properties}

We start by connecting properties of the induced graph $G = (\mathcal Q, \longrightarrow)$ to a notion of reachability in the aggregate dynamics \eqref{eq:agg_dyn_compact}. These results are used in the proof of Theorem \ref{thm:sufficient} below, but are also interesting in their own right since they admit a characterization of reachability in the aggregate picture.

We first define a concept of controllability on a subset of nodes $D \subset \mathcal Q$ for the aggregate dynamics $\Gamma$. Similarly as for controllability of linear systems on a subspace, controllability of $\Gamma$ on $D$ means that the system can be steered between any two aggregate states with support on $D$. 

\begin{definition}
  A subset of nodes $D \subset \mathcal Q$ is \textbf{completely controllable} for $\Gamma$ if for any two states $\mathbf{w}$, $\mathbf{w}'$ with support\footnote{A state $\mathbf{w}$ having support on $D$ means that $w_\ds = 0$ for $\ds \not \in D$.} on $D$ such that $\| \mathbf{w} \|_1 = \| \mathbf{w}' \|_1$, there exists a finite horizon $T$, states $\mathbf{w}(s)$, and controls $\mathbf{r} (s)$ satisfying \eqref{eq:cst_pos}, such that $\mathbf{w}(0) = \mathbf{w}$, $\mathbf{w}(T) = \mathbf{w}'$, and $\mathbf{w}(s+1) = B \mathbf{r}(s)$ for $s \in [T]$.
\end{definition}

\begin{theorem}
\label{thm:aperiodic}
  If a strongly connected component $D$ is aperiodic, it is completely controllable for $\Gamma$ in \eqref{eq:agg_dyn_compact}.
\end{theorem}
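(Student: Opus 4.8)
The plan is to reduce the statement to a purely graph-theoretic fact about walk lengths in strongly connected aperiodic graphs, and then to route individual subsystems (``tokens'') along walks of a common length. First I would establish the following lemma: if the subgraph induced by $D$ is strongly connected and aperiodic, then there is a finite $T_0$ such that for every ordered pair of nodes $\ds, \ds' \in D$ and every $T \geq T_0$ there exists a walk from $\ds$ to $\ds'$ of length \emph{exactly} $T$ lying entirely in $D$. This is the aggregate-level analogue of a matrix being primitive, and it is what makes aperiodicity the right hypothesis.

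To prove the lemma I would argue via numerical semigroups. Fixing a node $\ds \in D$, the set of lengths of closed walks at $\ds$ is closed under addition (concatenation of closed walks), and because $D$ is a single strongly connected component its period---the gcd of all its cycle lengths, equivalently of the closed-walk lengths at $\ds$---equals one by aperiodicity. A subset of $\mathbb{N}$ that is closed under addition and has gcd one contains all sufficiently large integers (the Chicken McNugget / Frobenius property already invoked for the coprime-cycle proposition above), so there is $m_\ds$ with every $n \geq m_\ds$ realizable as a closed-walk length at $\ds$. For a general pair $(\ds, \ds')$ I would fix any walk from $\ds$ to $\ds'$ of some length $\ell_{\ds \ds'}$ (which exists by strong connectivity) and concatenate it with a closed walk at $\ds'$ of length $T - \ell_{\ds \ds'}$ whenever $T - \ell_{\ds \ds'} \geq m_{\ds'}$; taking $T_0 = \max_{\ds, \ds' \in D}(\ell_{\ds \ds'} + m_{\ds'})$, which is finite since $D$ is finite, establishes the lemma.

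With the lemma in hand the controllability argument is essentially bookkeeping. Given $\mathbf{w}, \mathbf{w}'$ supported on $D$ with $\|\mathbf{w}\|_1 = \|\mathbf{w}'\|_1 = N$, I would read the entries of $\mathbf{w}$ and $\mathbf{w}'$ as specifying, with multiplicity, a multiset of $N$ source nodes and a multiset of $N$ target nodes; fixing any bijection between these two multisets assigns to each of the $N$ tokens a source and a target. Fixing any $T \geq T_0$, the lemma supplies for each token a length-$T$ walk in $D$ from its source to its target, and since every edge $\ds \overset{\du}{\longrightarrow} \ds'$ of $G$ carries an action, each walk prescribes at every step an action for the token occupying a node. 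Letting $r_\ds^\du(s)$ be the number of tokens located at $\ds$ and taking action $\du$ at step $s$, and $\mathbf{w}(s)$ the induced occupancy, yields by construction controls with $r_\ds^\du(s) \geq 0$ and $\sum_{\du} r_\ds^\du(s) = w_\ds(s)$, the dynamics $\mathbf{w}(s+1) = B\mathbf{r}(s)$, and the boundary conditions $\mathbf{w}(0) = \mathbf{w}$, $\mathbf{w}(T) = \mathbf{w}'$, all with support remaining on $D$.

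The main obstacle is the number-theoretic lemma rather than the routing: care is needed in asserting that the period of a strongly connected component is well defined independently of the chosen node and coincides with the gcd of closed-walk lengths at that node, and in invoking the fact that an additively closed subset of $\mathbb{N}$ with gcd one is cofinite. Once the uniform walk-length threshold $T_0$ is available, finiteness of the horizon follows immediately from finiteness of $D$, and the remaining verification of the constraints \eqref{eq:cst_pos} and the dynamics \eqref{eq:agg_dyn_compact} is routine.
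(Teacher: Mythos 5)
Your proposal is correct and follows essentially the same route as the paper: the paper likewise reduces the claim to primitivity of the incidence matrix of an aperiodic strongly connected graph (citing this fact rather than deriving it via numerical semigroups as you do), obtains a common walk length $T$ for all node pairs, and then realizes the transfer by sending $p_{jl}$ subsystems along length-$T$ paths with $\sum_l p_{jl} = w_j$ and $\sum_j p_{jl} = w'_l$ --- which is exactly the transport plan induced by your multiset bijection. The only difference is that you prove the uniform-walk-length lemma from first principles where the paper invokes it as known.
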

\begin{proof}
  It is known that the incidence matrix $A_D$ of an aperiodic, strongly connected graph $D$ is primitive \cite{Seneta:2006wd}, i.e., there exists an integer $T$ such that all entries of $A_D^T$ are positive. This means that for each node pair $(\ds_j, \ds_l)$, there exists a path of length $T$ that connects them. Thus, by sending $p_{jl}$ systems along paths $\ds_j \rightarrow \ds_l$ such that $\sum_l p_{jl} = \mathbf{w}_j$ and $\sum_j p_{jl} = \mathbf{w}_l'$, the state at time $T$ is equal to $\mathbf{w}'$. We can define aggregate controls $\mathbf{r}(s)$ that realize these paths by switching the correct number of systems at each node over time.
\end{proof}

In the case of periodicity, it is not possible to reach every state since the parity structure of the initial state is preserved along the trajectories. However, within this restriction, the system is still controllable in a certain sense. If a strongly connected component $D$ has period $P$, its nodes can be labeled with a function $L_P : D \rightarrow [P]$ such that a node $\ds_1$ with $L_P(\ds_1) = p$ only has edges to nodes $\ds_2$ with $L_P(\ds_2) = (p+1) \mod P$. Let $D_0, \ldots, D_{P-1}$ be the subsets of nodes induced by the equivalence relation $\ds_1 \sim \ds_2$ iff $L_P(\ds_1) = L_P(\ds_2)$.

\begin{corollary}
\label{cor:periodicity}
   The subsets of nodes $D_p$ for $p \in [P]$ as constructed above are completely controllable for $\Gamma$.
\end{corollary}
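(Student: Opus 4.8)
The plan is to reduce the statement to an application of Theorem~\ref{thm:aperiodic} by examining the structure of a periodic strongly connected component more carefully. The key observation is that if $D$ has period $P$, and we partition its nodes into the cyclic classes $D_0, \ldots, D_{P-1}$ according to the labeling $L_P$, then a single step of the dynamics $\Gamma$ sends mass from class $D_p$ entirely into class $D_{(p+1) \bmod P}$. Consequently, if we take $P$ steps of the original dynamics, mass that starts in $D_p$ returns to $D_p$. This suggests viewing the $P$-step dynamics restricted to each class $D_p$ as a new aggregate system on a smaller graph.

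First I would make precise the induced graph on each class. Define a graph $D_p^{(P)}$ whose node set is $D_p$ and which has an edge from $\ds$ to $\ds'$ (both in $D_p$) whenever there is a path of length exactly $P$ from $\ds$ to $\ds'$ in $D$. The first claim to establish is that this $P$-step graph $D_p^{(P)}$ is itself strongly connected and aperiodic. Strong connectivity follows because $D$ is strongly connected: any two nodes in $D_p$ are joined by a path in $D$, whose length must be a multiple of $P$ (since all paths between two nodes in the same cyclic class have length divisible by $P$, a consequence of the labeling property), hence decomposes into $P$-step hops staying within $D_p$. Aperiodicity of $D_p^{(P)}$ follows because the cycle lengths in $D_p^{(P)}$ are exactly the cycle lengths in $D$ through that class divided by $P$, and since $P = \gcd$ of all cycle lengths in $D$, the resulting lengths have greatest common divisor one.

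Next I would argue that complete controllability of $D_p$ for $\Gamma$ is equivalent to complete controllability of $D_p$ for the $P$-step aggregate dynamics $\Gamma^{(P)}$ associated to the graph $D_p^{(P)}$. Any two aggregate states $\mathbf{w}, \mathbf{w}'$ supported on $D_p$ with $\| \mathbf{w} \|_1 = \| \mathbf{w}' \|_1$ can be steered into one another in $\Gamma^{(P)}$ precisely when $D_p^{(P)}$ is completely controllable for its induced aggregate dynamics, which by Theorem~\ref{thm:aperiodic} holds because $D_p^{(P)}$ is strongly connected and aperiodic. Each $P$-step transition in $\Gamma^{(P)}$ unrolls into $P$ genuine one-step transitions of $\Gamma$ through the classes $D_{p}, D_{p+1}, \ldots$, and the intermediate aggregate states and controls obtained this way automatically satisfy the positivity and conservation constraints \eqref{eq:cst_pos}, since they are realized by routing the prescribed numbers of subsystems along concrete length-$P$ paths in $D$. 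Thus a controllability witness for $\Gamma^{(P)}$ lifts directly to one for $\Gamma$ with horizon $PT$.

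The main obstacle I anticipate is the bookkeeping in the two graph-theoretic claims of the second paragraph, namely verifying that $D_p^{(P)}$ is strongly connected and, crucially, aperiodic. The aperiodicity argument is the delicate one: one must show that collapsing $P$-step paths into single edges reduces the period from $P$ to $1$, which relies on the fact that $P$ is exactly the gcd of all cycle lengths in $D$ (not merely a common divisor) together with the bijective correspondence between cycles through $D_p$ in $D$ and cycles in $D_p^{(P)}$. Once these structural facts are in hand, the corollary follows almost immediately by invoking Theorem~\ref{thm:aperiodic} on each reduced graph and lifting the controls.
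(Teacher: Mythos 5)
Your proposal is correct and takes essentially the same route as the paper: the paper's proof likewise connects the nodes of each $D_p$ by edges corresponding to length-$P$ paths in $D$, notes the resulting graphs are aperiodic, and invokes Theorem~\ref{thm:aperiodic}. Your write-up simply fills in the details (strong connectivity and aperiodicity of the $P$-step graph, and the unrolling of $P$-step controls into one-step controls) that the paper leaves implicit.
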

\begin{proof}
  We can connect the nodes in $D_i$ with edges that correspond to paths of length $P$ in $D$. By construction, the resulting graphs are aperiodic, so the previous result applies.
\end{proof}

It is well known that if a discrete-time linear system is completely controllable, its reachable set for a time $s \geq n$, where $n$ is the system dimension, is the entire state-space; otherwise, it is an affine subspace that depends on $s$ and the initial state. The preceding results show a corresponding result for the aggregate dynamics $\Gamma$---a linear system with input constraints evolving on an integer lattice. The controllability in this setting is entirely characterized by the properties of the graph representing the abstraction. For the controllable case (aperiodic graph), the reachable set of $\Gamma$ at a time $s \geq T$, where $T$ is the controllability horizon from the proof of Theorem \ref{thm:aperiodic}, is the set of all positive integer-valued vectors $\mathbf{w}$ that satisfy $\| \mathbf{w} \|_1 = N$. In case of periodicity with a period $P$, it is the intersection of this lattice set with an affine subspace that depends on $(s \mod P)$ and the parity structure of the initial state.

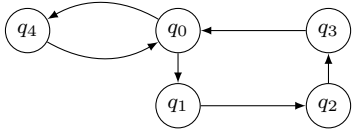
\begin{figure}
  \begin{center}
    \footnotesize
    \begin{tikzpicture}[
      every node/.style = {draw, circle},
      node distance=1.5cm]
      \node (x1) {$\ds_0$};
      \node[below of=x1, node distance=1cm] (x2) {$\ds_1$};
      \node[node distance=2cm, right of=x2] (x3) {$\ds_2$};
      \node[above of=x3, node distance=1cm] (x4) {$\ds_3$};
      \node[node distance=2cm, left of=x1] (x5) {$\ds_4$};
      \draw[-latex] (x1) -- (x2);
      \draw[-latex] (x2) -- (x3);
      \draw[-latex] (x3) -- (x4);
      \draw[-latex] (x4) -- (x1);
      \draw[-latex] (x1) to[bend right] (x5);
      \draw[-latex] (x5) to[bend right] (x1);
    \end{tikzpicture}
  \end{center}
  \caption{A graph with two cycles and period 2. The cycles are $\{ \ds_0, \ds_1, \ds_2, \ds_3 \}$ and $\{ \ds_0, \ds_4 \}$ of length 2 and 4, respectively. Since $\text{gcd}(2,4) = 2$, the period of the graph is 2.}
  \label{fig:per_initialcond}
\end{figure}

\begin{example}
  Consider the graph in Figure \ref{fig:per_initialcond}, it consists of two cycles of length 2 and 4, so the period is 2. The two equivalence classes induced by the periodicity are $\{ \ds_0, \ds_2 \}$ and $\{ \ds_1, \ds_3, \ds_4\}$; all subsystems in the first class move to the second, and vice versa. Consider an initial state such that 
  \begin{equation*}
    \mathbf{w}_{\ds_0}(0) + \mathbf{w}_{\ds_2}(0) = c_0, \quad \mathbf{w}_{\ds_1}(0) + \mathbf{w}_{\ds_3}(0) + \mathbf{w}_{\ds_4}(0) = c_1.
  \end{equation*}
  Due to periodicity, this parity structure is preserved in the sense that
  \begin{equation*}
  \begin{aligned}
      & \mathbf{w}_{\ds_0}(s) + \mathbf{w}_{\ds_2}(s) = \begin{cases}
        c_0, & \text{if $s$ is even}, \\
        c_1, & \text{if $s$ is odd},
      \end{cases} \\
  \end{aligned}
  \end{equation*}
  and conversely for $\mathbf{w}_{\ds_1}(s) + \mathbf{w}_{\ds_3}(s) + \mathbf{w}_{\ds_4}(s)$. However, within this restriction Corollary \ref{cor:periodicity} implies that any assignment is reachable when $s$ is large enough.
\end{example}

\subsection{Converse Results}

The first result states that the restriction to prefix-suffix form is without loss of generality, provided that the prefix horizon is sufficiently large, and that the suffix cycle set is sufficiently rich.

\begin{theorem}
\label{thm:periodic}
	Suppose that there is a solution to the instance \eqref{eq:graph_instance}.
	Then there is a feasible solution to \eqref{eq:lp} with a prefix length $T$ of at most $\binom{|\mathcal Q|+N-1}{N}$ and a suffix consisting of cycles of length at most $|\mathcal Q|\binom{|\mathcal Q|+N-1}{N}$.
\end{theorem}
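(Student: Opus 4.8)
The plan is to exploit the permutation symmetry to collapse any solution to a trajectory in the \emph{finite} aggregate state space $\mathcal W$, extract a repeated aggregate state by pigeonhole to obtain a periodic pattern, and then decompose the periodic token flow into short graph cycles that realize the suffix.

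First I would observe that any solution of \eqref{eq:graph_instance} induces, via the counts \eqref{eq:aggregate_initial}, an aggregate trajectory $\mathbf{w}(0), \mathbf{w}(1), \ldots$ together with aggregate inputs $\mathbf{r}(s)$ that by construction satisfy the dynamic constraints \eqref{eq:lp_c3}, \eqref{eq:lp_c5}, \eqref{eq:lp_c6}; and because the original trajectory meets \eqref{eq:prob_counting_constr} at every step, the aggregate inputs also satisfy the prefix counting inequality \eqref{eq:lp_c1}. Since $\mathbf{w}(s) \in \mathcal W = \{ \mathbf{w} \in \mathbb{N}^{|\mathcal Q|} : \| \mathbf{w} \|_1 = N \}$ and $|\mathcal W| = \binom{|\mathcal Q|+N-1}{N} =: M$ (multisets of size $N$ over $|\mathcal Q|$ states, i.e.\ stars-and-bars), applying pigeonhole to $\mathbf{w}(0), \ldots, \mathbf{w}(M)$ yields indices $0 \le s_1 < s_2 \le M$ with $\mathbf{w}(s_1) = \mathbf{w}(s_2)$. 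I then take the prefix horizon $T = s_1 \le M$ and period $p = s_2 - s_1 \le M$, and let the first $T$ steps of the original solution serve verbatim as the prefix; the connection constraint \eqref{eq:lp_c4} will link $\mathbf{w}(T) = \mathbf{w}(s_1)$ to the suffix.

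Next I would build the suffix cycles from the period-$p$ window $[s_1, s_2]$. Each subsystem traces a walk of length $p$ in $G$, and because $\mathbf{w}(s_1) = \mathbf{w}(s_2)$ the multiset of terminal nodes equals the multiset of initial nodes; hence there is a bijection on $[N]$ matching each walk's endpoint to some walk's start. Decomposing this bijection into cyclic orbits and concatenating the corresponding period-walks along each orbit produces closed walks $\tilde C_i$ in $G$ whose lengths are multiples of $p$, and I endow each with the assignment $\assgn_i$ placing a single token at every position whose \emph{phase} (index modulo $p$) is zero. A direct check then shows that rigidly circulating these assignments, in the sense of $\assgn_i^{\circlearrowleft s}$, reproduces at circulation time $s$ exactly the original configuration --- both the occupied nodes and the actions labeling their outgoing edges --- at real time $s_1 + (s \bmod p)$. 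Consequently \eqref{eq:lp_c4} holds at $s = 0$, and since every configuration in the periodic window obeys the counting bound, the maximal joint count constraint \eqref{eq:lp_c2} holds as well.

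The one remaining point, and the step I expect to require the most care, is bounding the cycle lengths by $|\mathcal Q| M$ rather than the naive $N M$, since a single orbit of the matching bijection can a priori chain up to $N$ period-walks. Here I would argue by a second pigeonhole: all phase-zero positions of a closed walk $\tilde C_i$ carry nodes of $\mathcal Q$, so if $\tilde C_i$ has more than $|\mathcal Q|$ such positions, two of them must share the same node at the same (zero) phase, and the walk can be split there into two strictly shorter closed walks, each retaining the phase-aligned structure and the phase-zero assignment. Iterating until every closed walk has at most $|\mathcal Q|$ phase-zero positions caps each length at $|\mathcal Q| \cdot p \le |\mathcal Q| M$. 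Because each split occurs at a coinciding node and preserves the reproduced configuration, the constraints \eqref{eq:lp_c2} and \eqref{eq:lp_c4} verified above remain intact, yielding a feasible prefix-suffix solution of \eqref{eq:lp} within the stated bounds.
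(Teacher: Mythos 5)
Your proof is correct, and its overall skeleton coincides with the paper's: collapse the solution to an aggregate trajectory in $\mathcal W$, apply pigeonhole over the $\binom{|\mathcal Q|+N-1}{N}$ possible aggregate states to find $T_1<T_2$ with $\mathbf{w}(T_1)=\mathbf{w}(T_2)$, use the initial segment verbatim as the prefix, and turn the periodic window into suffix cycles. Where you genuinely diverge is in the cycle-extraction step. The paper builds a time-unrolled auxiliary graph $H$ with $|\mathcal Q|(T_2-T_1)$ nodes, observes that the solution induces a balanced edge flow on $H$, and invokes the flow decomposition theorem of \cite{Ahuja:1993uh} to obtain simple cycles in $H$ (hence of length at most $|V_H|$), which it then projects back onto $G$. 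You instead argue by hand: match terminal nodes to initial nodes of the length-$(T_2-T_1)$ walks via a bijection, concatenate walks along each orbit into closed walks whose lengths are multiples of the period, and then cap the lengths by a second pigeonhole that splits any walk with more than $|\mathcal Q|$ phase-zero positions at a repeated node. Your splitting yields walks that need not be simple in $H$ (they may revisit a node at a nonzero phase), but the phase-zero count alone already gives the same bound $|\mathcal Q|(T_2-T_1)\leq|\mathcal Q|\binom{|\mathcal Q|+N-1}{N}$; and your key invariant --- that a phase-zero unit assignment circulated by $s$ occupies exactly the phase-$(s \bmod p)$ positions, so the union of occupied state-action pairs at joint circulation time $s$ is precisely the original configuration at time $T_1+(s\bmod p)$ regardless of how the walks are split --- correctly delivers \eqref{eq:lp_c2} and \eqref{eq:lp_c4}. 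The paper's route is shorter because the balancedness, integrality, and length bound all come packaged in the cited theorem; yours is self-contained and makes the correspondence between suffix tokens and individual subsystems explicit, which is arguably more transparent about why the counting performance is exactly reproduced.
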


The upper bounds in Theorem \ref{thm:periodic} yield large feasibility problems; next we present a result that restricts the analysis to much smaller quantities. The key observation is that an assignment can be ``averaged'' over its cycle without violating any counting bounds. The averaging idea is illustrated in Fig. \ref{fig:averaged} and captured in the following definition.

\begin{definition}
For a cycle $\cycleC$, a graph period $P$ dividing $|\cycleC|$, and total weights $N_0, \ldots, N_{P-1}$, the $P$-average assignment $\bar \assgn_{\{ N_p \}_{p \in [P]}}$ is defined as
\begin{equation*}
 	\bar \assgn_{\{ N_p \}_{p \in [P]}}(i) = \frac{N_{(i \mod P)}}{|\cycleC|/P}, \quad \forall i \in [|\cycleC|].
\end{equation*} 
\end{definition}

\begin{figure}
	\begin{center}
		\begin{tikzpicture}[
      every node/.style = {node distance=2.5em, minimum width=2.2em, minimum height=1.5em}
    ]
    \node [minimum width=4em] (I) {$i$};

    \node [right of=I, node distance=3em] (i11) {$0$};
    \node [right of=i11] (i12) {$1$};
    \node [right of=i12] (i13) {$2$};
    \node [right of=i13] (i14) {$3$};
    \node [right of=i14] (i15) {$4$};
    \node [right of=i15] (i16) {$5$};

    \node [below of = I, minimum width=4em, node distance = 1.8em] (a1) {$\alpha$};

    \node [right of=a1, node distance=3em] (a11) {$2$};
    \node [right of=a11] (a12) {$0$};
    \node [right of=a12] (a13) {$1$};
    \node [right of=a13] (a14) {$4$};
    \node [right of=a14] (a15) {$3$};
    \node [right of=a15] (a16) {$2$};

    \node [below of=a1, minimum width=4em, node distance = 1.2em] (a2) {$\bar \alpha_{12}$};

    \node [right of=a2, node distance=3em]  (a21) {$2$};
    \node [right of=a21] (a22) {$2$};
    \node [right of=a22] (a23) {$2$};
    \node [right of=a23] (a24) {$2$};
    \node [right of=a24] (a25) {$2$};
    \node [right of=a25] (a26) {$2$};

    \node [below of=a2, minimum width=4em, node distance = 1.2em] (a3) {$\bar \alpha_{\{3,9\}}$};

    \node [right of=a3, node distance=3em]  (a31) {$1$};
    \node [right of=a31] (a32) {$3$};
    \node [right of=a32] (a33) {$1$};
    \node [right of=a33] (a34) {$3$};
    \node [right of=a34] (a35) {$1$};
    \node [right of=a35] (a36) {$3$};

    \node [below of=a3, minimum width=4em, node distance = 1.2em] (a4) {$\bar \alpha_{\{2,4,6\} }$};

    \node [right of=a4, node distance=3em]  (a41) {$1$};
    \node [right of=a41] (a42) {$2$};
    \node [right of=a42] (a43) {$3$};
    \node [right of=a43] (a44) {$1$};
    \node [right of=a44] (a45) {$2$};
    \node [right of=a45] (a46) {$3$};

    \draw (I.north east) -- (a4.south east);
    \draw (I.south west) -- (i16.south east);

\end{tikzpicture}
	\end{center}
  \vspace{-4mm}
	\caption{Illustration of a non-average assignment $\alpha : [6] \rightarrow \mathbb{N}$, and three averaged assignment with periods 1 (aperiodic), 2, and 3. All assignments have total weight $12$, i.e. $\| \alpha \|_1 = \| \bar \alpha_{12} \|_1= \| \bar \alpha_{\{3,6\}} \|_1= \| \bar \alpha_{\{2,4,6\}} \|_1 = 12$. Note that average assignments are not necessarily integral.}
	\label{fig:averaged}
\end{figure}

In the case $P=1$, this assignment has a constant $\cS$-count for any cycle, more precisely;
\begin{equation}
\label{eq:average_assignment}
\begin{aligned}
	\left\langle C, \bar \alpha_{N_0}^{\circlearrowleft s} \right\rangle^X = 
	& = \frac{N_0}{|\cycleC|} \left\langle C, \mathbf{1} \right\rangle^X
\end{aligned}
\end{equation}
for all $s$, where $\left\langle C, \mathbf{1} \right\rangle^X$ simply counts the number of node-action pairs in $\cycleC$ that are in the counting set ${\cS}$. As a consequence, for any assignment $\assgn$, it holds that
\begin{equation}
	\label{eq:average_vs_other}
	\maxcnt^{\cS} \left(\cycleC, \bar \alpha_{ \| \alpha \|_1 } \right) \leq \maxcnt^{\cS}(\cycleC, \assgn).
\end{equation}
In other words, if the averaged assignment for a given total weight does not satisfy counting bounds, no assignment does. 

A more general result (Lemma \ref{lemma:simple} in the appendix) shows that a cyclic integer suffix can be mapped into a (possibly non-integer) suffix defined on simple cycles via averaging. It turns out that these averaged assignments can be reached by an infinitesimal relaxation of the counting bounds since they preserve the parity structure of the initial condition. The controllability results in Section \ref{sub:graph_properties} do not take counting constraints into account; when such constraints are present they may conflict with controllability. Nevertheless, by introducing an arbitrarily small relaxation of the counting constraints we can ensure that controllability is preserved. The magnitude of the relaxation does however impact the worst-case time required to control the aggregate system to a new state.

\begin{theorem}
\label{thm:sufficient}
	Suppose that there exists an integer solution to the instance $(N, \Sigma_{DFTS}, \{ \xd_n(0) \}_{n \in [N]} ,\{\cS_l, \cN_l \}_{l \in [L]})$. Let $\diam(G)$ be the diameter of the induced graph. Then, if every counting constraint $(\cS_l, \cN_l)$ is relaxed with an absolute factor $\epsilon$ to $(\cS_l, \cN_l + \epsilon)$, the non-integer version of the linear program \eqref{eq:lp} with prefix horizon $\frac{(\diam(G)^2+1)N}{\epsilon}$ and the cycle set consisting of all simple cycles, is feasible.
\end{theorem}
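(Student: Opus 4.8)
The plan is to take the hypothesized integer solution, compress it into a prefix--suffix solution whose suffix lives on simple cycles, and then argue that the only genuine work left is to steer the aggregate state from $\mathbf{w}(0)$ to the suffix configuration within the allotted horizon while keeping every count below $\cN_l + \epsilon$. First I would invoke Theorem \ref{thm:periodic} to obtain an integer prefix--suffix solution, and then Lemma \ref{lemma:simple} to rewrite its (generally non-simple) suffix as a collection of (possibly non-integer) assignments $\{\assgn_j\}$ on \emph{simple} cycles obtained by averaging. By the averaging inequality \eqref{eq:average_vs_other} this simple-cycle suffix still meets the unrelaxed suffix constraints \eqref{eq:lp_c2}, so the suffix part of \eqref{eq:lp} is already satisfied without spending the relaxation. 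Crucially, averaging is performed per periodicity class, so the induced target configuration $\mathbf{w}^{\mathrm{tgt}}$ (defined through \eqref{eq:lp_c4}) carries exactly the same mass on each periodicity class $D_p$ as $\mathbf{w}(0)$; this is what makes it reachable via Corollary \ref{cor:periodicity}. It then remains to build a counting-feasible prefix from $\mathbf{w}(0)$ to $\mathbf{w}^{\mathrm{tgt}}$.

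For the prefix I would exploit the linearity of the aggregate dynamics \eqref{eq:aggregate_dynamics}. Let $\mathbf{w}^{\mathrm{src}}(\cdot)$ be the (eventually periodic) counting-feasible aggregate trajectory of the given solution and $\mathbf{w}^{\mathrm{tgt}}(\cdot)$ the trajectory of the averaged simple-cycle suffix; both are valid and satisfy every count, hence for each fixed $\lambda \in [0,1]$ the convex combination $(1-\lambda)\mathbf{w}^{\mathrm{src}}(\cdot) + \lambda \mathbf{w}^{\mathrm{tgt}}(\cdot)$, driven by the corresponding convex combination of inputs, is again a valid counting-feasible trajectory, since positivity, the mass constraint \eqref{eq:lp_c5}, and each linear count \eqref{eq:lp_c1} are all preserved under convex combinations. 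The idea is then to ramp $\lambda$ from $0$ to $1$ in $K \approx N/\epsilon$ increments: while $\lambda$ is held fixed the system follows its feasible $\lambda$-trajectory, and to pass from the $\lambda_k$-trajectory to the $\lambda_{k+1}$-trajectory I would use a controllability maneuver from Theorem \ref{thm:aperiodic} / Corollary \ref{cor:periodicity}. Consecutive trajectories differ in $1$-norm by at most $2N\,\Delta\lambda$, so at most $N\,\Delta\lambda$ units of mass must be relocated, and because they share the same per-class masses such a maneuver exists of length at most $\diam(G)^2+1$, the controllability horizon of the aperiodic quotient graph expressed through the diameter (index-of-primitivity bound). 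Choosing $\Delta\lambda \le \epsilon/N$ keeps the mass displaced per maneuver at most $\epsilon$, and multiplying the $N/\epsilon$ increments by the per-maneuver length $\diam(G)^2+1$ gives exactly the prefix horizon $\frac{(\diam(G)^2+1)N}{\epsilon}$.

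The main obstacle is to certify that the relaxed counts \eqref{eq:lp_c1} hold at \emph{every} step of a maneuver, not merely at its endpoints, because the aggregate dynamics force all mass to move simultaneously and one cannot simply ``freeze'' the bulk while shuffling the small discrepancy. I would handle this by coupling the two maneuvers so that the $N-O(\epsilon)$ of common mass is routed identically along a shared counting-feasible path while only the $O(\epsilon)$ difference mass is rerouted along the extra paths supplied by primitivity; the common part then holds each count at its feasible value and the difference part inflates it by at most $\epsilon$. Making this coupling precise, extracting the $\diam(G)^2+1$ bound on the primitivity exponent, and threading the argument through the periodic decomposition $D_0,\ldots,D_{P-1}$ of Corollary \ref{cor:periodicity} is the technical heart of the proof; the remaining bookkeeping---summing counting increments and matching \eqref{eq:lp_c4} at time $T$---is routine.
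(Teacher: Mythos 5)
Your proposal is correct and follows essentially the same route as the paper's proof: Theorem \ref{thm:periodic} plus Lemma \ref{lemma:simple} to obtain a parity-preserving averaged suffix on simple cycles, followed by an incremental transfer of $\epsilon$-sized chunks of mass from the original feasible trajectory to the averaged target, with the in-transit mass (bounded by $\epsilon$ in $1$-norm) accounting for the relaxation and the $(\diam(G)^2+1)$ per-chunk transport bound yielding the stated horizon. Your ``convex combination with ramped $\lambda$'' framing is just a repackaging of the paper's $\alpha(s)+\beta(s)+\gamma(s)$ decomposition, and your identified ``main obstacle'' is resolved exactly as in the paper.
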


\subsection{Rounding of Non-Integer Solution}
\label{sec:rounding}

If the linear program \eqref{eq:lp} is too large to be solvable as an integer program it may still be possible to solve it using a standard LP solver and round the result to obtain an integer solution. One option is to use a probabilistic discrepancy-minimizing rounding procedure (e.g. \cite{Lovett:2012bt}) which allows specified relationships to be preserved after rounding; thus introducing a counting constraint violation but maintaining the validity of the solution (e.g. dynamics, prefix-suffix connection). Here we instead propose a heuristic to round only the suffix part of the solution and analyze its performance under certain assumptions on cycle structure. For a given integer suffix, the prefix part of \eqref{eq:lp} can be solved to find a matching prefix---a problem that is typically much smaller.

Counting constraints may be violated as a result of the rounding; we give bounds for the magnitude of the worst-case violation. Given an aperiodic graph and (non-integer) cycle-assignment pairs $\{ \cycleC_j, \assgn_j \}_{j \in [J]}$ that satisfy the counting constraints, we propose the following rounding procedure:

\vspace{1mm}

\noindent \textbf{Step 1}: Assign an integer number of subsystems to each cycle that is close to the original weight, i.e., find integers $N_j$ s.t. $\sum_{j \in [J]} N_j = \sum_{j \in [J]} \| \assgn_j \|_1$ and s.t. $N_j$ is close to $\| \alpha_j \|_1$. This can easily be achieved in a way s.t. $\left| N_j - \| \assgn_j \|_1 \right| \leq 1$.

\noindent \textbf{Step 2}: Find individual integer assignments with total weight $N_j$ that are close to the average assignments, i.e., find integer assignments $\tilde \assgn_j$ s.t. $\| \tilde \assgn_j \|_1 = N_j$ and s.t. $\assgn_j$ is close to $\bar \assgn_{N_j}$.

  To this end, we let $\kappa_1$ and $\kappa_2$ be the quotient and remainder when dividing $N_j$ by $|\cycleC_j|$, i.e. $\kappa_1 = \lfloor N_j/|C_j| \rfloor$ and $\kappa_2 = N_j \mod |C_j|$. Then let $d = |\cycleC_j|/\kappa_2$. We consider the pseudo-periodic assignment $\tilde \assgn_j$ defined as follows:
  \begin{equation}
  \label{eq:periodic_assignment}
  \begin{aligned}
      \tilde \assgn_j(i) & =  \kappa_1 + 1, \text{ for } i \in  \left\{ \left \lfloor d k \right \rfloor \right\}_{k \in [\kappa_2]},\\
      \tilde \assgn_j(i) & =  \kappa_1, \text{ otherwise}.
  \end{aligned}
  \end{equation}
  This assignment is pseudo-periodic in the sense that the 1's are evenly distributed with distance $d$ before they are rounded to integer points, as illustrated in Fig. \ref{fig:pseudo-periodic}. It is easy to see that $\| \tilde \assgn_j \|_1 = N_j$.

\noindent \textbf{Step 3}: Find a prefix using \eqref{eq:lp} that steers to $\{ \cycleC_j, \tilde \assgn_j \}_{j \in [J]}$.

\vspace{1mm}




If the graph is periodic the rounding can be done so as to preserve the parity structure of the original solution and guarantee reachability, but the details are omitted here. Assuming aperiodicity, we want to compare the counting bounds for the rounded solution with the counting bounds for the original solution. First we give a result that assumes a certain structure of a cycle, namely that all nodes that contribute to the counting set $X$ are placed in sequence in the cycle. Such structure is often present in practical examples, as connected counting regions tend to lead to consecutive parts in cycles.

\begin{proposition}
\label{prop:pseudo_periodic}
	Let $R_\cycleC^\cS = \left\langle C, \mathbf{1} \right\rangle^X$ be the number of nodes in a cycle $\cycleC$ that contribute to $X$-counting. If all such nodes are consecutive, i.e., $(\ds_{i}, \du_{i}) \in {\cS}$ for $i \in [R_\cycleC^{\cS}]$ and $(\ds_{j}, \du_{j}) \not \in \cS$ for $j \in [|\cycleC|]\setminus[R_\cycleC^{\cS}]$, then the rounding procedure \eqref{eq:periodic_assignment} satisfies
	\begin{equation*}
		\maxcnt^X(\cycleC_j,\tilde \assgn_j) \leq \maxcnt^X(\cycleC_j, \bar \assgn_{N_j}) + 1.
	\end{equation*}
\end{proposition}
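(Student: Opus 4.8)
The plan is to reduce the claim to a purely combinatorial statement about how many of the ``spikes'' of the pseudo-periodic assignment $\tilde\assgn_j$ can fall inside a window of $R_\cycleC^\cS$ consecutive positions, and then to bound this count against the constant value attained by the average assignment.

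First I would evaluate the right-hand side. Since $\bar\assgn_{N_j}$ is the $P=1$ average assignment it is constant, so by \eqref{eq:average_assignment} its $\cS$-count is the same at every shift and $\maxcnt^\cS(\cycleC_j,\bar\assgn_{N_j}) = \frac{N_j}{|\cycleC_j|}\langle\cycleC_j,\mathbf{1}\rangle^\cS = \frac{N_j}{|\cycleC_j|}R_\cycleC^\cS$. Writing $N_j = \kappa_1|\cycleC_j| + \kappa_2$ and recalling $d = |\cycleC_j|/\kappa_2$, this equals $\kappa_1 R_\cycleC^\cS + R_\cycleC^\cS/d$.

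Next I would compute the left-hand side using the consecutiveness hypothesis. Because the $\cS$-contributing nodes occupy positions $0,\ldots,R_\cycleC^\cS-1$, the $\cS$-count at shift $s$, namely $\langle\cycleC_j,\tilde\assgn_j^{\circlearrowleft s}\rangle^\cS$, is exactly the sum of $\tilde\assgn_j$ over a contiguous (cyclic) window $W$ of length $R_\cycleC^\cS$, and as $s$ ranges over $\mathbb{N}$ this window ranges over all such windows. Since $\tilde\assgn_j$ equals $\kappa_1$ everywhere except at the $\kappa_2$ spike positions $\{\lfloor dk\rfloor\}_{k\in[\kappa_2]}$, where it is $\kappa_1+1$, the window sum is $\kappa_1 R_\cycleC^\cS + m(W)$, with $m(W)$ the number of spikes in $W$. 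Hence $\maxcnt^\cS(\cycleC_j,\tilde\assgn_j) = \kappa_1 R_\cycleC^\cS + \max_W m(W)$.

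The crux is then to show $\max_W m(W) \le R_\cycleC^\cS/d + 1$. Unrolling the cycle (the spike set is $|\cycleC_j|$-periodic because $d\kappa_2 = |\cycleC_j|$), a window $[a,a+R_\cycleC^\cS)$ contains the spike $\lfloor dk\rfloor$ iff $a \le dk < a+R_\cycleC^\cS$, i.e. iff $k$ lies in the interval $[a/d,(a+R_\cycleC^\cS)/d)$ of length $R_\cycleC^\cS/d$; any such interval holds at most $\lfloor R_\cycleC^\cS/d\rfloor + 1 \le R_\cycleC^\cS/d + 1$ integers. Combining the steps gives $\maxcnt^\cS(\cycleC_j,\tilde\assgn_j) \le \kappa_1 R_\cycleC^\cS + R_\cycleC^\cS/d + 1 = \maxcnt^\cS(\cycleC_j,\bar\assgn_{N_j}) + 1$, with the degenerate case $\kappa_2 = 0$ (no spikes, $\tilde\assgn_j = \bar\assgn_{N_j}$) holding trivially. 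I expect the main obstacle to be the integer-exact handling of the floors in this last step: pinning down the interval endpoints so the spike count is genuinely bounded by $R_\cycleC^\cS/d + 1$ rather than $+2$, which is precisely what yields the clean ``$+1$'' bound.
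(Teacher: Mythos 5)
Your proposal is correct and follows essentially the same route as the paper's proof: both reduce the claim to counting how many of the $\kappa_2$ spike positions $\{\lfloor dk\rfloor\}$ can fall in a contiguous cyclic window of length $R_\cycleC^\cS$, and both bound that count by $\lceil R_\cycleC^\cS/d\rceil \leq R_\cycleC^\cS/d + 1$ using the spacing-$d$ structure of the unrounded points. Your explicit treatment of the integer window endpoints (so that $a \le \lfloor dk\rfloor < a+R_\cycleC^\cS$ is equivalent to $a \le dk < a+R_\cycleC^\cS$) and of the degenerate case $\kappa_2 = 0$ is a welcome tightening of details the paper leaves implicit.
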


\tikzsetnextfilename{pseudo-periodic}
\begin{figure}
	\begin{center}
    \footnotesize
		\begin{tikzpicture}
  \begin{axis}[%
    axis x line=center,
    xmin=-0.5,xmax=8.5,
    ymin=-1,ymax=1,
    xtick = {0,1,2,...,8},
    xticklabels={0,1,2,3,4,5,6,0,1},
    axis y line=none,
    tickwidth = 10pt
    ]
    \node[draw,circle, inner sep=0, minimum width=6pt] at (axis cs: 0, 0) {};
    \node[draw,circle, inner sep=0, minimum width=6pt] at (axis cs: 2.33, 0) {};
    \node[draw,circle, inner sep=0, minimum width=6pt] at (axis cs: 4.66, 0) {};
    \node[draw,circle, inner sep=0, minimum width=6pt] at (axis cs: 7, 0) {};
    \node[draw,circle, inner sep=0, minimum width=4pt, fill] at (axis cs: 0, 0) {};
    \node[draw,circle, inner sep=0, minimum width=4pt, fill] at (axis cs: 2, 0) {};
    \node[draw,circle, inner sep=0, minimum width=4pt, fill] at (axis cs: 4, 0) {};
    \node[draw,circle, inner sep=0, minimum width=4pt, fill] at (axis cs: 7, 0) {};

    \draw[-latex] (axis cs:2.33,0.05) -- (axis cs: 2,0.05);
    \draw[-latex] (axis cs:4.66,0.05) -- (axis cs: 4,0.05);

    \draw[decoration={brace,raise=5pt},decorate] 
        (axis cs: 0,0.05) -- node[anchor=south, yshift=6pt] {$d$} (axis cs: 2.33,0.05);
    \draw[decoration={brace,raise=5pt},decorate] 
        (axis cs: 2.33,0.05) -- node[anchor=south, yshift=6pt] {$d$} (axis cs: 4.66,0.05);
    \draw[decoration={brace,raise=5pt},decorate] 
        (axis cs: 4.66,0.05) -- node[anchor=south, yshift=6pt] {$d$} (axis cs: 7,0.05);
  \end{axis}
\end{tikzpicture}
	\end{center}
  \vspace{-4mm}
	\caption{Illustration of the pseudo-periodic assignment for $|\cycleC_j|=7$, $\| \assgn_j \|_1 = 3$. The subsystems are first placed in non-integer positions separated by a distance $d$ (open circles), and then rounded into integer positions (filled circles). The final pseudo-periodic assignment is $[1, 0,1, 0, 1, 0, 0]$.}
	\label{fig:pseudo-periodic}
\end{figure}
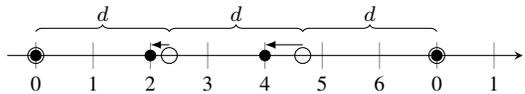

\begin{proof}	
	When the assignment $\tilde \assgn_j$ circulates in $\cycleC_j$, exactly $R_\cycleC^\cS$ contiguous indices of $\tilde \assgn_j$ contribute to the $\cS$-count. We bound the number of contributing indices with value $\kappa_1 + 1$. Let $[i_0, i_0 + R_\cycleC^\cS-1]$ be a (circular) sequence representing $R_\cycleC^\cS$ contributing indices. Consider Fig. \ref{fig:pseudo-periodic}; any point that ends up in the sequence after left-rounding must satisfy $d k \in [i_0, i_0 + R_\cycleC^\cS)$, where $k \in [\kappa_2]$. Since each left-closed, right-open interval of length $d$ captures exactly one point of the form $d k$, there are at most $\left \lceil R_\cycleC^\cS/d \right \rceil$ such points. Therefore by \eqref{eq:average_assignment},
	\begin{equation*}
	\begin{aligned}
	 	 	& \maxcnt^{\cS} (\cycleC_j, \tilde \assgn_j) \leq \kappa_1 R_\cycleC^\cS + \left \lceil \frac{R_\cycleC^\cS}{d} \right \rceil 
	 	 	\leq \kappa_1 R_\cycleC^\cS + \frac{R_\cycleC^\cS}{d} + 1  \\
	 	 	& \quad \leq \frac{R_\cycleC^\cS}{|C_j|} \underbrace{\left( \kappa_1 |C_j| + \kappa_2 \right)}_{N_j} + 1 = \maxcnt^X(\cycleC_j,\bar \assgn_{N_j}) + 1.
	 \end{aligned}
	 \vspace{-11mm}
	 \end{equation*}
\end{proof}

\begin{corollary}
\label{cor:cycle_round}
	If $C_j$ has at most $p$ ${\cS}$-segments (consecutive nodes taking values in $\cS$), the rounding \eqref{eq:periodic_assignment} satisfies $\maxcnt^X(\cycleC_j, \tilde \assgn_j) \leq \maxcnt^X(\cycleC_j,\bar \assgn_{N_j}) + p$.
\end{corollary}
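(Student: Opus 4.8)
The plan is to reduce the multi-segment case to the single-segment case already settled in Proposition~\ref{prop:pseudo_periodic}, by splitting the $\cS$-count into one contribution per $\cS$-segment. First I would observe that circulation only slides the assignment past the \emph{fixed} set of $\cS$-indices: at every shift $s$, the positions contributing to $\left\langle \cycleC_j, \tilde\assgn_j^{\circlearrowleft s}\right\rangle^\cS$ are exactly the (at most $p$) segments where $(\ds_i,\du_i)\in\cS$, and after shifting the index by $s$ the count equals the total weight that the unshifted $\tilde\assgn_j$ places on a collection of contiguous windows whose lengths are precisely the segment lengths $R_1,\dots,R_p$, with $\sum_m R_m = R_\cycleC^\cS$. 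The window \emph{positions} depend on $s$, but their lengths do not.

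Next I would bound the weight carried by a single window exactly as inside the proof of Proposition~\ref{prop:pseudo_periodic}. By \eqref{eq:periodic_assignment}, $\tilde\assgn_j$ equals the base value $\kappa_1$ everywhere and $\kappa_1+1$ at the $\kappa_2$ rounded points $\lfloor d k\rfloor$, $k\in[\kappa_2]$, with $d=|\cycleC_j|/\kappa_2$. Hence a window of length $R_m$ carries weight $\kappa_1 R_m$ plus the number of special points it contains; since a rounded point $\lfloor dk\rfloor$ inside a half-open interval of length $R_m$ forces $dk$ into the same interval, and each half-open interval of length $d$ holds exactly one multiple $dk$, that number is at most $\lceil R_m/d\rceil \le R_m/d + 1$.

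Summing over the windows, the base part totals $\kappa_1\sum_m R_m = \kappa_1 R_\cycleC^\cS$ while the extra $+1$'s contribute at most $\sum_m (R_m/d + 1) = R_\cycleC^\cS/d + p$, so that for every $s$,
\begin{equation*}
	\left\langle \cycleC_j, \tilde\assgn_j^{\circlearrowleft s}\right\rangle^\cS \le \kappa_1 R_\cycleC^\cS + \frac{R_\cycleC^\cS}{d} + p .
\end{equation*}
Taking the maximum over $s$ bounds $\maxcnt^\cS(\cycleC_j,\tilde\assgn_j)$ by the same quantity, and substituting $d=|\cycleC_j|/\kappa_2$ together with $N_j=\kappa_1|\cycleC_j|+\kappa_2$ rearranges the right-hand side to $\frac{R_\cycleC^\cS}{|\cycleC_j|}N_j + p$, which by \eqref{eq:average_assignment} equals $\maxcnt^\cS(\cycleC_j,\bar\assgn_{N_j}) + p$, as claimed.

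I expect the only genuine subtlety to be the first step: confirming that, independently of $s$, the contributing region decomposes into at most $p$ contiguous windows of fixed lengths, i.e. that circulation never merges or fragments the segment structure. Once this is nailed down, the per-segment estimate is literally the Proposition~\ref{prop:pseudo_periodic} bound applied to each of the $p$ windows, and the accounting of the $p$ ceilings—together with the exact recombination of the $\kappa_1$ and $\kappa_2$ terms into $N_j$—is routine.
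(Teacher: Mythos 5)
Your proposal is correct and is essentially the paper's argument written out in full: the paper's proof of this corollary is the one-line remark that Proposition~\ref{prop:pseudo_periodic} applies to each of the $p$ segments, and your window-by-window accounting (base weight $\kappa_1 R_m$ plus at most $\lceil R_m/d\rceil$ special points per segment, summed and recombined into $\frac{R_\cycleC^\cS}{|\cycleC_j|}N_j + p$) is exactly what that one line unpacks to. The subtlety you flag in the first step is real but resolves as you expect, since circulation only translates the fixed set of $\cS$-indices and therefore preserves the number and lengths of the contiguous windows.
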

\begin{proof}
	Follows from applying Proposition \ref{prop:pseudo_periodic} to each of the $p$ segments.
\end{proof}

We now incorporate these results into a bound on the counting constraint violation in the overall rounding procedure. Again, this bound does not depend on the total number of subsystems $N$. The difference between the original counting bounds and their relaxed counterparts therefore becomes insignificant as $N$ grows.

\begin{theorem}
\label{thm:rounding_bounds}
For a given counting constraint $({\cS},\cN)$ satisfied by $\{ \alpha_j \}_{j \in [J]}$, the following bound holds for the overall suffix rounding procedure
\begin{equation*}
	\sum_{j \in [J]} \maxcnt^{{\cS}}(\cycleC_j, \tilde \assgn_j) \leq \cN + J +\sum_{j \in [J]} p_j^{{\cS}},
\end{equation*}
where $p_j^{\cS}$ is the number of ${\cS}$-segments in the $j$'th cycle. Thus the relaxed counting constraint $({\cS}, \cN+J+\sum_{j \in [J]} p_j^{{\cS}})$ is guaranteed to be satisfied by the rounded suffix $\{ \tilde \alpha_j \}_{j \in [J]}$.
\end{theorem}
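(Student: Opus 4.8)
The plan is to reduce the claim, via the per-cycle result already proved and the constant-count property of averaged assignments, to a single averaging inequality relating the \emph{sum} of individual averaged counts to the \emph{joint} count of the non-integer data $\{\assgn_j\}$. First I would apply Corollary \ref{cor:cycle_round} to each cycle: since $\cycleC_j$ contains at most $p_j^\cS$ $\cS$-segments, the pseudo-periodic rounding \eqref{eq:periodic_assignment} satisfies
\[
	\maxcnt^\cS(\cycleC_j, \tilde\assgn_j) \le \maxcnt^\cS(\cycleC_j, \bar\assgn_{N_j}) + p_j^\cS .
\]
By \eqref{eq:average_assignment} the averaged assignment has a \emph{constant} $\cS$-count $\maxcnt^\cS(\cycleC_j, \bar\assgn_{N_j}) = \tfrac{N_j}{|\cycleC_j|} R_{\cycleC_j}^\cS$. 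Summing over $j$, the theorem therefore reduces to the estimate
\[
	\sum_{j \in [J]} \frac{N_j}{|\cycleC_j|} R_{\cycleC_j}^\cS \le \cN + J .
\]

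The key step is to bound $\sum_{j} \tfrac{\|\assgn_j\|_1}{|\cycleC_j|} R_{\cycleC_j}^\cS$ by $\cN$, which I would do by recognizing this quantity as the \emph{time-average} of the joint $\cS$-count of $\{\assgn_j\}$. For a fixed $\cS$-contributing index $i$ of $\cycleC_j$, as $s$ runs over one full period the shifted value $\assgn_j((i-s)\bmod|\cycleC_j|)$ runs over all entries of $\assgn_j$, so $\sum_{s=0}^{|\cycleC_j|-1}\langle \cycleC_j, \assgn_j^{\circlearrowleft s}\rangle^\cS = R_{\cycleC_j}^\cS \|\assgn_j\|_1$. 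Because each summand is periodic with period $|\cycleC_j|$ dividing $L = \lcm(\{|\cycleC_j|\})$, averaging the synchronous joint count over $L$ decouples across cycles and gives $\tfrac{1}{L}\sum_{s=0}^{L-1}\sum_{j}\langle \cycleC_j, \assgn_j^{\circlearrowleft s}\rangle^\cS = \sum_{j}\tfrac{R_{\cycleC_j}^\cS\|\assgn_j\|_1}{|\cycleC_j|}$. Since the maximum over $s$ dominates the average, and $\{\assgn_j\}$ satisfies the constraint so that $\maxcnt^\cS(\{\cycleC_j\},\{\assgn_j\}) \le \cN$ by \eqref{eq:lp_c2}, this average is at most $\cN$.

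It then remains to absorb the discrepancy introduced when Step~1 replaces $\|\assgn_j\|_1$ by the integers $N_j$. Using $|N_j - \|\assgn_j\|_1| \le 1$ together with $0 \le R_{\cycleC_j}^\cS \le |\cycleC_j|$, each such replacement perturbs the weighted term by at most $\tfrac{R_{\cycleC_j}^\cS}{|\cycleC_j|} \le 1$, for a total additive error of at most $J$. Chaining the three estimates gives $\sum_{j}\maxcnt^\cS(\cycleC_j,\tilde\assgn_j) \le \cN + J + \sum_{j} p_j^\cS$, and the concluding statement that the relaxed constraint $(\cS, \cN + J + \sum_j p_j^\cS)$ is satisfied is then immediate from the definition of constraint satisfaction.

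The main obstacle I anticipate is the averaging identity of the second paragraph: one must take care that the relevant upper bound is the \emph{joint} max-count, with circulations synchronized across cycles of differing lengths, and verify that its average over the common period $L$ collapses exactly to the cycle-wise sum $\sum_{j}\tfrac{R_{\cycleC_j}^\cS\|\assgn_j\|_1}{|\cycleC_j|}$ rather than to some expression coupling the cycles. The decoupling is what makes the bound clean, and it holds precisely because each cycle's contribution is periodic with a period dividing $L$, so its time-average is unaffected by the relative phases of the other cycles; everything else in the argument is routine.
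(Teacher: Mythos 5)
Your proof is correct, and its first and last steps (applying Corollary \ref{cor:cycle_round} cycle-by-cycle, and absorbing the unit perturbation from $\|\assgn_j\|_1$ to $N_j$ using $R_{\cycleC_j}^{\cS}/|\cycleC_j|\le 1$) coincide with the paper's. Where you genuinely diverge is in how the averaged counts are tied back to the bound $\cN$. The paper closes the argument by invoking \eqref{eq:average_vs_other} per cycle, i.e.\ $\maxcnt^{\cS}(\cycleC_j,\bar\assgn_{\|\assgn_j\|_1})\le\maxcnt^{\cS}(\cycleC_j,\assgn_j)$, and then ``summing over $j$''; read literally, this requires $\sum_{j}\maxcnt^{\cS}(\cycleC_j,\assgn_j)\le\cN$, i.e.\ that the \emph{sum of individual} max-counts is bounded by $\cN$, which does not follow from the joint constraint \eqref{eq:lp_c2} alone, since the joint max-count is in general strictly smaller than the sum of individual maxima (the worst phases of different cycles need not align). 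Your time-averaging identity $\sum_{j}\tfrac{R_{\cycleC_j}^{\cS}\|\assgn_j\|_1}{|\cycleC_j|}=\tfrac{1}{L}\sum_{s\in[L]}\sum_{j}\langle\cycleC_j,\assgn_j^{\circlearrowleft s}\rangle^{\cS}\le\maxcnt^{\cS}(\{\cycleC_j\},\{\assgn_j\})$, with $L$ the least common multiple of the cycle lengths, instead bounds the sum of averaged counts directly by the joint max-count, which is exactly the quantity constrained in \eqref{eq:lp_c2}. So your route is not only valid but makes precise (and in fact repairs) a step that the published proof leaves implicit; the only cost is the bookkeeping with $L$, which, as you observe, is harmless because the average over $L$ of a $|\cycleC_j|$-periodic function equals its average over a single period, so the cycles decouple.
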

\begin{proof}
	By the rounding procedure, Corollary \ref{cor:cycle_round}, and \eqref{eq:average_vs_other},
	\begin{equation*}
	\begin{aligned}
			& \maxcnt^{\cS}(\cycleC_j, \tilde \assgn_j) \leq \maxcnt^{\cS}(\cycleC_j, \bar \assgn_{N_j}) + p_j^{\cS} \\
			& \leq \maxcnt^{\cS}(\cycleC_j, \bar \assgn_{\| \alpha_j \|_1}) + 1 + p_j^{\cS}. \\
	\end{aligned}
	\end{equation*}
	Noting that $\maxcnt^{\cS}(\cycleC_j, \bar \assgn_{\| \alpha_j \|_1}) \leq \maxcnt^{\cS}(\cycleC_j, \assgn_j)$ and summing over $j \in [J]$ gives the result.
\end{proof}

If the structure required in Proposition \ref{prop:pseudo_periodic} is not present, the following is a worst-case bound on the counting constraint violation due to rounding. 

\begin{theorem}
\label{thm:nonint_to_int}
	The rounding \eqref{eq:periodic_assignment} satisfies
	\begin{equation*}
	\begin{aligned}
			\maxcnt^\cS(\cycleC_j, \tilde \assgn_j) \leq \maxcnt^\cS \left( \cycleC_j, \bar \assgn_{N_j} \right) + \frac{|\cycleC_j|}{4}.
	\end{aligned}
	\end{equation*}
\end{theorem}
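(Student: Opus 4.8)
The plan is to exploit linearity of the $\cS$-count in the assignment together with the explicit structure of \eqref{eq:periodic_assignment}. First I would decompose the rounded assignment as $\tilde \assgn_j = \kappa_1 \mathbf{1} + \chi_S$, where $\chi_S$ is the $0/1$ assignment that equals $1$ exactly on the $\kappa_2$ ``extra'' indices $S = \{ \lfloor d k \rfloor \}_{k \in [\kappa_2]}$ and $0$ elsewhere. Since $\langle \cycleC, \cdot \rangle^\cS$ is linear in its assignment argument and the circulation of the constant assignment $\mathbf{1}$ is again $\mathbf{1}$, every circulated count splits as $\langle \cycleC_j, \tilde \assgn_j^{\circlearrowleft s} \rangle^\cS = \kappa_1 R_\cycleC^\cS + \langle \cycleC_j, \chi_S^{\circlearrowleft s} \rangle^\cS$, where $R_\cycleC^\cS = \langle \cycleC_j, \mathbf{1} \rangle^\cS$ is the number of counting indices in the cycle. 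The first term is independent of $s$, so maximizing over circulations yields $\maxcnt^\cS(\cycleC_j, \tilde \assgn_j) = \kappa_1 R_\cycleC^\cS + \max_s \langle \cycleC_j, \chi_S^{\circlearrowleft s} \rangle^\cS$, and this last quantity is precisely the largest number of the $\kappa_2$ special indices that a circular shift can make overlap the $R_\cycleC^\cS$ counting indices.

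Next I would rewrite the comparison target in the same form. By \eqref{eq:average_assignment} and $N_j = \kappa_1 |\cycleC_j| + \kappa_2$, the averaged assignment satisfies $\maxcnt^\cS(\cycleC_j, \bar \assgn_{N_j}) = \frac{N_j}{|\cycleC_j|} R_\cycleC^\cS = \kappa_1 R_\cycleC^\cS + \frac{\kappa_2}{|\cycleC_j|} R_\cycleC^\cS$. Subtracting cancels the common $\kappa_1 R_\cycleC^\cS$, so the claimed inequality reduces to showing that the maximal shifted overlap exceeds its ``fair share'' $\frac{\kappa_2 R_\cycleC^\cS}{|\cycleC_j|}$ by at most $|\cycleC_j|/4$, that is, $\max_s \langle \cycleC_j, \chi_S^{\circlearrowleft s} \rangle^\cS - \frac{\kappa_2 R_\cycleC^\cS}{|\cycleC_j|} \le \frac{|\cycleC_j|}{4}$. (When $\kappa_2 = 0$ we have $S = \emptyset$ and $\tilde \assgn_j = \bar \assgn_{N_j}$, so the bound holds trivially; I would dispose of this case first.)

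Finally I would bound the deviation with two elementary estimates that deliberately discard the pseudo-periodic placement of $S$, which—unlike in Proposition \ref{prop:pseudo_periodic}—provides no help in the worst case. On one hand, any shifted overlap is at most $\min(\kappa_2, R_\cycleC^\cS)$, since neither set can contribute more indices than it contains. On the other hand, a double-counting argument shows the overlap averaged over all $|\cycleC_j|$ circular shifts equals exactly $\frac{\kappa_2 R_\cycleC^\cS}{|\cycleC_j|}$, because each special index lands in the counting set for precisely $R_\cycleC^\cS$ of the shifts; hence the maximum is at least this average. Writing $a = \min(\kappa_2, R_\cycleC^\cS)$ and $b = \max(\kappa_2, R_\cycleC^\cS)$ and using $\kappa_2 R_\cycleC^\cS = ab$, the quantity to bound becomes $a - \frac{ab}{|\cycleC_j|} = \frac{a(|\cycleC_j| - b)}{|\cycleC_j|}$. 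Since $a \le b \le |\cycleC_j|$, I would estimate $a(|\cycleC_j| - b) \le b(|\cycleC_j| - b) \le |\cycleC_j|^2/4$ by the AM-GM inequality, which yields the factor $|\cycleC_j|/4$ and closes the argument. The only genuine subtlety, and the step I expect to be the main obstacle, is recognizing that the right maneuver is to compare the worst-case shift against the \emph{average} shift rather than to track the exact geometry of the rounded positions; once that reformulation is in place, the remaining bounds are routine.
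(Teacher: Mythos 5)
Your proof is correct and follows essentially the same route as the paper's: after cancelling the constant $\kappa_1\mathbf{1}$ part (the paper does this by assuming $N_j < |\cycleC_j|$ without loss of generality), both arguments bound the worst-case count by $\min(\kappa_2, R_\cycleC^\cS)$, subtract the averaged count $\kappa_2 R_\cycleC^\cS/|\cycleC_j|$, and close with the same elementary optimization (your AM--GM step is equivalent to the paper's identity $\max_{a,b\in[0,1]}\min(ab,(1-a)(1-b)) = 1/4$).
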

\begin{proof}
	We assume that $N_j < |\cycleC_j|$, since any multiple of $|\cycleC_j|$ can be assigned as the average assignment. The number of nodes contributing to the $\cS$-count is upper bounded by $\min(R^X_\cycleC, N_j)$, hence,
	\begin{equation*}
	\begin{aligned}
			& \maxcnt^\cS (\cycleC_j, \tilde \assgn_j) - \maxcnt^\cS \left(\cycleC_j, \bar \assgn_{N_j} \right) \\
			& \; \leq \min(R^X_\cycleC, N_j) - \frac{N_j}{ |\cycleC_j| } R^X_\cycleC \\
			&\;  = |\cycleC_j| \min \left( \frac{R^X_\cycleC}{|\cycleC_j|} \left( 1 \hspace{-1mm} - \hspace{-0.5mm} \frac{N_j}{|\cycleC_j|} \right), \frac{N_j}{|\cycleC_j|} \left(1 \hspace{-1mm} - \hspace{-0.5mm} \frac{R^X_\cycleC}{|\cycleC_j|}  \right) \right) \leq \frac{|\cycleC_j|}{4},
	\end{aligned}
	\end{equation*}
	where the last step follows from $\max_{a,b \in [0,1]} \min (ab, (1-a)(1-b)) = 1/4$.
\end{proof}

These rounding bounds can be precomputed (using all cycles used in the LP instead of all cycles with non-zero assignments) and the counting constraints can be strengthened accordingly to ensure that the original constraints are satisfied after rounding.

\section{Extension to Strong Heterogeneity}
\label{sec:multiclass}

Up until now we considered mild heterogeneity in the continuous dynamics, which allowed us to construct a single abstraction that captures all the possible behaviors. If there is significant heterogeneity in the collection of subsystems, this may no longer be possible while maintaining a good level of approximation.

To alleviate this shortcoming, we can extend our synthesis method to a \emph{multi-class} setting where each subsystem belongs to a particular class, and each class has only mild heterogeneity among its members. One abstraction per class can then be constructed, and the counting problem can be solved jointly for the different classes. In addition, counting constraints can be extended to capture class identity. For instance, we can posit that at least $\cN_1$ subsystems of class 1 be present in a given area, or guarantee that no more than $\cN_2$ subsystems of class 2 are in a particular dynamic mode.

To formalize these ideas, consider $H$ transition systems $\Sigma^h = (Q^h, \mathcal U^h, \longrightarrow ^h, Y^h)$ for $h \in [H]$. Then the multi-class counting problem is as follows:
\begin{problem}
\label{prob:multiclass}
	Consider $N$ subsystems divided in $H$ classes such that class $h$ has $N_h$ members, and $\sum_{h \in [H]} N_h = N$. Subsystems in class $h$ are governed by the transition system $\Sigma^h$. Assume that for all $h \in [H]$, initial states $\{\xc_n^h(0)\}_{n \in [N_h]}$ are given.

	Given $L$ counting constraints $\{\prod_{h \in [H]} \cS_l^h, \cN_l\}_{l \in [L]}$ with counting sets $\cS_l^h \subset \mathcal Q^h \times \mathcal U^h$, synthesize individual switching protocols $\{ \pi_n^h \}_{n \in [N_h]}$ such that the generated actions $\u_n^h(0) \u_n^h(1) \u_n^h(2) \ldots$ and trajectories $\xc_n^h(0) \xc_n^h(1) \xc_n^h(2) \ldots$ satisfy the counting constraints
	\begin{equation*}
		\sum_{h \in [H]} \sum_{n \in [N_h]} \mathds{1}_{\cS_l^h} \left( \xc_n^h(s), \u_n^h(s) \right) \leq \cN_l, \quad \forall s \in \mathbb{N}, \; \forall l \in [L].
	\end{equation*}
\end{problem}

The linear program \eqref{eq:lp} can easily be extended to the multi-class setting, at the cost of additional variables and constraints. Assuming similar abstraction parameters and cycle selections, a problem with two classes has roughly twice as many variables as a problem with a single class. The next section includes an example showcasing how a multi-class counting problem can account for parameter heterogeneity in a family of continuous-time systems.

\section{Examples}
\label{sec:examples}

We showcase the method on two examples; one numerical example and the TCL scheduling problem. The examples are computed with our prototype implementation available at \url{https://github.com/pettni/mode-count}, which uses Gurobi \cite{gurobi} as the underlying ILP solver.

\subsection{Numerical Example} 
\label{sub:numerical_example}

Our first example is the following non-linear system:
\begin{equation}
\label{eq:sat_dynamics}
\begin{aligned}
	\dot \xc_1 & = -2(\xc_1- \mathbf{u}) + \xc_2, \\
	\dot \xc_2 & = -(\xc_1- \mathbf{u})-2\xc_2-\xc_2^3. \\
\end{aligned}
\end{equation}
It can be shown that for a constant $\mathbf{u}$ the system is incrementally stable and that the $\mathcal {KL}$-function
\begin{equation*}
	\beta(r,t) = \sqrt{2} r \left\| \exp \left( \begin{bmatrix}
		-2 & 1 \\ -1 & -2
	\end{bmatrix} t \right) \right\|_2
\end{equation*}
satisfies \eqref{eq:kl_func_dyn}. We consider two modes $\du_1$ and $\du_2$ corresponding to the constant inputs $\mathbf{u} = -1$ and $\mathbf{u} = 1$.

We consider the domain $\mathcal X = \{ (x_1, x_2) : x_1 \in [-2,2], x_2 \in [-1.5, 1.5] \}$. For a large $N$, we introduce mode-counting constraints $(\cS_{\du_{1}}, 0.55 N)$ and $(\cS_{\du_{2}}, 0.55 N)$ for $\cS_{m_s} = \mathcal X \times \{ s \}$ stating that at most 55\% of the subsystems can use the same dynamical mode at any given time. In addition we consider a balancing objective; namely that no more than $55 \%$ of the subsystems should be in one of the sets $\cS_1 = \{ (x_1, x_2) :  x_1 \geq 0 \}$ or $\cS_2 = \{ (x_1, x_2) : x_1 \leq 0 \}$; expressed by the counting constraints $(\cS_1, 0.55 N)$ and $(\cS_2, 0.55 N)$. Furthermore, we want all subsystems to repeatedly visit these two sets.

Using abstraction parameters $\eta = 0.05$, $\tau = 0.32$ we obtain an abstraction with 4,941 states that is $0.1$-approximately bisimilar to the time-discretization of \eqref{eq:sat_dynamics}. In order to guarantee that the counting constraints are satisfied, we therefore need to expand the counting sets as $\tilde \cS_1 =  \{ (x_1, x_2) :  x_1 \geq -0.1 \}$ and $\tilde \cS_2 =  \{ (x_1, x_2) :  x_1 \leq 0.1 \}$. We proceed by solving the discrete counting problem with randomized initial conditions and a horizon $T = 10$. We sampled 200 randomized cycles that visit both $\tilde \cS_1^C$ and $\tilde \cS_2^C$, in order to achieve the second objective. We solved the problem for $N = 10^k$ for $k = 2, \ldots, 9$; the solving times are shown in Table \ref{tab:scaling} and illustrate that the difficulty of this problem is largely independent of $N$.

\begin{table}[h]
	\caption{Average solution times over 10 randomized trials}
	\label{tab:scaling}
	\centering
	\begin{tabular}{l c c c c c c c c c}
		\toprule
		N & $10^2$ & $10^3$ & $10^4$ & $10^5$ & $10^6$ & $10^7$ & $10^8$ & $10^9$ \\
		\midrule
		Time (s) & 7.8 & 11.1 & 12.0 & 12.8 & 11.5 & 13.2 & 13.0 & 11.0  \\
		\bottomrule
	\end{tabular}
\end{table}	

Fig. \ref{fig:sat_counting} illustrates the number of systems that are in the counting sets over time in a trajectory, and Fig. \ref{fig:sat_cycle} demonstrates some of the cycles that make up the suffix part of the solution.

\tikzsetnextfilename{nl_counting}
\begin{figure}
	\begin{center}
		\footnotesize
		\begin{tikzpicture}
    \begin{axis}[
      xmin = 0,
      xmax = 50,
      ymin = 0.425,
      ymax = 0.575,
        width=0.8\columnwidth,
        height=0.5\columnwidth,
        xlabel={$t$},
        legend style={
      at={(1.05,1)},
      anchor=north west}
        ]
        \addplot[opacity=0.75, line width=1pt, green] table [x index=0, y index=1, col sep=comma] {data/nl_counts.txt};
        \addplot[opacity=0.75, line width=1pt, yellow] table [x index=0, y index=2, col sep=comma] {data/nl_counts.txt};
        \addplot[opacity=0.75, line width=1pt, red] table [x index=0, y index=3, col sep=comma] {data/nl_counts.txt};

        \legend{$\tilde \cS_1$ count, $\tilde \cS_2$ count, $\cS_{\du_{1}}$ count},
        \draw[dashed] (axis cs: 0, 0.55) -- (axis cs: 55, 0.55);
        \draw[dashed] (axis cs: 0, 0.45) -- (axis cs: 55, 0.45);
    \end{axis}
\end{tikzpicture}
	\end{center}
  \vspace{-5mm}
	\caption{Fraction of total number of systems present in $\tilde \cS_1$, $\tilde \cS_2$ and $\cS_{{\du_1}}$ over time. Whereas the sets $\cS_{\du_1}$ and $\cS_{\du_2}$ are mutually exclusive (the fractions sum to 1), the sets $\tilde \cS_1$ and $\tilde \cS_2$ are not due to the expansion to account for approximate bisimilarity.}
	\label{fig:sat_counting}
\end{figure}
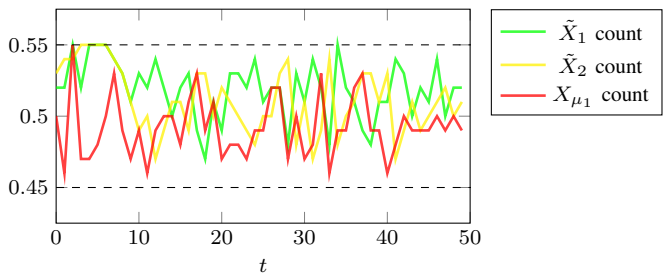

\tikzsetnextfilename{nl_cycles}
\begin{figure}
	\begin{center}
		\footnotesize
		\begin{tikzpicture}
    \begin{axis}[
      xmin = -1.2,
      xmax  =1.2,
      ymin = -1,
      ymax = 1,
      xlabel={$x_1$},
      ylabel={$x_2$},
        width=0.9\columnwidth,
        height=0.5\columnwidth
        ]

        \draw[fill=blue!20!white, draw=none] (-1.5,-1) rectangle (0,1);
        \node at (-1,0.8) {$\cS_1$};
        \draw[fill=green!20!white, draw=none] (1.5,-1) rectangle (0,1);
        \node at (1,-0.8) {$\cS_2$};

        \addplot[line width=1pt, red] table [x index=0, y index=1, col sep=comma] {data/cycle155.txt} -- cycle;
        \addplot[line width=1pt, blue] table [x index=0, y index=1, col sep=comma] {data/cycle171.txt} -- cycle;
        \addplot[line width=1pt, green] table [x index=0, y index=1, col sep=comma] {data/cycle177.txt} -- cycle;
        \addplot[line width=1pt, yellow] table [x index=0, y index=1, col sep=comma] {data/cycle195.txt} -- cycle;
        \addplot[line width=1pt, purple] table [x index=0, y index=1, col sep=comma] {data/cycle4.txt} -- cycle;
    \end{axis}
\end{tikzpicture}
	\end{center}
  \vspace{-5mm}
	\caption{Illustration of counting sets $\cS_1$ and $\cS_2$ together with selected cycles making up the suffix solution.}
	\label{fig:sat_cycle}
\end{figure}
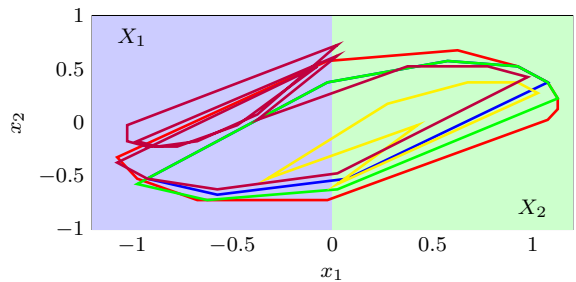

\subsection{Application Example: TCL Scheduling}
\label{sub:application_example}
{}
We use the following model for the dynamics of the temperature $\xc_n$ of an individual TCL \cite{Hao:2015fp}:
\begin{equation}
\label{eq:tcl_dyn}
	\dot \xc_n(t) = -a_n \left( \xc_n(t) - \theta^a_n \right) - b_n P_n^m \mathds{1}_{\{\on\}}  \left( \u_n(t) \right).
\end{equation}
We assume that there are two distinct populations of TCLs, i.e., that all sets of parameters $(a_n, b_n, \theta_n^a, P_n^m)$ are $\moverbar \delta$-close to one of two nominal parameter configurations. Each nominal parameter configuration represents a mildly heterogeneous class (c.f. Section \ref{sec:multiclass}). The parameter values for the nominal configurations are listed in Table \ref{tab:tcl_param} along with the abstraction parameters $\eta$ and $\tau$ used for each class and the allowed deviation $\moverbar \delta$ from these nominal values. These parameters result in finite abstraction with 1,600 and 1,200 states for the two classes, respectively.

\begin{table}[h]
	\center
	\caption{Parameter values for the two classes of subsystems. Note that the time discretizations need to be identical for concurrent execution.}
	\label{tab:tcl_param}
	\begin{tabular}{l c c}
	\toprule
	Parameter 			& Class 1     & Class 2 \\
	\midrule
	Nominal $[a,b,\theta_a, P_m]$& $[2, 2, 32, 5.6]$ & $[2.2, 2.2, 32, 5.9$]\\
	Space discretization $\eta$  & 0.002  & 0.0015 \\
	Time discretization  $\tau$  & 0.05   & 0.05   \\
	Error bound $\moverbar \delta$ & 0.025 & 0.025  \\
	\bottomrule
	\end{tabular}
\end{table}

It can easily be shown that the $\mathcal {KL}$-function $\beta(r,s) = r e^{-s a_n}$ satisfies \eqref{eq:kl_func_dyn} with respect to \eqref{eq:tcl_dyn}. In addition, it satisfies the approximate bisimulation inequality \eqref{eq:bisim_ineq} for an approximation level $\epsilon = 0.2$ and using the Lipschitz constant $a_n$ for \eqref{eq:tcl_dyn}; thus the results from Section \ref{sub:solvability_of_mode_counting_problem_on_abstraction} apply. The constraints for the TCL problem have been introduced earlier in \eqref{eq:tcl_mc}-\eqref{eq:tcl_sc}. We posit that all subsystems must remain in the temperature interval $[21.3, 23.7]$; taking the approximation into account this implies that the constraint for the discrete problem becomes $\xc_n(t) \in [21.5, 23.5]$.

We randomly selected initial conditions for 10,000 systems of each class, sampled 50 random cycles for each class\footnote{To promote diversity in the cycle sets, the cycles were selected in order to have different fractions of time in mode $\on$.}, and solved the ILP \eqref{eq:lp} for a prefix length of 20 steps (corresponding to one hour). We also introduced randomized additive model errors $\delta_n$ such that $|\delta_n| \leq \moverbar \delta$ to represent mild in-class heterogeneity. Figure \ref{fig:tcl_density} shows simulated trajectory densities for two different mode-$\on$-counts, one \emph{maximal} count of 6,000, i.e. $\sum_{n \in [N]} \mathds{1}_{\{ \on \}} \left( \sigma_n(t) \right) \in [0, 6000]$; and one \emph{minimal} count of 6,700, i.e. $\sum_{n \in [N]} \mathds{1}_{\{ \on \}} \left( \sigma_n(t) \right) \in [6700, N]$. For comparison, the fundamental minimal upper bound is 5,595 and the fundamental maximal lower bound is 7,045 as computed from formulas in \cite{no_hscc17} that apply to a centralized full-state feedback coordinator with arbitrarily fast switching. While the objective here was not to find the maximal ranges (which could be done by adding an objective function to \eqref{eq:lp}), the fundamental limits can not be attained due to approximation errors stemming from the approximate bisimulation, incomplete cycle selection, a minimal dwell time imposed by the time discretization, etc. Figure \ref{fig:tcl_count} shows mode-$\on$-counts during the same simulation for the two experiments. As can be seen, the imposed counting bounds are satisfied.


\begin{figure}
	\begin{center}
    \footnotesize
		\tikzsetnextfilename{density1} \begin{tikzpicture}
    \begin{axis}[
        width = \columnwidth,
        height = 0.4\columnwidth,
        enlargelimits=false,
        axis on top,
        ylabel = {Temp. [$^\circ$C]}]
    \addplot graphics [xmin=0,xmax=5,ymin=21,ymax=24] {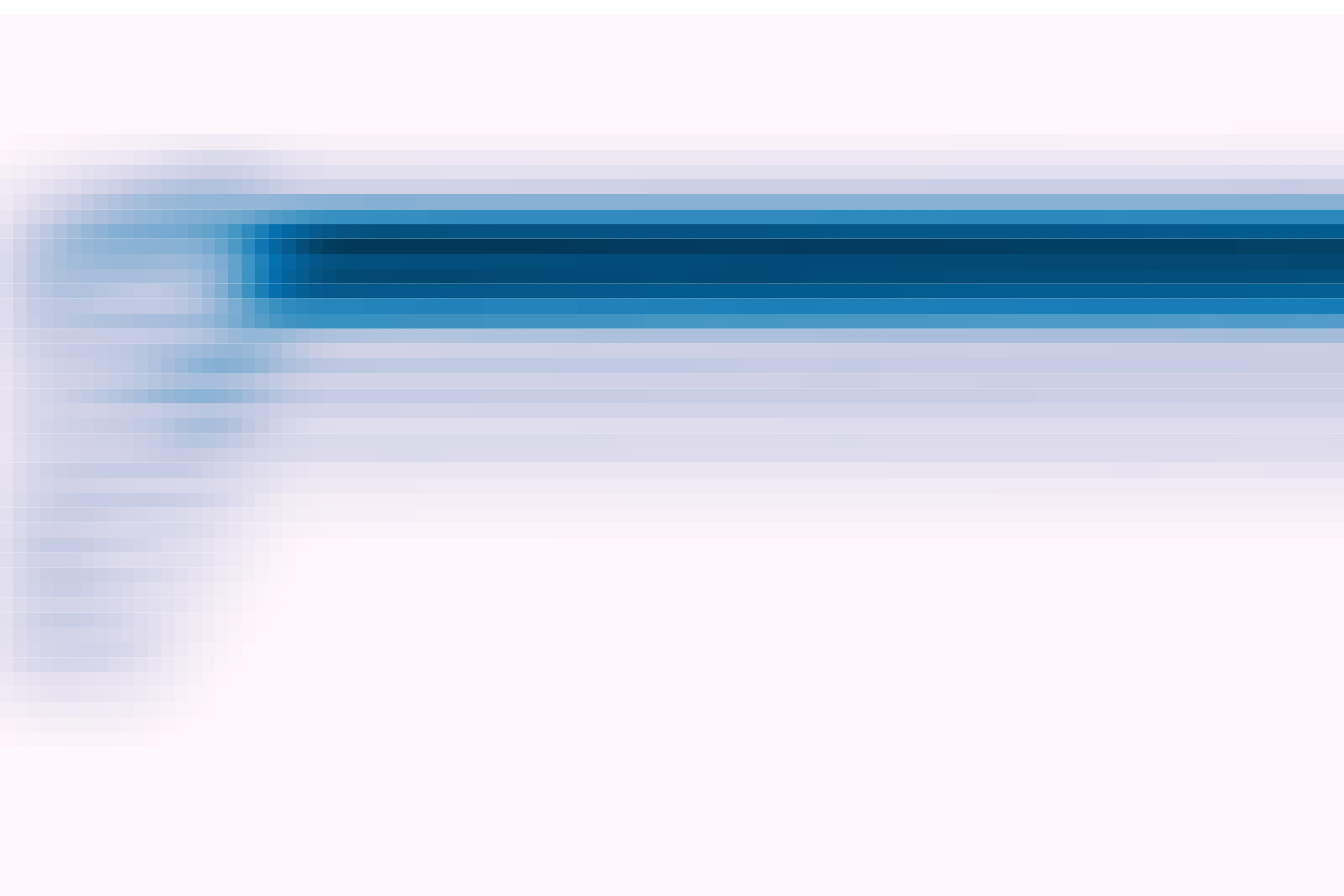};
    \draw[dashed] (axis cs: 0, 23.7) -- (axis cs: 5, 23.7);
    \draw[dashed] (axis cs: 0, 21.3) -- (axis cs: 5, 21.3);
    \end{axis}
\end{tikzpicture} ~
		\tikzsetnextfilename{density2} \begin{tikzpicture}
    \begin{axis}[
        width = \columnwidth,
        height = 0.4\columnwidth,
        enlargelimits=false,
        axis on top,
        xlabel = {Time [hrs]},
        ylabel = {Temp. [$^\circ$C]}]
    \addplot graphics [xmin=0,xmax=5,ymin=21,ymax=24] {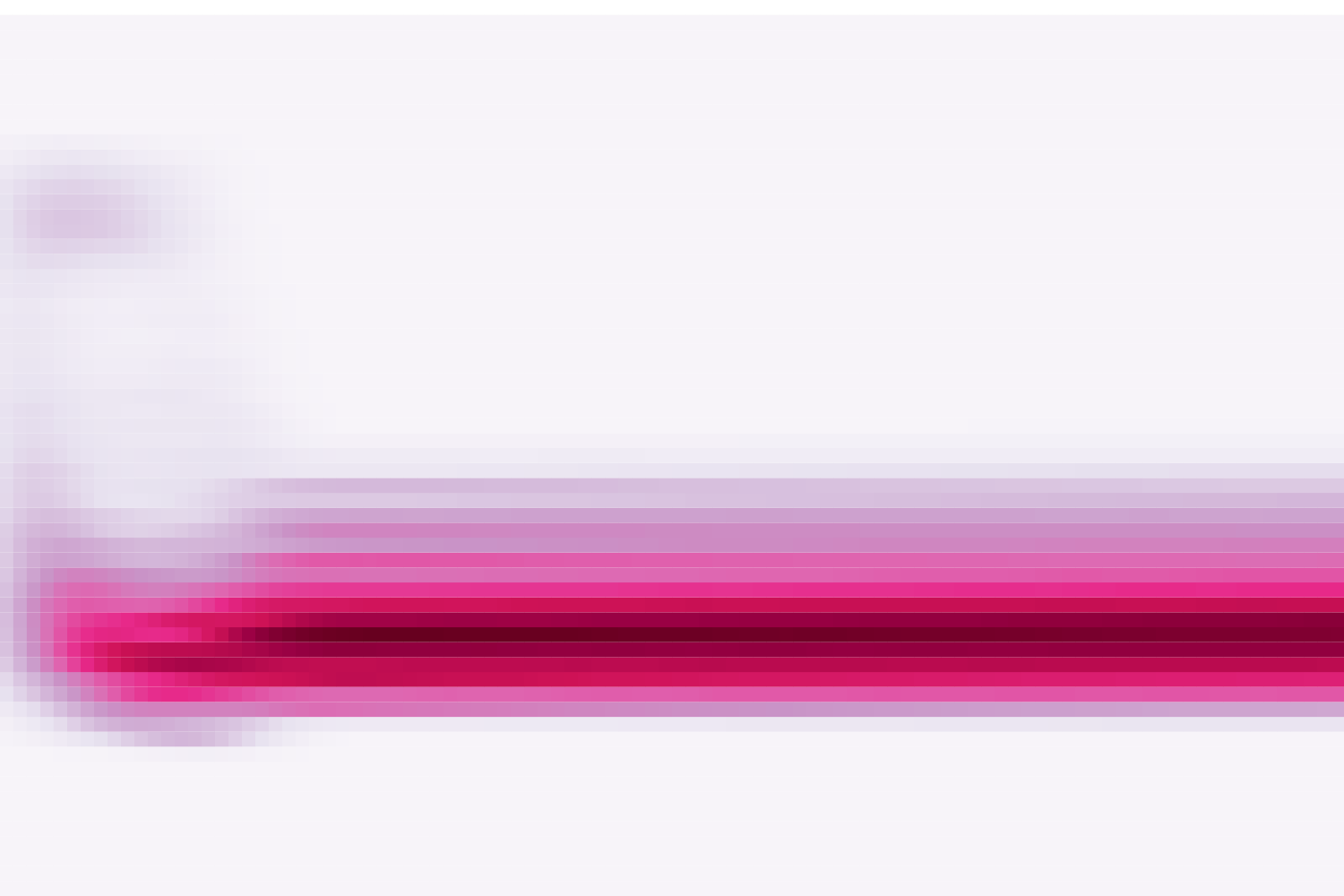};
    \draw[dashed] (axis cs: 0, 23.7) -- (axis cs: 5, 23.7);
    \draw[dashed] (axis cs: 0, 21.3) -- (axis cs: 5, 21.3);
    \end{axis}
\end{tikzpicture}
	\end{center}
  \vspace{-5mm}
	\caption{Density of TCLs in different parts of the temperature spectrum over time---blue parts represent regions with a larger fraction of the 20,000 subsystems. The state counting constraint \eqref{eq:tcl_sc} guarantees that no subsystem exits the interval [21.3, 23.7] which is marked with dashed lines. The first hour represents the prefix part of the solution which steers the initial state to the periodic suffix.}
	\label{fig:tcl_density}
\end{figure}

\begin{figure}
	\begin{center}
    \footnotesize
		\tikzsetnextfilename{tlc_count} \begin{tikzpicture}
    \begin{axis}[
        width=0.9375\columnwidth,
        height=0.4\columnwidth,
        xmin=0,
        xmax=5,
        ymin=1000,
        ymax=12000,
        y tick label style={/pgf/number format/.cd,%
          scaled y ticks = false,
          fixed},
        xlabel={Time [hrs]},
        ylabel={mode-$\texttt{on}$-count}]
        \addplot[const plot, blue, line width=1pt] plot file {data/count_low.txt};
        \addplot[const plot, red, line width=1pt] plot file {data/count_high.txt};
        \draw[dashed] (axis cs: 0, 6000) -- (axis cs: 5, 6000);
        \draw[dashed] (axis cs: 0, 6700) -- (axis cs: 5, 6700);
    \end{axis}
\end{tikzpicture}
	\end{center}
  \vspace{-5mm}
	\caption{Number of TCLs in mode $\on$ during the two simulations. As can be seen, the lower bound of 6,700 is enforced for the upper (red) trajectory, while the upper bound 6,000 is enforced for the lower (blue) trajectory.}
	\label{fig:tcl_count}
\end{figure}
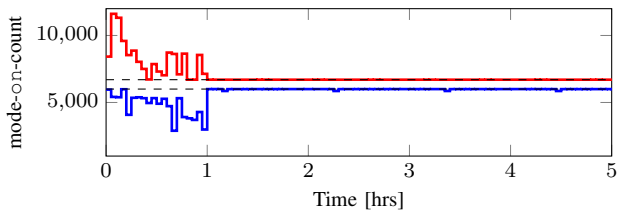

\section{Conclusion}
\label{sec:conclusion}

This paper was concerned with control synthesis for very high-dimensional but permutation-symmetric systems subject to likewise symmetric counting constraints, and presented a scalable sound and (almost) complete solution to this problem. The main insight is to aggregate the individual subsystem dynamics as an integer linear system induced from an abstraction constructed for a single subsystem, thus avoiding abstraction of the entire state space. As we used an approximately bisimilar system as an abstraction, the same abstraction can represent not only identical subsystems (homogeneity) but also subsystems with slightly different parameters (i.e., almost symmetric, or mildly heterogeneous) by a slight change in the approximation factor. The control synthesis problem was then reduced to one of coordinating the number of subsystems that are in different parts of the discrete state-space of this abstraction. We characterized prefix-suffix solutions as the feasible set of an integer linear program, and showed how to interpret (in)feasibility of the program both in the integer and non-integer case.

The results were demonstrated on a TCL scheduling problem including tens of thousands of subsystems. With the proposed approach, it is possible to impose hard constraints on the overall power consumption of TCLs over an infinite time horizon, to the best of our knowledge this is a first in this domain. Counting constraints are also relevant in other application domains, including multi-agent planning and coordination as shown in \cite{sno_2016}. 

Exploiting symmetries to achieve scalability in correct-by-construction methods is a promising direction and we will explore other types of symmetries in our future work. Another interesting direction is to consider other types of abstractions, including non-deterministic ones, since not all systems admit finite bisimulations. Although the idea of an aggregate system can still be used in this case, one should either solve a robust uncertain ILP, which could lead to conservative results, or consider reactive feedback solutions, for which different synthesis techniques should be developed.

\section*{Acknowledgment}

The authors would like to thank Johanna Mathieu for insightful discussions regarding TCL coordination problem. Petter Nilsson is supported by NSF grant CNS-1239037. Necmiye Ozay is supported in part by NSF grants CNS-1446298 and ECCS-1553873, DARPA grant N66001-14-1-4045 and an Early Career Faculty grant from NASA's Space Technology Research Grants Program.

\bibliographystyle{IEEEtran}
\bibliography{IEEEabrv,ref}

\appendix

\section{Proofs}
\label{sec:app_proofs}

\begin{proof}[Proof of Proposition \ref{prop:bisimilarity}]

For a grid point $\ds \in \abstr_\eta(\mathcal X)$ and $x \in \mathcal X$, consider the relation $\ds \sim x$ iff $\| \ds - x \|_\infty \leq \epsilon$. Since $\epsilon > \eta/2$ this relation has the property that every $x \in \mathcal X$ is related to some $q \in \abstr_\eta(\mathcal X)$, and conversely every $q \in \abstr_\eta(\mathcal X)$ is related to at least one $x \in \mathcal X$. The relation satisfies 1) of Definition \ref{def:bisimilar}. To show that also 2) and 3) hold, we show the stronger property that if $\ds \sim x$, then for all $x' = \phi_\du(\tau,x,\dc)$ pertaining to an admissible $\dc : [0, \tau] \rightarrow \mathcal D$ it holds that $\ds' \sim x'$ for (the unique) $\ds'$ such that $\ds \underset{\tau, \eta}{\overset{\mu}{\longrightarrow}} \ds'$.

From the construction of $\Sigma_{\tau, \eta}$, $\| \ds' - \phi_\du(\tau, \ds, \mathbf{0}) \|_\infty \leq \eta/2$. Furthermore, under the continuity assumptions, it is known that the flow $\phi_\du(t, x, \mathbf{0})$ of $f_\du(x,0)$ and the flow $\phi_\du(t, x, \dc)$ for any $\dc(t)$ such that $\max_{t \in [0, \tau]} \|f_\du(x, 0) - f_\du(x, \dc(t)) \|_\infty \leq \moverbar \delta_\du$ satisfy $\| \phi_\du(t, x, \mathbf{0}) - \phi_\du(t, x, \dc) \|_\infty \leq (\moverbar \delta_\du / K_\du) (e^{K_\du t}-1)$ \cite{Hirsch:2012tx}. Thus, it follows that for any such $\dc : [0, \tau] \rightarrow \mathcal D$,
\begin{equation*}
\begin{aligned}
    \| & \ds' - \phi_\du( \tau,  x, \dc) \|_\infty \leq \| \phi_\du(\tau, x, \dc) - \phi_\du(\tau, x, \mathbf{0}) \|_\infty \\
    & + \| \phi_\du(\tau, x, \mathbf{0}) - \phi_\du(\tau, \ds, \mathbf{0}) \|_\infty 
    + \| \phi_\du(\tau, \ds, \mathbf{0}) - \ds' \|_\infty \\
    & \leq \frac{\moverbar \delta_\du}{K_\du} \left( e^{K_\du \tau} - 1 \right) + \beta_\du \left( \| \ds-x \|_\infty, \tau \right) + \frac{\eta}{2}  \\
    & \leq  \frac{\bar \delta_\du}{K_\du} \left( e^{K_\du \tau} - 1 \right) + \beta_\du \left( \epsilon, \tau \right) + \frac{\eta}{2} \leq \epsilon.
\end{aligned}
\end{equation*}
Hence $\ds' \sim \phi_\du( \tau,  x, \dc)$, which completes the proof.
\end{proof}

\begin{proof}[Proof of Theorem \ref{thm:abs_to_cont}]

  Let $\{ \pi_n \}_{n \in [N]}$ be individual switching protocols that solve \eqref{eq:thm_graph_inst} by generating trajectories $\xd_n(0)\xd_n(1)\ldots$ and actions $\u_n(0)\u_n(1)\ldots$ for $\Sigma_{\tau, \eta}$. Due to bisimilarity and the $\eta/2$-proximity of initial conditions, the individual trajectories $\xi_n(0)\xi_n(1)\ldots$ of $\Sigma_{\tau, \eta}$ and the individual trajectories $\xc_n(0) \xc_n(1) \ldots$ of $\Sigma_\tau$ satisfy $\| \xd_n(s) - \xc_n(s) \|_\infty \leq \epsilon$ for all $s \in \mathbb{N}$ when the action sequence $\u_n(0) \u_n(1) \ldots$ is implemented for both systems. By assumption,
  \begin{equation*}
    \sum_{n \in [N]} \mathds{1}_{\mathcal G^{+\epsilon} \left( \cS_l \right)} \left( \xd_n(s), \u_n(s) \right) \leq \cN_l.
  \end{equation*}
  Thus for a counting set $\cS_l = \cS_l^{\mathcal X} \times \cS_l^{\mathcal U}$,
  \begin{equation*}
  \begin{aligned}
      \xc_n(s) \in \cS_l^{\mathcal X} & \implies \xd_n(s) \in \cS_l^{\mathcal X} \oplus \mathcal B_\infty(0, \epsilon) \\
      & \implies \xd_n(s) \in \abstr_\eta \left( \cS_l^{\mathcal X} \oplus \mathcal B_\infty(0, \epsilon) \right),
  \end{aligned}
  \end{equation*}
  where the last step follows from knowing that $\xd_n(s)$ only takes values $x$ such that $\abstr_\eta(x) = x$. Thus,
  \begin{equation*}
    \sum_{n \in [N]} \hspace{-1mm} \mathds{1}_{\cS_l} \left( \xc_n(s), \u_n(s) \right) 
      \leq \hspace{-2mm} \sum_{n \in [N]} \hspace{-1mm} \mathds{1}_{\mathcal G^{+\epsilon} (\cS_l)} \left( \xd_n(s), \u_n(s) \right)  \hspace{-1mm} \leq \hspace{-1mm} \cN_l,
  \end{equation*}
  which shows that the constraint in \eqref{eq:thm_cont_inst} is satisfied.
\end{proof}

\begin{proof}[Proof of Theorem \ref{thm:cont_to_abs}]

  Suppose for contradiction that there is a solution to \eqref{eq:thm2_cont_inst} but not to \eqref{eq:thm2_graph_inst}. Let $\{ \pi_n \}_{n \in [N]}$ be individual switching policies that solve \eqref{eq:thm2_cont_inst} by generating trajectories $\xc_n(0)\xc_n(1)\ldots$ and actions $\u_n(0)\u_n(1)\ldots$ for $\Sigma_{\tau}$. Due to bisimilarity and the $\eta/2$-proximity of initial conditions, the individual trajectories $\xd_n(0)\xd_n(1)\ldots$ of $\Sigma_{\tau, \eta}$ and the individual trajectories $\xc_n(0) \xc_n(1) \ldots$ of $\Sigma_\tau$ satisfy $\| \xd_n(s) - \xc_n(s) \| \leq \epsilon$ for all $s \in \mathbb{N}$ when the actions $\u_n(0) \u_n(1)\ldots$ are implemented for both systems. For a set $A$ we have $\abstr_\eta \left( A \ominus \mathcal B_\infty\left(0, \frac{\eta}{2} \right) \right) \subset \{ x \in A : \abstr_\eta(x) = x \} \subset A$. Thus, 
  \begin{equation*}
  \begin{aligned}
    & \xd_n(s) \in \abstr_\eta \left( \cS_l^{\mathcal X} \ominus \mathcal B_\infty\left(0, \epsilon + \frac{\eta}{2} \right) \right) \\
    & \implies \xd_n(s) \in  \cS_l^{\mathcal X} \ominus \mathcal B_\infty\left(0, \epsilon \right) \\
    & \implies \xc_n(s) \in \left( \cS_l^{\mathcal X} \ominus \mathcal B_\infty\left(0, \epsilon \right)\right) \oplus \mathcal B_\infty(0, \epsilon)
    \subset \cS_l^{\mathcal X}.
  \end{aligned}
  \end{equation*}
  It follows that
  \begin{equation*}
  \begin{aligned}
      \sum_{n \in [N]} \hspace{-1mm} \mathds{1}_{\mathcal G^{-(\epsilon + \eta/2)} \left( X_l \right)} (\xd_n(s), \u_n(s))
      \leq \hspace{-1mm} \sum_{n \in [N]} \hspace{-1mm} \mathds{1}_{X_l} \left( \xc_n(s), \u_n(s) \right),
  \end{aligned}
  \end{equation*}
  so $\{ \pi_n \}_{n \in [N]}$ is a solution also for \eqref{eq:thm2_graph_inst}---a contradiction.
\end{proof}

\begin{proof}[Proof of Theorem \ref{thm:feas_to_sol}]

	We first consider the case $s \leq T$, and claim that the selection on Line 2 is possible if 
	\begin{equation}
	\label{eq:aggregate}
		w_\ds(s) = \sum_{n \in [N]} \mathds{1}_{ \{ \ds\} } \left( \xd_n(s) \right), \quad \forall \ds \in \mathcal Q.
	\end{equation}
	and, furthermore, that the selection guarantees that \eqref{eq:aggregate} holds at time $(s+1)$. 

	Due to \eqref{eq:aggregate_initial}, equation \eqref{eq:aggregate} holds at $s = 0$. For induction, assume that \eqref{eq:aggregate} holds at time $s$. Then by \eqref{eq:lp_c5},
	\begin{equation}
	\label{eq:total_agrees}
	 	\sum_{\du \in \mathcal U} r_\ds^\du(s) = w_\ds(s) = \sum_{n \in [N]} \mathds{1}_{\{ \ds \}} \left( \xd_n(s) \right).
	 \end{equation}
	The selection on line 2 amounts to for each $\ds \in \mathcal Q$ assigning $w_\ds(s)$ objects to $|\mathcal U|$ ``bins'' such that the $\du$-bin has $r_\ds^\du$ members; by \eqref{eq:total_agrees} this is doable. Remark that if $\u_n(s) = \du$, then $\xd_n(s+1) = \ds$ if and only if $\xd_n(s) \in \mathcal N_\ds^\du$. Thus,
	\begin{equation*}
	\begin{aligned}
			\sum_{n \in [N]} & \mathds{1}_{\{\ds\} } \left( \xd_n(s+1) \right) 
			= \sum_{\substack{n \in [N] \\ \du \in \mathcal U}} \mathds{1}_{\{ (\ds, \du) \} } \left( \xd_n(s+1), \u_n(s) \right) \\
			& = \sum_{n \in [N]} \sum_{\du \in \mathcal U} \sum_{\ds' \in \mathcal N_\ds^\du} \mathds{1}_{\{ (\ds', \du) \} } \left( \xd_n(s), \u_n(s) \right) \\
			& = \sum_{\du \in \mathcal U} \sum_{\ds' \in \mathcal N_\ds^\du} r_{\ds'}^\du(s) = w_\ds(s+1),
	\end{aligned}
	\end{equation*}
	where the last step follows from \eqref{eq:aggregate_dynamics}. Thus \eqref{eq:aggregate} holds for all $s \in [T+1]$. 

	Secondly, we consider the case $s \geq T$ and claim that the selection on line 4 is possible if for all $\ds \in \mathcal Q$
	\begin{equation}
	\label{eq:cycle_agreement}
	\begin{aligned}
		\sum_{\du \in \mathcal U} \sum_{j \in [J]} \left\langle \cycleC_j, \alpha_j^{\circlearrowleft (s-T)} \right\rangle^{ \{(\ds, \du)\}} = \sum_{n \in [N]} \mathds{1}_{\{\ds \} } \left( \xd_n(s)\right),
	\end{aligned}
	\end{equation}
	and, furthermore, that the selection guarantees that \eqref{eq:cycle_agreement} holds at time $(s+1)$. To show that \eqref{eq:cycle_agreement} enables a selection $\{ \u_n(s) \}_{n \in [N]}$ satisfying line 4, it suffices to remark that the selection problem is equivalent to above with $r_\ds^\du(s)$ replaced by $\sum_{j \in [J]} \left\langle \cycleC_j, \alpha_j^{\circlearrowleft (s-T)} \right\rangle^{ \{(\ds, \du)\}}$.

	Due to \eqref{eq:lp_c4} and \eqref{eq:aggregate}, equation \eqref{eq:cycle_agreement} holds at $s = T$. Suppose for induction that \eqref{eq:cycle_agreement} holds at time $s \geq T$ and that a selection $\{ \u_n(s) \}_{n \in [N]}$ satisfying line 4 has been made. Then,
	\begin{equation*}
	\begin{aligned}
			& \sum_{n \in [N]} \mathds{1}_{\{\ds \} } \left( \xd_n(s+1)\right) = \sum_{\substack{n \in [N] \\ \du \in \mathcal U}} \mathds{1}_{\{(\ds, \du) \} } \left( \xd_n(s+1), \u_n(s)\right) \\
			& = \sum_{n \in [N]} \sum_{\du \in \mathcal U} \sum_{\ds' \in \mathcal N_\ds^\du} \mathds{1}_{\{ (\ds', \du) \} } \left( \xd_n(s), \u_n(s) \right) \\
			& = \hspace{-2mm} \sum_{\substack{j \in [J] \\ \du \in \mathcal U}} \hspace{-1mm} \left\langle C_j, \alpha_j^{\circlearrowleft (s - T)} \right\rangle^{ \mathcal N_\ds^\du \times \{\du\} } \hspace{-2mm} = \hspace{-1mm} \sum_{\substack{j \in [J] \\ \du \in \mathcal U}} \hspace{-1mm} \left\langle C_j, \alpha_j^{\circlearrowleft (s + 1 - T)} \right\rangle^{\{ (\ds, \du) \}}.
	\end{aligned}	
	\end{equation*}
	The last step follows from the observation that a node $\ds' \in \mathcal N_\ds^\du$ must have a cycle index $(i - 1) \mod |\cycleC_j|$ in cycle $\cycleC_j$, where $i$ is the cycle index of $\ds$. Thus the selection on line 4 is feasible for all $s \geq T$.

	Finally, we show that each counting constraint $(\cS_l, \cN_l)$ is satisfied. For $s < T$ we have from line 2 and \eqref{eq:lp_c1}:
	\begin{equation*}
	\begin{aligned}
		&\sum_{n \in [N]} \mathds{1}_{\cS_l} \left( \xd_n(s), \u_n(s) \right) \\
		& = \sum_{n \in [N]} \sum_{\ds \in \mathcal Q} \sum_{\du \in \mathcal U} \mathds{1}_{\cS_l} \left( \ds, \du \right) \; \mathds{1}_{\{ (\ds, \du) \}} \left( \xd_n(s), \u_n(s) \right) \\
		& = \sum_{\ds \in \mathcal Q} \sum_{\du \in \mathcal U} \mathds{1}_{\cS_l} \left( \ds, \du \right) \; r_\ds^\du(s) \leq \cN_l.
	\end{aligned}
	\end{equation*}
	Thus the counting constraints are satisfied in the prefix phase. For the suffix phase, from line 4 and \eqref{eq:lp_c2} it follows that
	\begin{equation*}
	\begin{aligned}
		& \sum_{n \in [N]} \sum_{\ds \in \mathcal Q} \sum_{\du \in \mathcal U} \mathds{1}_{\cS_l} \left( \ds, \du \right) \; \mathds{1}_{\{ (\ds, \du) \}} \left( \xd_n(s), \u_n(s) \right) \\
		& = \sum_{\ds \in \mathcal Q} \sum_{\du \in \mathcal U} \mathds{1}_{\cS_l} \left( \ds, \du \right) \sum_{j \in [J]} \left\langle C_j, \alpha_j^{\circlearrowleft (s-T)} \right\rangle^{\{ (\ds, \du) \}} \\
		& = \sum_{j \in [J]} \left\langle C_j, \alpha_j^{\circlearrowleft (s-T)} \right\rangle^{\cS_l} \leq \cN_l.
	\end{aligned}
	\end{equation*}
	Thus the switching protocol in Algorithm \ref{alg:control_extraction} generates inputs and trajectories that satisfy the constraints of \eqref{eq:graph_instance}.
\end{proof}

\begin{proof}[Proof of Theorem \ref{thm:periodic}]

Let $\{ \pi^*_n \}_{n \in [N]}$ be a solution to the instance \eqref{eq:graph_instance} and consider the generated sequence of controls $\u_n(s)$, $s \in \mathbb{N}$, and aggregate states $w_\ds(s) = \sum_{n \in [N]} \mathds{1}_{\{ \ds \}} \left( \xd_n(s) \right)$. The number of possible values $\mathbf{w}(s)$ can take is finite and given by $\binom{|\mathcal Q|+N-1}{N}$---the number of ways in which $N$ identical objects (subsystems) can be partitioned into $|\mathcal Q|$ sets (nodes). We can therefore find times $T_1, T_2$ with $T_1 < T_2 \leq \binom{|\mathcal Q|+N-1}{N}$ such that $\mathbf{w}(T_1) = \mathbf{w}(T_2)$. We show that the graph flows induced by $\{ \pi^*_n \}_{n \in [N]}$ on the time interval $[T_1, T_2]$ can be achieved with cycle assignments: we define a flow on a graph in a higher dimension, decompose it into flows over cycles, and project the cyclic flow onto the original graph.

Let $G = (\mathcal Q, \longrightarrow)$ be the system graph, and define a new graph $H = (V_H, E_H)$. The node set $V_H = \underbrace{\mathcal Q \times \mathcal Q \times \ldots \times \mathcal Q}_{T_2 - T_1 \text{ times}}$ contains $T_2 - T_1$ copies of each node in $\mathcal Q$, and copies of $\ds \in \mathcal Q$ are labeled $\ds_s$ for $s \in [T_1, T_2]$. The set of edges is defined as
\begin{equation*}
\begin{aligned}
		E_H = \left\{ (\ds_s, \tilde \ds_{s+1}) : s \in \{ T_1, \ldots, T_2-1\}, \; (\ds,\tilde \ds) \in E \right\} \\
		 \bigcup \left\{ (\ds_{T_2}, \ds_{T_1}) : \ds \in \mathcal Q \right\}.
\end{aligned}	
\end{equation*}
An edge flow is induced on $H$ by $\{ \pi^*_n \}_{n \in [N]}$, obtained by letting the flow along $(\ds_s, \tilde \ds _{s+1})$ be the number of subsystems that traverses the edge $(\ds,\tilde \ds) \in E$ at time $s$, and by letting the flow along $(\ds_{T_2}, \ds_{T_1})$ be equal to the number of systems at $\ds$ at time $T_1$. By construction, this flow is balanced at each node (i.e. inflows equal outflows).

By the flow decomposition theorem \cite[Theorem 3.5]{Ahuja:1993uh}, we can then find cycles and assignments in $H$ that achieve this edge flow. As becomes evident from the proof in \cite{Ahuja:1993uh}, these cycles are furthermore simple in $H$ and thus of length at most $|V_H| = |\mathcal Q|(T_2-T_1)$. By projecting these cycles onto a single copy of $\mathcal Q$, we obtain cycles and assignments in $G$ that mimic the counting performance of $\{ \pi^*_n \}_{n \in [N]}$ on the interval $[T_1, T_2]$ when circulated. 

We can therefore define a prefix-suffix strategy by taking as the prefix part $r_\ds^\du(s) = \sum_{n \in [N]} \mathds{1}_{\{(\ds, \du) \}} (\xd_n(s), \u_n(s) ), \quad s \in [T_1]$, followed by a suffix part consisting of the cycles and assignments constructed as above.
\end{proof}

\begin{lemma}
\label{lemma:simple}
	Suppose $\cycleC = \cycleC_1 \cup \cycleC_2$ is a cycle that visits a node $\ds_{0}$ twice, so that it can be decomposed into two cycles $\cycleC_1 = (\ds_{0}, \ds_{1}) \ldots (\ds_{{i}}, \ds_{0})$ and $\cycleC_2 = (\ds_{0}, \ds_{{i+1}}) \ldots (\ds_{{|\cycleC|-1}}, \ds_{0})$. 

	Let $\assgn$ be an assignment to $\cycleC$ that satisfies $\maxcnt^{\cS}(\cycleC, \assgn) \leq \cN$, let $P$ be the graph period, and let 
	\begin{equation*}
	\begin{aligned}
		 	& N_p = \sum_{ i \in [|C|/P] } \alpha(p + iP), \quad p \in [P], \\
		 	& N_p^1 =\frac{|\cycleC_1|}{|\cycleC|} N_p, \quad N_p^2 =\frac{|\cycleC_2|}{|\cycleC|} N_p, \quad p \in [P].
	\end{aligned}
	\end{equation*}
	Then the joint $\cS$-counts for the assignment $\bar \alpha_{\{ N_p^1 \}_{p \in [P]}}$ to $\cycleC_1$ and the assignment $\bar \alpha_{\{ N_p^2 \}_{p \in [P]}}$ to $\cycleC_2$, satisfy
	\begin{equation*}
		\maxcnt^{\cS} \left( \{ \cycleC_1, \cycleC_2 \}, \{ \bar \alpha_{\{ N_p^1 \}_{p \in [P]}}, \bar \alpha_{\{ N_p^2 \}_{p \in [P]}} \} \right) \leq \cN.
	\end{equation*}
\end{lemma}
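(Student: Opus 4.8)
The plan is to reduce the joint count of the two short cycles to the ordinary (single-cycle) count of a $P$-average on the original long cycle, and then bound that single count by an averaging argument. Throughout I write $L=|\cycleC|$, $L_1=|\cycleC_1|$, $L_2=|\cycleC_2|$, so that the decomposition forces $L=L_1+L_2$. Since $\cycleC_1$ and $\cycleC_2$ are themselves cycles of the graph, the graph period $P$ divides each of $L_1$, $L_2$ (and $L$); hence all three $P$-average assignments below are well defined, and $N_p^1+N_p^2=N_p$ for every $p$.

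First I would record the key algebraic coincidence. Writing $g(p)=PN_p/L$ for $p\in[P]$, substituting $N_p^1=(L_1/L)N_p$ and $N_p^2=(L_2/L)N_p$ into the definition of the $P$-average shows
\[
\bar\assgn_{\{N_p^1\}}(i)=\bar\assgn_{\{N_p^2\}}(i)=\bar\assgn_{\{N_p\}}(i)=g(i\bmod P);
\]
that is, all three averaged assignments are the \emph{same} $P$-periodic sequence $g$, merely read off on cycles of length $L_1$, $L_2$, and $L$. Because $g$ is $P$-periodic and $P$ divides every cycle length, circulating any of them by $s$ steps just replaces $g(i\bmod P)$ by $g((i-s)\bmod P)$, so each count depends on $s$ only through $s\bmod P$.

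Next I would establish the term-by-term identity
\[
\left\langle \cycleC_1, \bar\assgn_{\{N_p^1\}}^{\circlearrowleft s}\right\rangle^\cS + \left\langle \cycleC_2, \bar\assgn_{\{N_p^2\}}^{\circlearrowleft s}\right\rangle^\cS = \left\langle \cycleC, \bar\assgn_{\{N_p\}}^{\circlearrowleft s}\right\rangle^\cS, \qquad \forall s .
\]
Here I use that $\cycleC_1,\cycleC_2$ are literally the two sub-walks of $\cycleC$ based at $\ds_0$, so position $i$ of $\cycleC_1$ and position $i'$ of $\cycleC_2$ carry the same node-mode pair (hence the same $\cS$-membership) as positions $i$ and $L_1+i'$ of $\cycleC$. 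The single alignment needed is $L_1\equiv 0\pmod P$, which turns $g((L_1+i'-s)\bmod P)$ into $g((i'-s)\bmod P)$ and makes the two sides agree summand by summand. Taking $\max_s$ and invoking Definition \ref{def:joint_count} yields
\[
\maxcnt^\cS\left(\{\cycleC_1,\cycleC_2\}, \{\bar\assgn_{\{N_p^1\}}, \bar\assgn_{\{N_p^2\}}\}\right) = \maxcnt^\cS\left(\cycleC, \bar\assgn_{\{N_p\}}\right),
\]
so it remains to bound the right-hand side by $\cN$. For that I would generalize \eqref{eq:average_assignment} from $P=1$ to arbitrary $P$ via
\[
\left\langle \cycleC, \bar\assgn_{\{N_p\}}^{\circlearrowleft s}\right\rangle^\cS = \frac{P}{L}\sum_{k=0}^{L/P-1} \left\langle \cycleC, \assgn^{\circlearrowleft (s+kP)}\right\rangle^\cS ,
\]
which holds because, for each fixed contributing position $i$, the map $k\mapsto (i-s-kP)\bmod L$ runs exactly once over the residue class $\{j:j\equiv (i-s)\bmod P\}$, so $\sum_k \assgn((i-s-kP)\bmod L)=N_{(i-s)\bmod P}$, and dividing by $L/P$ reproduces $g((i-s)\bmod P)$. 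Since every circulation count of $\assgn$ is at most $\maxcnt^\cS(\cycleC,\assgn)\le\cN$, this uniform average is also $\le\cN$ for every $s$, giving $\maxcnt^\cS(\cycleC,\bar\assgn_{\{N_p\}})\le\cN$ and hence the claim.

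I expect the main obstacle to be purely bookkeeping: pinning down the exact index correspondence between the two short cycles and the sub-walks of $\cycleC$ so that the term-by-term identity is literally an equality of summands, and making explicit that it is precisely $P\mid L_1$ that lines up the residues. The averaging identity of the last step is then routine once the residue-class bijection $k\mapsto (i-s-kP)\bmod L$ is observed.
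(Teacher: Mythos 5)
Your proof is correct and follows essentially the same route as the paper's: the same observation that all three $P$-averages collapse to the single $P$-periodic function $P N_{(i \bmod P)}/|\cycleC|$, the same index correspondence $i_2 \mapsto |\cycleC_1| + i_2$ (valid because $P$ divides $|\cycleC_1|$) to merge the two counts into a single count on $\cycleC$, and the same reduction to an average over circulations of $\assgn$. Your concluding step---writing the averaged count as the exact mean $\frac{P}{|\cycleC|}\sum_{k}\left\langle \cycleC, \assgn^{\circlearrowleft (s+kP)}\right\rangle^{\cS}$ via the residue-class bijection $k \mapsto (i-s-kP) \bmod |\cycleC|$---is in fact a cleaner and more careful rendering of the paper's final chain of inequalities, which bounds the same quantity by a $\max$ over $k$ of terms that are not themselves circulation counts.
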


\begin{proof}[Proof of Lemma \ref{lemma:simple}]
	We have for all $i_1$ that
	\begin{equation*}
	\begin{aligned}
		\bar \alpha_{\{ N_p^1 \}_{p \in [P]} }(i_1) = \frac{P N^1_{(i_1 \mod P)}}{|C_1|} = \frac{P N_{(i_1 \mod P)}}{|C|} \\
		= \frac{P}{|C|} \sum_{k \in [|C|/P]} \alpha((i_1 \mod P) + kP),
	\end{aligned}
	\end{equation*}
	and similarly for $C_2$. $P$ must divide both $|C_1|$ and $|C_2|$, so $((i_1 \mod |C_1|) \mod P) = (i_1 \mod P)$ for all $i_1$ and similarly for $|C_2|$. Below, for a cycle $C = (q_0, \mu_0, q_1) (q_1, \mu_1, q_2) \ldots$ we use the short hand notation $\mathds{1}^C_X(i) = 1$ if $(q_i, \mu_i) \in X$ and $\mathds{1}^C_X(i) = 0$ otherwise,
	to indicate that the $i$'th state-action pair is in the counting set $X$. For $j=1,2$ we get
	\begin{equation*}
	\begin{aligned}
		& \left\langle C_j, \bar \alpha_{\{ N_p^j \}_{p \in [P]}}^{\circlearrowleft s} \right\rangle^X = \sum_{i_j \in [|C_j|]} \mathds{1}^{C_j}_X \left( i_j \right) (\bar \alpha_{\{ N_p^j \}_{p \in [P]} })^{\circlearrowleft s}(i_j) \\
		& = \frac{P}{|C|} \sum_{i_j \in [|C_j|]} \mathds{1}^{C_j}_X \left( i_j \right) \hspace{-2mm} \sum_{k \in [|C|/P]} \hspace{-3mm} \alpha(((i_j-s) \mod P) + kP).
	\end{aligned}
	\end{equation*}
	We have $i_2 \mod P = (|C_1| + i_2) \mod P$, and $i_2 \mapsto |C_1| + i_2$ is a mapping from $i_2 \in [|C_2|]$ to the corresponding index in $C$. We can therefore convert a sum over both $i_1$ and $i_2$ into a sum over the index $i$ of $C$.
	\begin{equation*}
	\begin{aligned}
		& \left\langle C_1, \bar \alpha_{\{ N_p^1 \}_{p \in [P]}}^{\circlearrowleft s} \right\rangle^X + \left\langle C_2, \bar \alpha_{\{ N_p^2 \}_{p \in [P]}}^{\circlearrowleft s} \right\rangle^X \\
		& = \frac{P}{|C|} \sum_{i \in [|C|]} \mathds{1}^{C}_X \left( i \right) \sum_{k \in [|C|/P]} \alpha \left( ((i-s) \mod P) + kP \right) \\
		& \leq \max_{k \in [|C|/P]} \sum_{i \in [|C|]} \mathds{1}^C_X \left( i \right) \alpha \left( ((i-s) \mod P) + kP \right) \\
		& \leq \max_{s \in [|C|]} \sum_{i \in [|C|]} \mathds{1}^C_X \left( i \right) \alpha^{\circlearrowleft s} (i) = \maxcnt^\cS(C, \alpha) \leq R.
	\end{aligned}
	\vspace{-9mm}
	\end{equation*}
\end{proof}

\begin{proof}[Proof of Theorem \ref{thm:sufficient}]
	By Theorem \ref{thm:periodic}, we know that a correct solution must eventually lead to periodic behavior, and Lemma \ref{lemma:simple} shows that the suffix of such a solution can be mapped into a suffix on simple cycles consisting of $P$-averaged assignments, where $P$ is the period of the graph\footnote{In the case of a non-connected graph the analysis can be done separately for each strongly connected component.}. What remains to show is that this latter suffix is reachable from the initial state while respecting the relaxed counting constraints.

	Let $\mathbf{w}(s)$ for $s \in \mathbb{N}$ be the aggregate states for an integer solution to \eqref{eq:graph_instance}. From Lemma \ref{lemma:simple} we can obtain a set of simple cycles $I$ such that some $P$-averaged assignments to these cycles satisfy the counting bounds. In addition, these assignments have the same parity structure as $\mathbf{w}(s)$, and hence as the initial condition; therefore they are reachable from the initial condition by virtue of Corollary \ref{cor:periodicity}. We now propose a switching protocol to control the aggregate state to these assignments; the protocol consists in gradually steering a fraction of the systems from the original solution $\mathbf{w}(s)$ to the $P$-averaged assignments pertaining to the cycles in $I$.

	We use the following notation: let $\alpha(s)$ be what remains of the correct trajectory $\mathbf{w}(s)$, let $\beta(s)$ be the fraction currently being steered towards assignments to the simple cycles, and let $\gamma(s)$ be the fraction that has already reached these assignments. We then have $\alpha(0) = \mathbf{w}(0)$, and $\beta(0) = \gamma(0) = 0$. The overall system state is $\tilde{\mathbf{w}}(s) = \alpha(s) + \beta(s) + \gamma(s)$.

	The protocol at time $s+1$ is as follows.
	\begin{itemize}
		\item If $\beta(s) = 0$, set $\alpha(s+1) = \alpha(s) (1 - \epsilon/\| \alpha(s) \|_1) $, and $\beta(s+1) = \alpha(s) \epsilon/\| \alpha(s) \|_1$,
		\item If $\beta(s)$ has reached the average assignments to cycles in $I$, set $\beta(s+1) = 0$, and $\gamma(s+1) = \beta(s) + \gamma(s)$,
		\item Otherwise, steer $\beta(s)$ toward the average assignments to cycles in $I$ .
	\end{itemize}

	We remark that the transitions are all properly connected and merely illustrate the transport of subsystem ``weight'' from $\alpha(s)$ via $\beta(s)$ to the average assignments represented by $\gamma(s)$. 
	Transporting a mass $\epsilon$ takes at most time $(\diam(G)^2+1)$ by \cite{SHEN199621} which gives a bound $T \leq (\diam(G)^2+1)$ for the $T$ in the proof of Theorem \ref{thm:aperiodic}. We can thus infer that the total transportation time is upper bounded by $(\diam(G)^2+1)N/\epsilon$, since an absolute weight $\epsilon$ is transported in each step.

	We finally consider the counting bounds. By assumption, they are satisfied by $\mathbf{w}(s)$, and by Lemma \ref{lemma:simple} they are also satisfied once the average assignments to cycles in $I$ are reached. In the meantime, $\tilde w_\ds^\du(s) = \alpha_\ds^\du(s) + \beta_\ds^\du(s) + \gamma_\ds^\du(s)$. For every $s$, there is an integer $z \leq 1/\epsilon$ such that $\alpha(s) = \left( 1- \frac{\epsilon z}{N} \right) \mathbf{w}(s)$, $\gamma(s) = \frac{\epsilon (z-1)}{N} \gamma_0(s)$,	where $\gamma_0$ is the average assignment to the cycles. Furthermore $\| \beta_\ds^\du(s) \|_1 \leq \epsilon$ which shows that the counts are bounded as follows:
	\begin{equation*}
	\begin{aligned}
			&\sum_{\ds \in \mathcal Q} \sum_{\du \in \mathcal U} \mathds{1}_{{\cS}_l} (\ds, \du) \; \tilde w_\ds^\du(s) \\
			& =  \sum_{\ds \in \mathcal Q} \sum_{\du \in \mathcal U} \mathds{1}_{{\cS}_l} (\ds, \du) \left( \begin{aligned} & (1- \epsilon z/N) w_\ds^\du(s) \\ & + \epsilon (z-1) \gamma_0(s)/N  + \beta_\ds^\du(s) \end{aligned} \right) \\
			&\leq (1 - \epsilon z/N) R_l + \epsilon (z-1) R_l/N + \epsilon \leq \cN_l + \epsilon.
	\end{aligned}
	\end{equation*}
	We have therefore constructed a prefix-suffix solution, where the suffix part consists of simple cycles, such that the relaxed counting bounds $({\cS}_l, \cN_l+\epsilon)$ are satisfied.
\end{proof}

\begin{IEEEbiography}[{\includegraphics[width=1in,height=1.25in,clip,keepaspectratio]{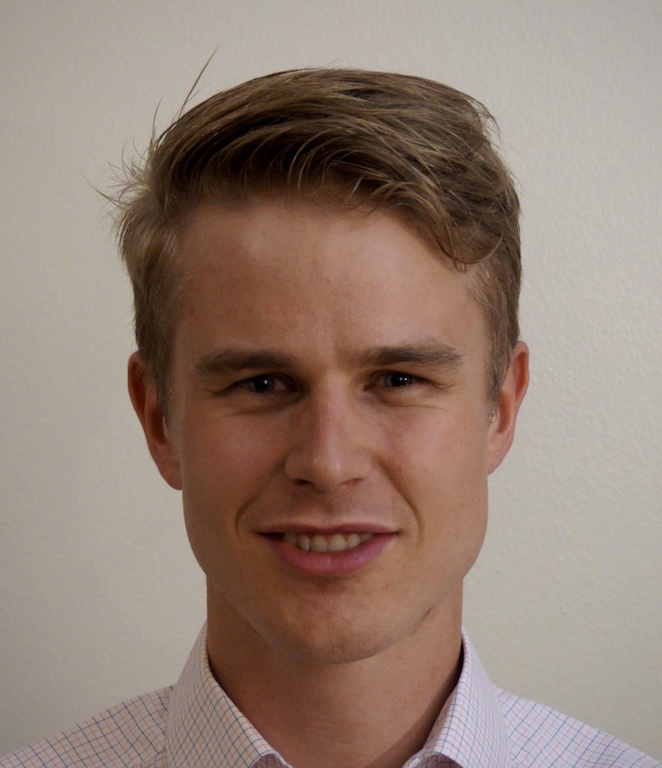}}]{Petter Nilsson} received his B.S. in Engineering Physics in 2011, and his M.S. in Optimization and Systems Theory in 2013, both from KTH Royal Institute of Technology in Stockholm, Sweden, and his Ph.D. in Electrical Engineering in 2017 from the University of Michigan. In addition to his technical degrees, he holds a B.S. in Business and Economics from the Stockholm School of Economics. He is currently a postdoctoral scholar at the California Institute of Technology. 

\end{IEEEbiography}

\begin{IEEEbiography}[{\includegraphics[width=1in,height=1.25in,clip,keepaspectratio]{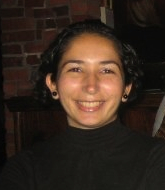}}]{Necmiye Ozay} received the B.S. degree from Bogazici University, Istanbul in 2004, the M.S. degree from the Pennsylvania State University, University Park in 2006 and the Ph.D. degree from Northeastern University, Boston in 2010, all in electrical engineering. She was a postdoctoral scholar at California Institute of Technology, Pasadena between 2010 and 2013. She is currently an assistant professor of Electrical Engineering and Computer Science, at University of Michigan, Ann Arbor. 
\end{IEEEbiography}

\end{document}